\documentclass[letterpaper,11pt]{article}
\usepackage[utf8]{inputenc}

\usepackage{amsmath, amsthm, amssymb, thm-restate}
\usepackage{algorithmicx}
\usepackage[table,xcdraw]{xcolor}
\usepackage[vlined,linesnumbered]{algorithm2e}
\usepackage{setspace}
\usepackage{mathtools}
\usepackage[numbers]{natbib}
\usepackage{comment} 
\usepackage{tcolorbox} 
\usepackage{xfrac}
\usepackage{hyperref}
\usepackage{multirow}
\usepackage{caption}
\usepackage{bm}
\usepackage{newfloat}
\usepackage{enumitem}
\usepackage{dblfloatfix} 
\usepackage{wrapfig}
\usepackage{dsfont}
\usepackage{mdframed}
\usepackage{multirow}

\usepackage{fancyhdr}
\usepackage[noabbrev]{cleveref}

\parskip=5pt

\usepackage[margin=1in]{geometry}

\allowdisplaybreaks

\definecolor{mygreen}{RGB}{10,110,230}
\definecolor{myred}{RGB}{10,110,230}

\hypersetup{
     colorlinks=true,
     citecolor= mygreen,
     linkcolor= myred
}

\renewcommand{\epsilon}{\varepsilon}

\DeclareMathOperator{\E}{\ensuremath{\normalfont \textbf{E}}}

\newcommand{\hiddencomment}[1]{}

\newcommand{\mc}[1]{\ensuremath{\mathcal{#1}}}

\newcounter{protocolcounter}
 
\crefname{protocolcounter}{Algorithm}{Algorithms}
\newcommand{\unmapr}[0]{\ensuremath{\delta}}
\newcommand{\opt}[0]{\text{OPT}}
\newcommand{\MM}[0]{{\textsf{MWM}}}
\newcommand{\weight}[0]{{\sf{W}}}
\newcommand{\adj}{\ensuremath{N}\xspace}


\newcommand{\SX}{\ensuremath{X}\xspace}
\newcommand{\funcZ}[0]{f}
\newcommand{\VB}[0]{\ensuremath{\mathcal{VB}}}
\newcommand{\apx}[0]{\ensuremath{\mathcal{\alpha}}}

\SetKwRepeat{Do}{do}{while}

\renewcommand{\b}[1]{\ensuremath{\bm{\mathrm{#1}}}}
\DeclareMathOperator{\poly}{poly}
\DeclareMathOperator{\polylog}{polylog}

\renewcommand{\O}[1]{\ensuremath{O\left(#1\right)}}

\DeclareMathOperator{\var}{Var}

\renewcommand{\epsilon}[0]{\ensuremath{\varepsilon}}

\let\originalleft\left
\let\originalright\right
\renewcommand{\left}{\mathopen{}\mathclose\bgroup\originalleft}
\renewcommand{\right}{\aftergroup\egroup\originalright}


\newtheorem{theorem}{Theorem}[section]

\newtheorem{lemma}{Lemma}[section]

\newtheorem{proposition}[lemma]{Proposition}

\newtheorem{definition}[lemma]{Definition}
\newtheorem{claim}[lemma]{Claim}

\makeatletter
\def\thm@space@setup{%
  \thm@preskip= 0.2cm
  \thm@postskip=\thm@preskip 
}
\makeatother

\definecolor{mygreen}{RGB}{20,155,20}
\definecolor{myred}{RGB}{195,20,20}
\definecolor{linkcolor}{RGB}{0,0,230}
\definecolor{mylightgray}{RGB}{230,230,230}
\definecolor{verylightgray}{RGB}{240,240,240}
\definecolor{commentcolor}{RGB}{120,120,120}

\setlength{\algomargin}{13pt}

\SetCommentSty{mycommfont}

\hypersetup{
     colorlinks=true,
     citecolor= mygreen,
     linkcolor= myred,
     urlcolor= mygreen
}

\renewcommand{\mc}[1]{\ensuremath{\mathcal{#1}}}

\newcounter{myalgctr}

\newenvironment{tbox}{
\par\addvspace{0.2cm}
\begin{tcolorbox}[width=\textwidth,
                  boxsep=2pt,
                  left=1pt,
                  right=1pt,
                  top=4pt,
                  boxrule=1pt,
                  arc=0pt,
                  colback=white,
                  colframe=black
                  ]
}{
\end{tcolorbox}
}

\newenvironment{tboxh}{
\par\addvspace{0.2cm}
\begin{tcolorbox}[width=\textwidth,
                  boxsep=2pt,
                  left=1pt,
                  right=1pt,
                  top=4pt,
                  boxrule=1pt,
                  arc=0pt,
                  colback=white,
                  colframe=black,
                  float=t
                  ]
}{
\end{tcolorbox}
}

\newcommand{\tboxhrule}[0]{\vspace{0.1cm} \hrule \vspace{0.2cm}}

\newenvironment{titledtbox}[1]{\begin{tbox}#1 \tboxhrule}{\end{tbox}}
\newenvironment{titledtboxh}[1]{\begin{tboxh}#1 \tboxhrule}{\end{tboxh}}

\newenvironment{tboxalg2e}[1]{
\refstepcounter{myalgctr}
	\begin{titledtbox}{\textbf{Algorithm \themyalgctr.} #1}
	\vspace{-0.2cm}
}
{
	\vspace{-0.3cm}
	\end{titledtbox}
}

\newcommand{\restateclaim}[2]{\noindent \textbf{Claim~#1} (restated) \textbf{.} {\em #2}}

\newcommand{\restatelem}[2]{\noindent \textbf{Lemma~#1} (restated) \textbf{.} {\em #2}}

\makeatletter
\renewcommand{\paragraph}{%
  \@startsection{paragraph}{4}%
  {\z@}{10pt}{-1em}%
  {\normalfont\normalsize\bfseries}%
}
\makeatother

\makeatletter
\patchcmd{\@algocf@start}
  {-1.5em}
  {0pt}
  {}{}
\makeatother

\title{Query Efficient Weighted Stochastic Matching}

\author{
Mahsa Derakhshan \\ {\em Northeastern University} \and Mohammad Saneian  \\ {\em Northeastern University}
}

\date{}

\begin{document}
\maketitle

\thispagestyle{empty}
In this paper, we study the {\em weighted stochastic matching} problem. Let $G=(V, E)$ be a given edge-weighted graph and let its realization $\mathcal{G}$ be a random subgraph of $G$ that includes each edge $e\in E$ independently with a known probability $p_e$. The goal in this problem is to pick a sparse subgraph $Q$ of $G$ without prior knowledge of $G$'s realization, such that the maximum weight matching among the realized edges of $Q$ (i.e. the subgraph $Q\cap \mc{G}$) in expectation approximates the maximum weight matching of the entire realization $\mc{G}$.

It is established by previous work that attaining any constant approximation ratio for this problem requires selecting a subgraph of max-degree $\Omega(1/p)$ where $p=\min_{e\in E} p_e$. On the positive side, there  exists a $(1-\epsilon)$-approximation algorithm by Behnezhad and Derakhshan [FOCS'20],  albeit at the cost of max-degree having exponential dependence on $1/p$.
Within the $\poly(1/p)$  regime, however, the best-known algorithm  achieves a $0.536$ approximation ratio due to Dughmi, Kalayci, and Patel [ICALP'23] improving over the $0.501$ approximation algorithm by Behnezhad, Farhadi, Hajiaghayi, and Reyhani [SODA'19].

In this work, we present a 0.68 approximation algorithm with $O(1/p)$ queries per vertex, which is asymptotically tight. This is even an improvement over the best-known approximation ratio of $2/3$ for unweighted graphs within the $\poly(1/p)$ regime due to Assadi and Bernstein [SOSA'19]. The $2/3$ approximation  ratio is proven tight in the presence of a few correlated edges in $\mathcal{G}$, indicating that surpassing the $2/3$ barrier  should rely on the independent realization of edges. Our analysis involves reducing the problem to designing a randomized matching algorithm on a given stochastic graph with some \emph{variance-bounding} properties. 
Both the reduction and the existence of such algorithms heavily rely on independent edge realizations, allowing us to break the 2/3 barrier.


\clearpage
\setcounter{page}{1}
\section{Introduction}
In the {\em stochastic weighted  matching} problem, we are given an $n$-vertex weighted graph $G=(V, E)$ along with a parameter $p_e \in (0, 1]$ for any edge $e\in  E$. A random subgraph $\mc{G}$ of $G$ is then generated by independently including (or realizing) each edge $e \in E$ with  probability  $p_e$. Here, we refer to $G$ as the base graph and $\mc{G}$ as the realized subgraph. 
The objective of this problem is to select a subgraph $Q$ of the base graph without the knowledge of its realization  such that: (1) $Q$ has a small max-degree, namely a constant with respect to $n$, and (2) The realized edges of $Q$ (i.e., the graph $Q \cap \mc{G}$) contain a large weight approximate matching.
We define the approximation ratio as the expected weight of the maximum matching among the realized edges of $Q$ over the expected weight of the maximum weighted matching of $\mc{G}$. 

One immediate application of the stochastic weighted matching problem is its use as a {\em matching sparsifier}, which approximates the maximum weighted matching even when random edge failures occur \cite{sosa19}. Additionally, it finds various applications in matching markets, including kidney exchange~\cite{blumetal}, online labor markets \cite{soda19, BR18}, and dating platforms. In these applications, 
we are  provided with the base graph $G$, but we are tasked with finding a matching in the realized subgraph $\mc{G}$. To achieve this, an algorithm can {\em query} each edge of $G$ to determine whether it is realized. However, these queries often involve time-consuming or costly operations, such as conducting candidate interviews or medical exams. Hence, it is crucial to minimize the number of queries. This can be accomplished by non-adaptively querying a subgraph $Q$ with a small degree while still expecting to find a matching with a large approximation ratio among its realized edges.

It is established by previous work~\cite{AKL16} that attaining any constant approximation ratio for this problem requires selecting a subgraph of max-degree $\Omega(1/p)$ where $p=\min_{e\in E} p_e$. On the positive side, there  exists a $(1-\epsilon)$-approximation algorithm by Behnezhad and Derakhshan~\cite{behnezhad2020stochastic},  albeit at the cost of max-degree having exponential dependence on $1/p$.
Within the $\poly(1/p)$  regime, however, the best-known algorithm  achieves a $0.536$ approximation ratio due to Dughmi, Kalayci, and Patel~\cite{DughmiKP23}.  This raises the question: How far can we push the approximation ratio while querying a graph with max-degree $\poly(1/p)$? A natural barrier that emerges here is  $2/3$ which cannot be exceeded in the presence of a few correlated edges in $\mc{G}$~\cite{AKL16, stoc20}.  Existing algorithms that attain better than a 2/3 approximation ratio, even for unweighted graphs, either require a max-degree exponential in $1/p$~\cite{stoc20} or heavily rely on the bipartiteness of the graph~\cite{behnezhad2022stochastic}. 

In this work, we prove the existence of a subgraph $Q$ with max-degree $O(1/p)$ whose realization contains a $0.681$ approximate matching. This not only improves the previously known approximation ratio of $0.536$ significantly but also break the $2/3$ barrier in the $\poly(1/p)$ regime.

\newcommand{\thmmain}[0]{
There exists an algorithm that picks a $\O{1/p}$ degree subgraph $Q$ of $G$ such that the expected weight of the max-weight realized matching in $Q$ is at least $0.681$ times the expected weight of the max-weight realized matching in $G$.
}
\begin{mdframed} [backgroundcolor=gray!20]
\begin{theorem}
\label{thm:Themain}
\thmmain
\end{theorem}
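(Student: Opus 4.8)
The plan is to instantiate the standard three-step template for stochastic matching — pick a benchmark fractional matching on the base graph, non-adaptively sparsify it into $Q$ using $\O{1/p}$ queries per vertex, and round the realized part $Q\cap\mc G$ — but to get past the $2/3$ barrier by making the rounding step an \emph{augmentation-based} randomized algorithm whose success probabilities are controlled via the independence of edge realizations. For the benchmark, fix a maximum-weight matching $\MM(\mc G)$ of the realization with ties broken by a fixed rule, and set $f_e:=\Pr[e\in\MM(\mc G)]$. Then $\E[w(\MM(\mc G))]=\sum_e f_e w_e=\opt$, the vector $f$ is a fractional matching ($\sum_{e\ni v}f_e\le 1$ for every $v$), and, since an edge must be realized to be matched, $f_e\le p_e$ for all $e$. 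For non-bipartite $G$ one must additionally note that $f$ satisfies the odd-set (blossom) inequalities, or slightly regularize it so that the vectors we later decompose lie in a polytope with a distribution-over-matchings interpretation; I expect this to cost only lower-order terms.

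The core of the argument is a reduction — which I would isolate as a standalone lemma — asserting that to achieve approximation ratio $c$ it suffices to design a randomized matching algorithm $\mc A$ that, given \emph{any} stochastic graph $H$ with independently realized edges together with a fractional matching $y\le p$ on $H$, outputs a matching $M\subseteq H\cap\mc H$ with (i) $\E[w(M)]\ge c\sum_e y_e w_e$ and (ii) a \emph{variance-bounding} property: the indicators $\{\,v\text{ matched by }M\,\}$ are weakly enough correlated — their pairwise correlations, weighted by the relevant fractional mass, summing to a negligible fraction — that the one-round guarantee concentrates. Given such an $\mc A$, we build $Q$ by independent repeated sampling: $\Th{\epsilon^{-2}/p}$ rounds of a sparse edge set whose marginals are proportional to $f$ (equivalently, the usual crucial/non-crucial split of the edges around each vertex), which keeps the expected, and hence — after truncating a negligible-weight part — the worst-case degree at $\O{1/p}$. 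Running $\mc A$ once per round on the realized edges of that round and keeping the best output, property (ii) upgrades the per-round bound to $(c-\epsilon)\,\opt$ in expectation. Independence of realizations is used on both sides of this reduction — to argue the per-round outputs behave near-independently, and to transfer $\mc A$'s guarantee from $H$ back to $G$ — which is precisely where correlated edges would block the argument and where we can hope to overshoot $2/3$.

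It then remains to construct $\mc A$ with $c\ge 0.681$. One starts from the (possibly colliding) edge set sampled according to $y/p$, which already has expected weight $\sum_e y_e w_e$; after resolving collisions into a matching, $\mc A$ performs bounded-length \emph{weighted augmentations} — augmenting paths of length $3$ and $5$, chosen by a local rule. The point is that, because realizations are independent, each candidate augmenting structure is present with probability equal to a product of a few $p_e$'s, which the sampling of $Q$ arranges to be $\Om{1}$; a careful accounting of the weight recovered by these augmentations, balanced against the loss from collisions via a small factor-revealing LP, yields the constant $0.681$, while the locality of the augmentation rule keeps the matched-status correlations small, giving (ii).

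The main obstacle, I expect, is making $\mc A$ meet (i) and (ii) \emph{simultaneously}: aggressive augmentation improves the ratio but couples the fates of distant vertices and degrades the variance bound, so one needs augmentation rules that are both powerful and local, together with an analysis that tracks weight and correlation at the same time. Secondary but still delicate points are getting the numerical constant to land strictly above $2/3$, and bookkeeping the $\epsilon$ losses from sparsification and from the reduction so that the final ratio still clears $2/3$ (landing at $0.68$), as well as the non-bipartite odd-set handling mentioned above.
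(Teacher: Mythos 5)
Your high-level framework — sparsify by repeatedly sampling matchings drawn as $\opt$, reduce the problem to designing a matching algorithm with a variance-style concentration property, and invoke independence of edge realizations to push past $2/3$ — does align with the paper. However, two pieces of your plan are genuinely different from what the paper does, and one of them contains what I believe is a real gap.

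The real gap is in your reduction. You ask for an inner algorithm $\mc{A}$ achieving ratio $c \geq 0.681$ on its own, and propose to get the overall ratio $(c-\epsilon)$ by running $\mc{A}$ per sampling round and ``keeping the best output.'' Taking the best of i.i.d.\ runs concentrates a random variable around its mean but does not increase its expectation, so this gives no amplification: if each run achieves $c'$ in expectation you end up with roughly $c'$, not something larger. The paper's reduction is structurally different and is where the boost actually happens. It splits $E$ into crucial edges ($x_e \geq \tau$) and non-crucial edges ($x_e < \tau$); the variance-bounding algorithm $\VB$ is run only on the realized crucial edges, producing a matching $M_c$ that is an $\alpha$-approximation to the crucial contribution and a set $A$ of unmatched vertices with $\Pr[\{u,v\} \subset A] \geq \delta$ for non-adjacent $u,v$. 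A fractional matching $\bm f$ is then built on non-crucial edges inside $A$ that recovers nearly the full non-crucial contribution (using the variance bound on $Z_v$ via Chebyshev plus Efron--Stein to control the fraction of vertices whose fractional degree overshoots $1$), and it has tiny integrality gap because $x_e < \tau$. Letting $\sigma$ be the crucial share of $\opt$, one gets roughly $\alpha\sigma + (1-\sigma)$ from combining, and $\sigma$ from the crucial edges alone; the max is at least $\frac{1}{2-\alpha}$. Plugging in $\alpha = 8/15$ gives $\frac{15}{22} \approx 0.6818$. So the inner algorithm only needs $\alpha = 8/15$, not $0.681$.

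The second divergence is the inner algorithm itself. You propose sampling, collision resolution, and then length-$3$/$5$ weighted augmenting paths with a factor-revealing LP aiming directly at $0.681$; the paper instead uses the batched random-order contention resolution scheme of Fu, Tang, Wu, Wu, and Zhang, run with activation probabilities $g(y_e) = \frac{3y_e}{3+2y_e}$, which is $8/15$-selectable off the shelf (Proposition~\ref{prop:fu}). The variance property you'd need to verify is also more specific than ``pairwise correlations of matched indicators are small'': it is $\var(Z_v) \leq \frac{10\tau}{\delta^2}$ for $Z_v = \sum_{(u,v)\in\overline{E}} \frac{x_e}{\Pr[\{u,v\}\subset A]} \mathds{1}_{u\in A}$, proved by writing $Z_v$ as a function of the per-vertex ``influential'' random variables (which edge, if any, became active at each vertex) and applying Efron--Stein, since resampling one vertex's choice affects membership of at most three vertices in $A$. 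An augmentation-based algorithm that couples distant vertices through length-$5$ paths would not obviously admit such a decomposition into independent influences, which is exactly the tension you flagged as your ``main obstacle.'' So even setting aside the reduction issue, the route to the inner guarantee you sketch is substantially harder and unsubstantiated compared to the paper's, whereas the RCRS route is clean precisely because it only needs $8/15$.
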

\end{mdframed}

The significance of our result is threefold. First, we  improve the best-known approximation ratio for both weighted and unweighted graphs  in the $\poly(1/p)$ regime (respectively from $0.536$~\cite{DughmiKP23} and $2/3$~\cite{sosa19}). Second, it selects a subgraph with max-degree $\O{1/p}$  --- a bound that is asymptotically tight for achieving any constant approximation ratio. Third, we break the well-established $2/3$ barrier for the approximation ratio within the $\poly(1/p)$ regime. Additionally, we demonstrate that the problem can be reduced to designing approximate matching algorithms with specific properties, which we term {\em variance-bounding} matching algorithms. This reduction implies that further exploration of the stochastic matching problem may be focused on the development of such algorithms.

\paragraph{Further Related Work.} After the pioneering work of Blum et al.~\cite{blumetal}, the stochastic matching problem has received considerable attention~\cite{blumetal,AKL16,AKL17,YM18,BR18,soda19,sosa19, d2019stochastic, stoc20, behnezhad2020stochastic}. To provide a comprehensive understanding of the research landscape, Table~\ref{table1} presents a chronological survey of results for the problem considered in this paper. 

\begin{table}[htbp]
\centering
\renewcommand{\arraystretch}{1.1}

\begin{tabular}{|l|p{8.75cm}|p{2.53cm}|p{3.14cm}|}
\hline
\rowcolor[HTML]{C0C0C0} 
& \textbf{Reference} & \textbf{Approx.} & \textbf{Max-degree of $Q$} \\
\cline{2-4}
\hline
& Blum, Dickerson, Haghtalab, Procaccia, Sandholm, Sharma~\cite{blumetal} & $0.5-\epsilon$ & $\poly(1/p)$ \\
\cline{2-4}
& Assadi, Khanna \& Li~\cite{AKL17} & $0.5001$ & $\O{\log(1/p)/p}$ \\
\cline{2-4}
& Behnezhad, Farhadi, Hajiaghayi, Reyhani~\cite{soda19} & $0.6568$ & $\O{\log(1/p)/p}$ \\
\cline{2-4}
& Assadi \& Bernstein~\cite{sosa19} & $2/3-\epsilon$ & $\O{\log(1/p)/p}$ \\
\cline{2-4}
& Behnezhad, Derakhshan \& Hajiaghayi~\cite{stoc20} & $1-\epsilon$ & $(1/p)^{\polylog (1/p)}$ \\
\cline{2-4}
\multirow{-6}{*}{\rotatebox[origin=c]{90}{Unweighted}} & Behnezhad, Blum \& Derakhshan~\cite{behnezhad2022stochastic} & $0.73$ (bipartite) & $\O{\log(1/p)/p}$ \\
\hline
\cline{2-4}
& Yamaguchi \& Maehara~\cite{yamaguchi2018stochastic} & $0.5 -  \epsilon$ & $O(W \log n/p)$ \\
\cline{2-4}
& Behnezhad \& Reyhani~\cite{BR18} & $0.5-\epsilon$ & $\poly(1/p)$ \\
\cline{2-4}
& Behnezhad, Farhadi, Hajiaghayi,  Reyhani~\cite{soda19} & $0.501$ & $\poly(1/p)$ \\
\cline{2-4}
& Behnezhad \& Derakhshan~\cite{behnezhad2020stochastic} & $1-\epsilon$ & $ \exp{(1/p)}$ \\
\cline{2-4}
& Dughmi, Kalayci \& Patel~\cite{DughmiKP23} & $0.536$ & $O(1/p)$ \\
\cline{2-4}
\multirow{-6}{*}{\rotatebox[origin=c]{90}{Weighted}} & \textbf{This work} & \bm{$0.681$} & $\bm{O(1/p)}$ \\
\hline
\end{tabular}
\caption{Survey of known results in chronological order. For simplicity, we have hidden the actual dependence on $\epsilon$ inside the $O$-notation in some cases.}
\label{table1}
\end{table}
\vspace{-1 cm}
\section{Technical Overview}\label{sec:techniques}
The algorithm we use to construct $Q$ is a quite simple one which was  introduced by~\cite{soda19} and subsequently studied by~\cite{stoc20,behnezhad2022stochastic}. Given  a parameter $t$, the algorithm starts by drawing  $t$ realizations of $G$ drawn from the same distribution as $\mathcal{G}$. Let us represent these random subgraphs by $\mc{G}_1, \dots, \mc{G}_t$. We then let $Q$ be the union of max-weight matchings of these graphs. That is $$Q:=\cup_{i\in t} \MM(\mc{G}_i),$$ where $\MM(\cdot)$ is a deterministic algorithm returning the max-weight matching of a given graph. Since $Q$ is a union  of $t$ matchings, it clearly has max-degree $t$. The challenge, however, is proving that for a $t$ as small as $\O{1/p}$, the realization of $Q$ contains a large weight matching. We provide a constructive proof for this. That is,  we design  an algorithm for finding a matching with a large approximation ratio in $\mc{Q}$, the actual realization  of $Q$. Below, we first briefly review the ideas used by the previous work and then discuss the ingredients we add to achieve our desired result.

\paragraph{Crucial/Non-crucial Edge Decomposition}

The framework utilized to analyze the aforementioned algorithm involves partitioning the edges into two categories: \textit{crucial} and \textit{non-crucial}. Separate arguments are then presented to demonstrate how these edges can be integrated to construct a large weight enough matching in $\mathcal{Q}$. Let $x_e$ denote the probability of edge $e$ being part of the optimal solution, i.e., 
\[ x_e = \Pr[e \in \MM(\mathcal{G})]. \]
We define the set of crucial edges, denoted by $C$, and the set of non-crucial edges, denoted by $N$, as follows:
\[ C = \{ e \in E : x_e \geq \tau \} \;\;\;\;\;\text{ and } \;\;\;\;\; N = \{ e \in E : x_e < \tau \},\]
where $\tau$ is a fixed threshold in the  order of $p$. Note that by choosing a sufficiently large value for $t=\O{1/p}$, we can ensure that $Q$ contains nearly all of the crucial edges. To establish the existence of a large weight matching in $\mc{G}$, the  first step  is to construct a matching $M_c$ exclusively on the crucial edges which is an $\alpha$-approximation  with respect to the contribution of  the crucial edges to the  optimal solution. ($M_c$ should  satisfy some other useful properties which we will discuss later.)  The next step is constructing a  fractional matching $\bm{f}$ on the subgraph of non-crucial edges whose endpoints are unmatched in $M_c$. This fractional matching should satisfy the two following properties: first, for any edge $e\in N$, it holds that $\E[f_e]\simeq x_e$; second, the values of $f_e$ should be small enough to ensure that $\bm{f}$ has almost no integrality gap.
By combining these steps, the framework constructs a matching with weight almost $\alpha\times \weight(\MM(\mathcal{G})\cap C)+ \weight(\MM(\mathcal{G})\cap N)$. Here $\weight(.)$ is a function returning the weight of a given matching.

All papers utilizing this analysis framework require the algorithm used for constructing $M_c$ to match the endpoints of any non-crucial edge $e$ independently. Otherwise, the edge is discarded. This requirement is the main reason why Behnezhad and Derakhshan  \cite{behnezhad2020stochastic} need to take an exponential number of edges per vertex. To achieve this property, they employ a distributed LOCAL algorithm for constructing $M_c$, which can lead to a vertex being dependent on the vertices within its $\Omega(\log(\Delta))$ radius ball, where $\Delta$ denotes the crucial degree of a vertex. Since potentially $(1/p)^{\log(1/p)}$ non-crucial edges may be discarded for each vertex, these edges need to have small $x_e$ values. Consequently, a small threshold $\tau$ and a large $t$ must be chosen.
Due to known lower bounds for matching in  the LOCAL model~\cite{kuhn2016local}, one  cannot hope to prove desirable approximation ratios for a $Q$ of max-degree $\poly(1/p)$ following this approach.

In this work, we demonstrate that it is possible to relax the requirement regarding the independent matching of endpoints of any non-crucial edge in $M_c$. Instead, we replace it with an upper bound on the variance of a parameter related to the neighborhood of each vertex. Specifically, it should be possible to pick a subset $A$ of the vertices unmatched by $M_c$ such that:

\begin{enumerate}
    \item Any non-crucial edge $e=(u,v)$  satisfies $\Pr[\{u,v\}\subset A] \geq \delta$ for a fixed constant $\delta>0$.
    
    \item Let us define random variable $Z_v$ related to the neighborhood of any vertex $v$ as \begin{equation}\label{eq:ejjrke}
         Z_v = \sum_{e=(u,v)\in N}\frac{x_e}{\Pr[\{u,v\}\subset A]}\mathds{1}_{u\in A}.
    \end{equation} 
    Note that the randomization here is due to $A$ and $M_c$ being random variables themselves. We require the variance of this random variable to be upper-bounded as follows: $\var(Z_v)\leq \frac{10\tau}{\delta^2}.$ 
\end{enumerate} 
We define an algorithm for finding $M_c$ and $A$ to be a {\em variance-bounding} matching algorithm (see definition~\ref{def:variance-bounding}) if it satisfies the above-mentioned property (and a few others). We then provide a reduction demonstrating  that if $M_c$ is an $\alpha$-approximation with respect to the contribution of crucial edges to the optimal solution, then it is possible to find an almost $\frac{1}{2-\alpha}$-approximate solution on the realized edges of $Q$. Our proof strongly relies on independent edge realizations hence enabling us to break the $2/3$ barrier. 

The second step of our analysis involves showing the existence of an $\frac{8}{15}$-approximate variance-bounding matching algorithm. (See Algorithm~\ref{alg:Fu-modified}.) We utilize a randomized algorithm designed by Fu et al.~\cite{Fu} for a variant of online stochastic matching and prove that, with slight modifications, it satisfies our desired property. Our analysis begins by identifying a set of independent random variables that determine the algorithm's outcome. We then utilize a method called the Efron-Stein inequality to establish the desired  upper bound on $\var(Z_v)$. Since this inequality is not very  commonly used in theoretical computer science, we hope for our work to serve as an example of using this powerful tool in the analysis of randomized algorithms.

\section{Preliminaries}
\subsection{Notation}
In the stochastic weighted matching  problem, the input is an $n$-vertex graph $G=(V,E)$, a vector of weights $\bm{w}=\{w_e: e\in E\}$ and a probability  vector $\bm{p}=\{p_e: p_e \in E\}$.  Subgraph $\mc{G}$ is a  random  subgraph of $G$ which contains each edge independently with probability $p_e$. The goal in this problem is to pick a subgraph $Q$ of $G$ without the knowledge of its realization  such that: (1) $Q$ has a small max-degree, namely a constant with respect to $n$, and (2) The realized edges of $Q$ (i.e., the graph $Q \cap \mc{G}$) contain a large weight approximate matching.
We define the approximation ratio as 
 $$\frac{\E\left[\weight\left (\MM(\mc{Q})\right)\right]}{\E[\weight(\MM(\mc{G}))]},$$
where $\mc{Q} = \mc{G}\cap Q$ is the realization of $Q$ and $\MM(.)$ is a deterministic algorithm returning a maximum weighted matching of a given graph, and $\weight(M) = \sum_{e\in M} w_e$ is a function returning the weight of a given matching $M$. We will use $\opt=\MM(\mc{G})$ to refer to  the maximum matching of the actual realization.  We may sometimes abuse notation and use $\opt$ to refer to its expected weight when it is clear from the context. Note that while $\opt$ is a random variable $\E[\weight(\opt)]$ is just a number. For any edge $e\in E$, we define $$x_e=\Pr[e\in \opt],$$ where the probability is taken over the randomization in $\mc{G}$. Similarly, for any vertex $v\in V$ we let $x_v=\Pr[v\in \opt]$ be the probability that $v$ is matched in $\opt$. By the definition stated below, $\bm{x}$ is a fractional matching as each vertex joins $\opt$ w.p. at most one.

\begin{definition}[Fractional matching]\label{def:frac-matching}
A \emph{fractional matching} $\bm{x}$ of a graph $G=(V, E)$ is an assignment $\{x_e\}_{e \in E}$ to the edges, where $x_e \in [0, 1]$ and for each vertex $v \in V$, $x_v := \sum_{e \ni v} x_e \leq 1$. We use $|\bm{x}| := \sum_e x_e$ to denote the size of a fractional matching, and for any subset $E' \subseteq E$, use $\b{x}(E')$ to denote $\sum_{e \in E'} x_e$. 
\end{definition}
Throughout the paper, we use the notation  $O_\epsilon(f(n))$ which means we have assumed $\epsilon$ to be a constant to calculate the complexity of $f(n)$. The max-degree of subgraph $Q$ we find in this paper depends on the smallest $p_e$ amongst all edges, which we refer to as $p$. In other words 
$$p=\min_{e\in E} p_e.$$

In the following table, we list a set of variables  and their values, which we will use throughout the paper. Values are defined as functions of $\epsilon\in (0,1)$, which is a sufficiently small constant, and $\delta\in (0,1)$, which we will introduce in Definition~\ref{def:variance-bounding}. 
\begin{table}[htbp] 
\centering

\setlength{\arrayrulewidth}{1pt}
\setlength{\tabcolsep}{15pt}
\renewcommand{\arraystretch}{1.5}
\begin{tabular}{|>{\columncolor{gray!50}}c|>{\columncolor{gray!50}}c|>{\centering\arraybackslash}m{1.8cm}|>{\centering\arraybackslash}m{1.8cm}|>{\centering\arraybackslash}m{1.8cm}|>{\centering\arraybackslash}m{1.8cm}|}
\hline
\rowcolor{white}
\textbf{Variable} & \textbf{$\tau$}                     & \textbf{$\eta$} & \textbf{$\beta$}         & \textbf{$\gamma$}    & \textbf{$c$}             \\ \hline
\rowcolor{white}
\textbf{Value}    & $20p\epsilon^5 \delta^2$ & $\epsilon/10$          & $\epsilon^2/100$ & $\frac{1-\epsilon^2}{1+3 \eta}$          & $10/\epsilon$ \\ \hline
\end{tabular}\caption{Value of the parameters used throughout the paper}
\label{table:values} 
\end{table}

\subsection{Concentration Inequalities and Probabilistic Tools}
In this section, we state the concentration inequalities and some of the probabilistic tools that will be used throughout the paper. 

\begin{proposition}[The Efron–Stein Inequality~\cite{steele1986efron}]\label{prop:efron-stein} 
    Suppose  $X_1,..., X_n, X_1',..., X_n'$ are independent random variables with $X_i'$ and $X_i$ having the same distribution for all $i$. 
    Let $X = (X_1, ..., X_n)$ and $X^{(i)} = (X_1,..., X_{i-1}, X_i', ..., X_{i+1}, ...,  X_n)$. Then: 
    $$
    \var(f(X)) \leq \frac{1}{2} \sum_{i=1}^n \E \left[\left(f(X) - f(X^{(i)}) \right)^2\right]
    $$
\end{proposition}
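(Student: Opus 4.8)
The plan is to follow the classical proof of this inequality via the Doob martingale decomposition of $f(X)$ along its coordinates, combined with a conditional-variance reinterpretation of each martingale difference (assuming $f(X)\in L^2$, as otherwise the bound is vacuous). First I would fix the filtration $\mathcal{F}_i := \sigma(X_1, \dots, X_i)$ and write $\E_i[\,\cdot\,] := \E[\,\cdot \mid \mathcal{F}_i]$, so that $\E_0 f(X) = \E f(X)$ and $\E_n f(X) = f(X)$. Setting $\Delta_i := \E_i f(X) - \E_{i-1} f(X)$ gives the telescoping identity $f(X) - \E f(X) = \sum_{i=1}^n \Delta_i$. Since $\Delta_i$ is $\mathcal{F}_i$-measurable while $\E[\Delta_j \mid \mathcal{F}_{j-1}] = 0$ for every $j > i$, all cross terms vanish, and we obtain the Pythagorean identity $\var(f(X)) = \sum_{i=1}^n \E[\Delta_i^2]$. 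The whole task then reduces to proving, for each $i$, the per-coordinate bound $\E[\Delta_i^2] \le \tfrac12 \E\big[(f(X) - f(X^{(i)}))^2\big]$.

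Next I would bound $\E[\Delta_i^2]$ by rewriting $\Delta_i$ in a form that isolates the $i$-th coordinate. Write $X_{-i} := (X_j)_{j \ne i}$ and let $g_i := \E[f(X) \mid X_{-i}]$, the function obtained from $f$ by averaging out its $i$-th argument. Using that the coordinates $X_1, \dots, X_n$ are independent --- this is the one place the hypothesis is used --- one checks that $\E_i g_i = \E_{i-1} f(X)$, whence $\Delta_i = \E_i\big(f(X) - g_i\big)$ and, by conditional Jensen, $\E[\Delta_i^2] \le \E\big[(f(X) - g_i)^2\big]$. Now condition on $X_{-i}$: because $g_i$ is exactly the conditional mean of $f(X)$ given $X_{-i}$, we have $\E\big[(f(X) - g_i)^2 \mid X_{-i}\big] = \var\big(f(X) \mid X_{-i}\big)$. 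But conditionally on $X_{-i}$ the random variables $f(X)$ and $f(X^{(i)})$ are the same deterministic function applied to $X_i$ and to $X_i'$ respectively, and $X_i$, $X_i'$ are i.i.d.; so the elementary identity $\E[(Y - Y')^2] = 2\var(Y)$ for an i.i.d. pair $(Y, Y')$ gives $\E\big[(f(X) - f(X^{(i)}))^2 \mid X_{-i}\big] = 2\var\big(f(X) \mid X_{-i}\big)$. Taking expectations yields $\E\big[(f(X) - g_i)^2\big] = \tfrac12 \E\big[(f(X) - f(X^{(i)}))^2\big]$, and summing over $i$ finishes the proof.

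The main obstacle --- and essentially the only delicate point --- is the middle step: one must invoke the independence of the $X_j$ at precisely the right moment to justify $\E_i g_i = \E_{i-1} f(X)$, i.e.\ to see that averaging out $X_i$ interacts correctly with conditioning on the first $i-1$ coordinates, and then to recognize each squared martingale difference as a genuine conditional variance so that the symmetrization with the fresh copy $X_i'$ goes through verbatim. The remaining ingredients --- the telescoping sum, orthogonality of martingale increments, conditional Jensen, and the identity $\E[(Y-Y')^2] = 2\var(Y)$ --- are all routine.
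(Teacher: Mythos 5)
Your proof is correct: the Doob martingale decomposition, the orthogonality of increments, conditional Jensen applied to $\Delta_i = \E_i(f - g_i)$, and the identity $\E[(Y-Y')^2] = 2\var(Y)$ for an i.i.d.\ pair conditionally on $X_{-i}$ together give exactly the claimed bound, with independence invoked at the right place to justify $\E_i g_i = \E_{i-1} f(X)$. The paper states this proposition as a cited classical result without proof, and the argument you give is the standard martingale proof of Efron--Stein found in Steele's paper and the usual references, so there is nothing to compare against beyond confirming it is the canonical route.
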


\begin{proposition}[Chebyshev's Inequality]\label{prop:Chebyshev}
Let $X$ be a random variable with finite non-zero standard deviation $s$, (and thus finite expected value $\mu$.) Then for any real number $c > 0$, we have
$$
\Pr[|X - \mu| \geq c s] \leq \frac{1}{c^2}.
$$ 
    
\end{proposition}

\begin{proposition}[Law of Total Variance]\label{prop:lawof}
Let $X$ be a random variable and $Y$ be a random variable with respect to the same sample space. Then, the variance of $X$ can be expressed as
\begin{equation*}
\var(X) = \E[\var(X\mid Y)] + \var(\E[X\mid Y]).
\end{equation*}
\end{proposition}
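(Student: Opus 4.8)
The plan is to reduce everything to the two most basic facts about variance and conditional expectation: the identity $\var(W) = \E[W^2] - (\E[W])^2$, valid whenever $W$ has a finite second moment, and the tower property $\E[\E[W \mid Y]] = \E[W]$. First I would assume $X$ has a finite second moment, so that every quantity appearing in the statement is well defined; this is the only regularity point worth flagging, since otherwise the terms $\E[\var(X\mid Y)]$ and $\var(\E[X\mid Y])$ need not be finite.

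Next I would expand the right-hand side term by term. For the first term, writing the conditional variance as $\var(X \mid Y) = \E[X^2 \mid Y] - (\E[X\mid Y])^2$ and taking expectations, the tower property gives $\E[\var(X\mid Y)] = \E[X^2] - \E[(\E[X\mid Y])^2]$. For the second term, applying $\var(W) = \E[W^2] - (\E[W])^2$ with $W = \E[X \mid Y]$, and using the tower property once more to get $\E[W] = \E[X]$, I obtain $\var(\E[X\mid Y]) = \E[(\E[X\mid Y])^2] - (\E[X])^2$.

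Finally I would add the two expressions: the $\E[(\E[X\mid Y])^2]$ terms cancel, leaving $\E[X^2] - (\E[X])^2 = \var(X)$, which is exactly the claim. There is no genuine obstacle here — the statement is a classical identity — and the only thing to be careful about is the bookkeeping of which expectation is conditional and which is unconditional, together with the implicit finite-second-moment assumption that makes each term meaningful.
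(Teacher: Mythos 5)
Your proof is correct and is the standard textbook derivation: expand both terms on the right-hand side via $\var(W)=\E[W^2]-(\E[W])^2$, apply the tower property, and observe the cancellation of $\E[(\E[X\mid Y])^2]$. Note, however, that the paper itself does not prove this proposition at all — it is stated in the preliminaries as a classical fact (alongside Chebyshev's inequality and the Efron–Stein inequality) and used as a black box in the proof of Lemma~\ref{lem:boundedvary} — so there is no paper proof to compare against; your argument, including the remark about the finite second moment needed for each term to be well defined, is the right one to supply if a proof were wanted.
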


\begin{definition}[Negative Association]\label{def:negativeAssociation}
A set of random variables $X_1, ..., X_n$ is said to be negatively associated if for any two disjoint index sets $i, j \subseteq [n]$ and two functions $f$, $g$ both monotone increasing or both monotone decreasing it holds:
$$
\E[f(X_i: i \in I) \cdot g(X_j: j \in J)] \leq \E[f(X_i: i \in I)] \cdot \E[g(X_j: j \in J)]
$$
\end{definition}

\section{The Algorithm for Selecting $Q$}
In this section, we present a formal statement of the algorithm employed to construct the subgraph $Q$. We then explain how we can use the tools we  provide later in the paper to show  quering $Q$ proves \Cref{thm:Themain} (the main theorem). 

In summary, for a given parameter $t = O_{\epsilon}(1/p)$, we draw $t$ matchings from the same distribution as $\opt$ (the optimal solution) and define $Q$ as the union of these matchings.

\begin{tboxalg2e}{The algorithm for constructing $Q$}
\begin{algorithm}[H]
	\DontPrintSemicolon
	\SetAlgoSkip{bigskip}
	\SetAlgoInsideSkip{}
	
	\label{alg:query}
        $Q \gets \emptyset$ \\
	\For{$i$ from $1$ to $t$}{
        Let $G_i$ be a random realization of $G$ containing each edge independently w.p. $p_e$.\\ 
        Set $M_i = \mu(G_i)$ \\
            $Q \gets Q \cup M_i$
	}
	
	\Return $Q$
\end{algorithm}
\end{tboxalg2e}
\vspace{2 mm}
Let us define subsets of crucial and noncrucial edges as follows.
\begin{equation}\label{eq:crucialnoncrucial}
    C = \{ e \in E : x_e \geq \tau \} \;\;\;\;\;\text{ and } \;\;\;\;\; N = \{ e \in E : x_e < \tau \},
\end{equation} where 
$\tau=\theta(\frac{1}{t})$ and $t= \frac{1}{\tau \epsilon}$ for a sufficiently small $\epsilon\in (0,1)$. (The actual value of $\tau$ and the other variables used in the paper are presented in Table~\ref{table:values}.)
Note that in the above algorithm, matching $M_1, \dots, M_t$ are independent from each other and come from the same distribution as $\opt$. This means that for any edge $e$ and $i\in [t]$ we have $\Pr[e\in M_i] = Pr[e\in  \opt] = x_e$. As a result, the subgraph outputted by this algorithm contains almost all the crucial edges. Moreover, it  picks any non-crucial edge $e\in  N$ with a large enough probability as a function of their $x_e$.  We formally state these properties below in  Claim~\ref{claim:allcrucial} and Claim~\ref{claim:lowerbound_einQ}. While the proofs are pretty straightforward, we include them in Section~\ref{section:proofs} for the sake of completeness.

\newcommand{\claimallcrucial}[0]{Given constant numbers  $\tau, \epsilon \in (0,1)$, Let $Q$ be the output of Algorithm~\ref{alg:query} with parameter $t\geq \frac{1}{\tau \epsilon}$.  Any crucial edge $e\in C$ with $x_e \geq  \tau$  is present in $Q$ with probability at least $1-\epsilon$. }
\begin{claim}\label{claim:allcrucial}
\claimallcrucial{}
\end{claim}

\newcommand{\claimlowerboundeinQ}[0]{Any edge  $e\in  E$ is present in $Q$ with probability at least $\min(1/3,  tx_e/3).$}

\begin{claim}
    \label{claim:lowerbound_einQ}
    \claimlowerboundeinQ{}
\end{claim}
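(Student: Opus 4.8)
The plan is to prove Claim~\ref{claim:lowerbound_einQ} by analyzing the probability that a fixed edge $e$ appears in at least one of the $t$ independent matchings $M_1, \dots, M_t$. Since the $M_i$ are drawn independently from the same distribution as $\opt$, we have $\Pr[e \in M_i] = x_e$ for each $i$, and hence
\begin{equation*}
\Pr[e \in Q] = 1 - \Pr[e \notin M_i \text{ for all } i] = 1 - (1 - x_e)^t .
\end{equation*}
So the entire claim reduces to showing $1 - (1-x_e)^t \geq \min(1/3, t x_e / 3)$ for all $x_e \in [0,1]$ and all positive integers $t$.

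First I would split into two cases according to the size of $t x_e$. When $t x_e \geq 1$, it suffices to show $1 - (1-x_e)^t \geq 1/3$, i.e. $(1-x_e)^t \leq 2/3$. Here I would use the standard inequality $(1-x_e)^t \leq e^{-t x_e} \leq e^{-1} < 2/3$, which handles this case cleanly. When $t x_e < 1$, I want $1 - (1-x_e)^t \geq t x_e / 3$. The natural tool is a lower bound on $1 - (1-y)^t$ that is linear in $ty$ for small $ty$; for instance, one can use convexity or the elementary fact that $1 - (1-y)^t \geq \frac{ty}{1 + ty} \cdot$ (something), or more directly show by induction on $t$ (or via Bernoulli-type reasoning) that $1-(1-y)^t \geq ty - \binom{t}{2} y^2 \geq ty(1 - ty/2) \geq ty/2 \geq ty/3$ when $ty \leq 1$. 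Actually the cleanest route is probably: $1-(1-y)^t \geq 1 - e^{-ty} \geq ty(1 - ty/2)$ using $e^{-s} \leq 1 - s + s^2/2$ for $s \geq 0$; since $ty < 1$ this gives $\geq ty/2 \geq ty/3$.

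I do not anticipate a genuine obstacle here — the statement is an elementary inequality once the reduction $\Pr[e\in Q] = 1-(1-x_e)^t$ is made, and the factor of $3$ (rather than something tighter) gives plenty of slack in both cases. The one point requiring a little care is making sure the chosen elementary bound on $1-(1-y)^t$ is valid for all integers $t \geq 1$ and all $y \in [0,1]$, including edge cases like $y = 1$ (where the left side is $1$) and very small $y$; using $e^{-ty}$ as an intermediate upper bound for $(1-y)^t$ sidesteps most of these concerns since $1 + s \le e^s$ holds for all real $s$. I would present the two-case analysis compactly and cite only the inequality $1 - x \le e^{-x}$ and $e^{-s} \le 1 - s + s^2/2$ for $s \ge 0$, both of which are standard.
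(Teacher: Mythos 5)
Your proposal is correct and takes essentially the same route as the paper: both start from $\Pr[e\in Q] = 1-(1-x_e)^t$ (using independence and identical distribution of the $M_i$), then split on whether $tx_e$ is small or large, bounding the small case by a second-order lower bound on $1-(1-x_e)^t$ (the paper uses Bonferroni $tx_e - \binom{t}{2}x_e^2$ with split at $tx_e=2/3$, you use $1-e^{-tx_e}\geq tx_e(1-tx_e/2)$ with split at $tx_e=1$, which are interchangeable here) and the large case by the exponential bound $1-e^{-tx_e}>1/3$.
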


\subsection{Proof of the Main Theorem.}  As discussed previously  in  Section~\ref{sec:techniques}, to prove  Theorem~\ref{thm:Themain}, we will show that $\mc{Q}$ contains a $0.681$-approximate matching. Since $\mc{Q}$ is Union of $t=O_\epsilon(1/p)$ matchings, this proves our main result. In \Cref{def:variance-bounding} we define {\em variance-bounding} matching  algorithms. In Lemma~\ref{lem:reduction}, we prove that for any $\alpha\in (0,1)$ and a small enough constant $\epsilon>0$ existence of an $\alpha$-approximate variance-bounding algorithm implies $\mc{Q}$ contains a $(\frac{1}{2-\alpha}-\epsilon)$-approximate matching. In Lemma~\ref{lem:vertexindependent}, we prove the existance of a $\frac{8}{15}$-approxumate variance-bounding matching implying that $\mc{Q}$ contains a matching with an approximattion  ratio of 
$$\frac{1}{2-8/15}-\epsilon
=\frac{15}{22}-\epsilon \geq 0.6818-\epsilon.$$
By picking a small enough $\epsilon\leq 0.0008$ this gives us an approximation ratio of $0.681$.
\section{The Reduction}
In this section, we first introduce {\em variance-bounding} matching algorithms and then show that the
the existence of an $\alpha$-approximation variance-bounding matching  algorithm  implies that it is possible to find  a $(\frac{1}{2-\apx}-\epsilon)$-approximate matching  with $O_\epsilon(1/p)$ queries per vertex. 

\begin{definition}[Variance-bounding (VB) matching algorithm]\label{def:variance-bounding} We call a matching algorithm $\VB$ an \apx-approximation variance-bounding algorithm if it has the following properties. It takes as input (1)  a graph $H=(V, E)$ whose edges are realized independently, each with a given probability $p_e$ forming subgraph $\mathcal{H}$,  and (2) a matching $M_{\mc{O}}$ of $\mathcal{H}$ found by an arbitrary (potentially randomized) algorithm. The algorithm then outputs a  matching $M_c$ of $\mathcal{H}$ and a subset $A$ of vertices that are unmatched in  $M_c$ \footnote{$A$ is just a subset of unmatched vertices, so some unmatched vertices may not be in $A$.} such that: \begin{enumerate}[label=$(\roman*)$]

\item $M_c$ is in expectation  an \apx-approximate matching with respect to $M_{\mc{O}}$.\label{viprop:approx}
\item For any vertex $v \in V$, $\Pr[v\in A] \geq \Pr[v \notin M_{\mc{O}}]$. \label{viprop:Asideprob}

\item \label{viprop:independence} For any two vertices $u, v$ that do not have an edge in $H$ the following holds. $\Pr[\{u, v\} \subset A]\geq \delta$ for a constant $\delta\in (0, 1)$.

\item \label{viprop:variance}  Given a parameter $\tau\in (0,1)$, let $\bm{x}$ be a fractional matching on $\overline{H} = (V, \overline{E})$ (the complement of  $H$) with $x_e\leq \tau$ for any $e\in \overline{E}$. 
For  any vertex $v$ variable $Z_v$, defined below, satisfies  $\var(Z_v)\leq \frac{6\tau}{\delta^2}.$ 
$$Z_v = \sum_{e=(u,v)\in \overline{E}}\frac{x_e}{\Pr[\{u,v\}\subset A]}\mathds{1}_{u\in A}.$$ 

\end{enumerate}
\end{definition}

Let us briefly explain why we need a variance-bounding matching algorithm. We will use this algorithm on all the realized crucial edges (i.e., $H=(V, C)$) and define $M_{\mc{O}}$ to be a matching with the  expected weight the same as the contribution of the crucial edges to the optimal solution. We formally define these  inputs in \Cref{def:input}. This gives us a matching $M_c$ on the crucial edges and a subset $A$ of vertices unmatched in  $M_c$.  We will then construct a fractional matching $\bm{f}$ with a small integrity gap exclusively using the (queried and realized) non-crucial edges between vertices in $A$. We need Property~\ref{viprop:approx} to ensure that $M_c$ is large with respect to the contribution of crucial edges to the optimal solution. Property~\ref{viprop:Asideprob} ensures that each vertex is available in $A$ with a large enough probability for its non-crucial edges to be able to contribute to $\bm{f}$ almost as much as their contribution to $\opt$. We need Property~\ref{viprop:independence} to ensure that each edge is available for potential contribution to $\bm{f}$ with a large enough probability. Finally, we will use Property~\ref{viprop:variance} to prove that constructing $\bm{f}$ in a particular way does not result in fractional degrees of vertices exceeding one too often. For more details about the importance of this property, see Section~\ref{section:expectedweightfrac}.

\begin{lemma}[The Reduction]\label{lem:reduction}
For constant numbers $\apx\in (0.5,1)$ and $\epsilon\in (0,0.1)$, existence of an \apx-approximation variance-bounding algorithm $\VB$ (from \Cref{def:variance-bounding}) implies a $(\frac{1}{2-\apx}-\epsilon)$
approximation algorithm for the weighted stochastic matching problem with $O_\epsilon(1/p)$ queries per vertex. 
\end{lemma}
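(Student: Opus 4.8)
\textbf{Proof proposal for Lemma~\ref{lem:reduction}.}

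The plan is to run Algorithm~\ref{alg:query} to obtain $Q$ with $t = O_\epsilon(1/p)$, and then to \emph{exhibit} a matching in $\mc{Q}$ of expected weight at least $(\frac{1}{2-\apx}-\epsilon)\cdot\E[\weight(\opt)]$. The construction has two halves, mirroring the crucial/non-crucial decomposition in \eqref{eq:crucialnoncrucial}. First, let $\mc{C} = Q \cap \mc{G} \cap C$ be the realized crucial edges inside $Q$; by Claim~\ref{claim:allcrucial} essentially all crucial edges of $\mc{G}$ survive into $Q$, so $\mc{C}$ is (up to an $\epsilon$-loss) the same as the realized crucial edges of $\mc{G}$. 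Feed $H=(V,C)$, its realization, and $M_{\mc{O}} := \opt \cap C$ into the variance-bounding algorithm $\VB$; this is the input specified in \Cref{def:input}. Property~\ref{viprop:approx} then gives a matching $M_c \subseteq \mc{C}$ with $\E[\weight(M_c)] \geq \apx \cdot \E[\weight(\opt\cap C)]$, together with the set $A$ of unmatched vertices enjoying properties \ref{viprop:Asideprob}--\ref{viprop:variance}.

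Second, I would build a fractional matching $\bm{f}$ supported on the realized, queried non-crucial edges with both endpoints in $A$, aiming for $\E[f_e]\approx x_e$ for every $e\in N$. The natural recipe: for a non-crucial edge $e=(u,v)$ that is realized, queried (i.e.\ $e\in Q$), and has $\{u,v\}\subset A$, set $f_e = \frac{x_e}{p_e \cdot \Pr[e\in Q \mid \text{realized}] \cdot \Pr[\{u,v\}\subset A]}$ (suitably capped), so that in expectation $f_e$ recovers $x_e$. Claim~\ref{claim:lowerbound_einQ} lower-bounds $\Pr[e\in Q]$ by $\Omega(tx_e) = \Omega(x_e/\tau)$, and property~\ref{viprop:independence} lower-bounds $\Pr[\{u,v\}\subset A]$ by $\delta$, so the scaling factors are bounded and the resulting $f_e$ values are $O(\tau/(\delta p))$ --- tiny, which is what kills the integrality gap. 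The fractional degree of a vertex $v$ is then essentially $Z_v$ from property~\ref{viprop:variance} (up to the $Q$-sampling randomness, which I would handle by conditioning or by folding it into the definition), and $\E[Z_v]\le x_v \le 1$; property~\ref{viprop:Asideprob} ensures each non-crucial edge at $v$ contributes in full expectation whenever $v$ is actually free in $M_c$. Where a vertex's fractional degree exceeds $1$ I scale down its incident $f$-values; using $\var(Z_v)\le 6\tau/\delta^2$ with Chebyshev (Proposition~\ref{prop:Chebyshev}), the probability of a large overflow is $O(\tau/\delta^2) = O(\epsilon^5)$, so the expected weight lost to rescaling is negligible. Rounding $\bm{f}$ (its values being $o(1)$, standard fractional-to-integral arguments give almost no loss) yields an integral matching $M_n$ on non-crucial edges disjoint from $M_c$ with $\E[\weight(M_n)] \geq (1-O(\epsilon))\E[\weight(\opt\cap N)]$.

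Putting the pieces together, $M_c \cup M_n$ is a valid matching in $\mc{Q}$ of expected weight at least $\apx\cdot\weight(\opt\cap C) + (1-O(\epsilon))\weight(\opt\cap N)$ in expectation. Writing $W_C = \E[\weight(\opt\cap C)]$, $W_N = \E[\weight(\opt\cap N)]$ and $W = W_C+W_N$, a worst-case choice of the split between $W_C$ and $W_N$ would make this bound small --- but the point of combining the two halves \emph{with the same underlying $\opt$} is that we also get the trivial alternative of just taking $M_c$ alone, or more precisely we can take a convex combination: randomize between outputting $M_c\cup M_n$ and outputting a matching that prioritizes the crucial part differently. The standard trick here is that either $W_C$ is a large fraction of $W$ (so $\apx W_C$ is already close to $\frac{1}{2-\apx}W$) or $W_N$ is large (so $M_n$ alone nearly suffices); balancing the two gives exactly the $\frac{1}{2-\apx}$ factor. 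I expect the main obstacle to be the interplay between the two independent sources of randomness --- the $\VB$ algorithm's internal coins (defining $A$, $M_c$) and the $Q$-sampling coins in Algorithm~\ref{alg:query} --- when bounding $\var$ of the fractional degree: property~\ref{viprop:variance} controls variance over $A$'s randomness for a \emph{fixed} $\bm{x}$, but here the effective fractional values also fluctuate with which non-crucial edges landed in $Q$, so I would need to show these two contributions to the variance add up (law of total variance, Proposition~\ref{prop:lawof}) and that the $Q$-sampling contribution is separately $O(\tau)$, which follows because the $f_e$ are $O(\tau/(\delta p))$ and there are at most $O(1/p)$ of them per vertex. The secondary subtlety is making the $M_{\mc{O}}$ fed to $\VB$ genuinely have the distribution of $\opt\cap C$ while remaining a matching of the \emph{realized} crucial graph --- handled by defining it as the restriction of a single coupled draw of $\opt$, which is exactly what \Cref{def:input} is set up to do.
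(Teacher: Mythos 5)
Your overall architecture matches the paper's: run Algorithm~\ref{alg:query}, feed $H=(V,C)$ and the hallucinated $M_{\mc{O}}$ of \Cref{def:input} into $\VB$, build a fractional matching on the non-crucial edges between vertices of $A$ with $\E[f_e]\approx x_e$, round it, and combine with $M_c$. Where the proposal breaks down is the final balancing step. You propose falling back on ``taking $M_c$ alone,'' and claim that when $W_C$ is a large fraction of $W$ ``$\apx W_C$ is already close to $\frac{1}{2-\apx}W$.'' That last claim is false: for any $\apx\in(0,1)$ one has $\apx<\frac{1}{2-\apx}$ (since $\apx(2-\apx)=1-(1-\apx)^2<1$), so at $\sigma\to 1$ the fallback ``$M_c$ alone'' delivers only $\apx W$, not $\frac{1}{2-\apx}W$. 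Worse, that fallback is never the useful one: $M_c\cup M_n$ already has expected weight about $(\apx\sigma+(1-\sigma))\weight(\opt)$, which dominates $\apx\sigma\weight(\opt)$ for every $\sigma$, so taking the max of those two just gives $1-\sigma+\apx\sigma$, whose minimum over $\sigma\in[0,1]$ is $\apx$ --- strictly below $\frac{1}{2-\apx}$.

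The missing idea is that the correct fallback is not $M_c$ but the max-weight matching of the crucial edges realized in $\mc{Q}$, i.e.\ $\MM(\mc{Q}\cap C)$. By Claim~\ref{claim:allcrucial} this has expected weight at least $(1-\epsilon)\sigma\weight(\opt)$, with no $\apx$ loss. Taking the better of that and $M_c\cup M_n$ gives $(1-\epsilon)\max\left(\sigma,\,1-\sigma+\apx\sigma\right)\weight(\opt)$, and this $\max$ is minimized at $\sigma=\frac{1}{2-\apx}$, where both branches equal $\frac{1}{2-\apx}$; that is exactly where the $\frac{1}{2-\apx}$ constant comes from. Two secondary nitpicks. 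First, you initially set $M_{\mc{O}}:=\opt\cap C$, which would make $M_c$ and $A$ depend on the actual non-crucial realization and break the independence you rely on in the fractional-matching step; you do catch this later by deferring to \Cref{def:input}'s hallucinated $\mc{N}^\star$, but it has to be set up that way from the start. Second, $\VB$ must be run on all realized crucial edges, not only on those that landed in $Q$ (you write $M_c\subseteq\mc{C}=Q\cap\mc{G}\cap C$); the paper deliberately decouples $\VB$'s inputs from $Q$ to get the independence of Claim~\ref{claim:approx-ind}, and only intersects $M_c$ with $\mc{Q}$ afterward.
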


We will prove that the existence of an \apx-approximation variance-bounding algorithm implies that querying the subgraph $Q$ outputted by Algorithm~\ref{alg:query} with parameter $t=O_\epsilon(1/p)$ gives us a $(\frac{1}{2-\apx}-\epsilon)$-approximate solution. Before formally proving this in Section~\ref{sec:reductionproof}, we need to prove a series of other claims and provide some definitions. Below, we give a brief overview of the proof.

The first step of the proof is using the variance-bounding algorithm on the subgraph of all the crucial edges.  Recall that $\VB$ takes as input (1)  a graph $H$ whose edges are realized independently, each with a given probability $p_e$ forming subgraph $\mathcal{H}$,  and (2) a matching $M_{\mc{O}}$ of $\mathcal{H}$ found by an arbitrary (potentially randomized) algorithm. 
Below, we detail the values assign to these parameters in  our reduction. 

\begin{definition}[$H$ and $M_{\mc{O}}$]\label{def:input}
In our reduction we choose the following values for $\mathcal{H}$ and $M_{\mc{O}}$:
\begin{enumerate}
    \item We set $H$ to be the subgraph of all the crucial edges $C$. In other words, $H=(V, C)$. \label{def:graphh}
    \item We set $M_{\mc{O}} = \MM(\mc{H}\cup \mc{N^\star})\cap \mc{H}$, where $\mc{H} = \mc{G}\cap H$ is the actual realization of all the crucial edges, and $\mc{N^\star}$ is a random hallucination of the non-crucial edges containing each edge independently with probability $p_e$. Note that $M_{\mc{O}}$ is a matching only on the realized crucial edges. \label{def:matchingMo}
\end{enumerate}
\end{definition}

In the remainder of this paper, when referring to a variance-bounding algorithm without specifying the input, we assume that the variables $H$ and $M_{\mc{O}}$ are defined according to Definition~\ref{def:input}. Executing the variance-bounding algorithm $\VB$ with these predefined inputs gives us a matching $M_c$ on the critical edges and a subset of unmatched vertices denoted as $A$. Using Property~\ref{viprop:approx}, we prove (in Claim~\ref{claim:approx-ind}) that $M_c$ is an $\alpha$ approximation with respect to the contribution of the crucial edges to the optimal solution. Since due to Claim~\ref{claim:allcrucial}, $\mc{Q}$ contains any crucial edge with probability at least $1-\epsilon$, this implies that $Q \cup M_c$ weights, in expectation, at least $(1-\epsilon)\apx$ times the contribution of crucial edges to $\opt$. It is important to note that we apply the algorithm $\VB$ to all realized critical edges, not exclusively those within $Q$ (the queried ones). This approach ensures that the output of $\VB$, consisting of $M_c$ and set $A$, is independent of the choice of $Q$, as outlined in Claim~\ref{claim:approx-ind}.

The next step of the reduction is using  the non-crucial edges among vertices in  $A$  to construct a fractional matching $\bm{f}$. In Lemma~\ref{lemma:expweight}, we use properties of  $\VB$ to ensure that the expected contribution of any non-crucial edge to $\bm{f}$ is almost the same as its contribution to the optimal solution. We then use the fact that these edges are non-crucial (hence have small $f_e$s) to prove in Lemma~\ref{lemma:integrhjk} that $\bm{f}$ has almost no integrity gap. Putting these pieces together, we prove that either union of this rounded matching and $M_c$ is an $(\frac{1}{2-\apx}-\epsilon)$ approximate solution, or simply only using the crucial edges in $\mc{Q}$ gives us this approximation.

In the following claim, we prove two basic properties about $M_c$ and set $A$ and their relation to the set of non-crucial edges in $\mc{Q}$.
\begin{claim}\label{claim:approx-ind}
     Let  $M_c$ and $A$ be the outputs of an $\alpha$-approximation variance-bounding algorithm which takes as input 
 the subgraph $H=(V, C)$ and matching $M_{\mc{O}}$ defined in \ref{def:input}. We have the followings for $M_c$ and $A$:
 \begin{enumerate}
     \item The expected weight of matching $M_c$ is at least $\alpha$ times the weight of matching $\opt\cap C$ (i.e., the contribution of the crucial edges to the optimal solution).
     \item Let $Q$ be the subgraph of edges we choose to query. For any non-crucial edge $e\in N$, the event $e\in \mc{Q}$ is independent of both  $M_c$ and $A$.
 \end{enumerate}
\end{claim}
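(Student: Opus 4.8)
The goal here is to establish two facts about the output of the variance-bounding algorithm $\VB$ run on the input $(H, M_{\mc{O}})$ from Definition~\ref{def:input}: an approximation guarantee for $M_c$ and an independence statement between the membership of non-crucial edges in $\mc{Q}$ and the pair $(M_c, A)$.

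\textbf{Plan for part 1.} The key observation is that the matching $M_{\mc{O}}$ defined in Definition~\ref{def:input} is carefully chosen so that its expected weight matches the contribution of the crucial edges to $\opt$. Concretely, $M_{\mc{O}} = \MM(\mc{H}\cup\mc{N^\star})\cap\mc{H}$, where $\mc{N^\star}$ is an independent hallucinated copy of the non-crucial edges distributed exactly as in $\mc{G}$. Therefore $\mc{H}\cup\mc{N^\star}$ has precisely the same distribution as $\mc{G}$ (the realized graph), since $\mc{G}$ is obtained by independently realizing crucial and non-crucial edges. So $\MM(\mc{H}\cup\mc{N^\star})$ has the same distribution as $\opt = \MM(\mc{G})$, and restricting to the crucial edges, $M_{\mc{O}}$ has the same distribution as $\opt\cap C$. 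In particular $\E[\weight(M_{\mc{O}})] = \E[\weight(\opt\cap C)]$. Now invoke Property~\ref{viprop:approx} of Definition~\ref{def:variance-bounding}: $M_c$ is in expectation an $\alpha$-approximation with respect to $M_{\mc{O}}$, i.e., $\E[\weight(M_c)]\geq \alpha\,\E[\weight(M_{\mc{O}})] = \alpha\,\E[\weight(\opt\cap C)]$, which is the claimed bound. (One should be mildly careful that the expectation inside Property~\ref{viprop:approx} is over the internal randomness of $\VB$ as well as the realization of $\mc{H}$, and that the hallucination $\mc{N^\star}$ is an extra independent source of randomness used only to define $M_{\mc{O}}$; taking the outer expectation over everything is what yields the stated inequality.)

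\textbf{Plan for part 2.} Here the point is that the algorithm $\VB$ is fed the entire realization $\mc{H}$ of all crucial edges together with $M_{\mc{O}}$ (which depends only on $\mc{H}$, the hallucination $\mc{N^\star}$, and $M_{\mc{O}}$'s own internal randomness), plus whatever internal randomness $\VB$ uses. None of these ingredients involves which non-crucial edges were placed in $Q$ by Algorithm~\ref{alg:query}, nor whether those edges are realized. More precisely, the randomness in the whole experiment can be partitioned into: (a) the realizations/internal coins determining $\mc{H}$, $\mc{N^\star}$, $M_{\mc{O}}$, and $\VB$ — hence $M_c$ and $A$; (b) for each non-crucial edge $e\in N$, the independent realization bit $\mathds{1}_{e\in\mc{G}}$; (c) the internal coins of Algorithm~\ref{alg:query} selecting which edges go into $Q$. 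For a non-crucial edge $e\in N$, the event $e\in\mc{Q}$ is a function of the bit in (b) for $e$ and the coins in (c), which are independent of all of (a). Hence $e\in\mc{Q}$ is independent of $(M_c, A)$. A subtlety worth flagging: Algorithm~\ref{alg:query} builds $Q$ as a union of matchings $M_i=\mu(G_i)$ of independent fresh realizations $G_i$, so the coins determining $Q$ are genuinely separate from $\mc{G}$ itself; thus whether $e\in Q$ and whether $e\in\mc{G}$ are each independent of $(M_c,A)$, and their conjunction $e\in\mc{Q}$ is too.

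\textbf{Main obstacle.} Neither part involves a hard calculation; the real care is bookkeeping of independent randomness sources. The one place to be precise is part 1: verifying that $\mc{H}\cup\mc{N^\star}$ is distributed identically to $\mc{G}$ requires noting that crucial and non-crucial edges are realized independently in $\mc{G}$, so replacing the realized non-crucial part by a fresh independent hallucination does not change the distribution of the maximum-weight matching, and in particular not its intersection with the crucial edges. Everything else follows directly from the stated properties of $\VB$ and the structure of Algorithm~\ref{alg:query}.
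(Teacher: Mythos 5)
Your proposal is correct and follows essentially the same argument as the paper's proof: for part 1, observe that $\mc{H}\cup\mc{N^\star}$ is distributed as $\mc{G}$ (by independence of crucial and non-crucial realizations), so $M_{\mc{O}}$ matches $\opt\cap C$ in distribution and Property~\ref{viprop:approx} gives the bound; for part 2, note that $(M_c, A)$ is determined by randomness (realization of crucial edges, hallucination $\mc{N^\star}$, internal coins of $\VB$) that is disjoint from, and independent of, the randomness determining $e\in\mc{Q}$ (the actual realization of non-crucial edges and the coins used to build $Q$). The only stylistic difference is that you more explicitly partition the randomness sources, but the substance is identical.
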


\begin{proof}
    To prove the first item of this claim, we will first show that matching $M_{\mc{O}}$ defined in \ref{def:matchingMo} has the same expected weight as the contribution of the crucial edges to the optimal solution. In other words, $\E[\weight(\opt \cap C)]=\E[M_{\mc{O}}]$. Recall that we have defined $M_{\mc{O}} = \MM(\mc{H}\cup \mc{N^\star})\cap \mc{H}$, where $\mc{H} = \mc{G}\cap H$ is the actual realization of all the crucial edges, and $\mc{N^\star}$ is a random hallucination of the non-crucial edges containing each edge independently with probability $p_e$. This implies that $\mc{H}\cup \mc{N^\star}$ comes from the same distribution as $\mc{G}$ and as result $M_{\mc{O}}$ is drawn from the same distribution as $\opt$. For any crucial edge $e\in C$ this gives us $\Pr[e\in M_{\mc{O}}]= \Pr[e\in \opt \cap C]$ and $\E[\weight(\opt \cap C)]=\E[M_{\mc{O}}]$.  This proves the first part of the claim since due to Definition~\ref{def:variance-bounding}, property~\ref{viprop:approx} we know $M_{c}$ is an $\alpha$ approximation with respect to  $M_{\mc{O}}$. 
    
   To prove the second part of this claim, observe that event $e\in \mc{Q}$ is a function of $Q$ and the realization of non-crucial edges, while $M_{c}$  and $A$ are obtained from running a variance-bounding matching algorithm with inputs $H=(V,C)$ and $M_{\mc{O}} = \MM(\mc{H}\cup \mc{N^\star})\cap H$. Here, $\mc{H}$ is the actual realization of all the crucial edges while $\mc{N^\star}$ is a random hallucination of the non-crucial edges (not the actual realization). Graph $H=(V,C)$ and function $\MM(.)$ are deterministic which means the only randomization in determining  values of $M_{c}$  and $A$ comes from $\mc{H}\cup \mc{N^\star}$.  Since edges of $\mc{G}$ are realized independently, $\mc{H}\cup \mc{N^\star}$ is independent  of the actual realization of the non-crucial edges. It clearly is also independent of the choice of $Q$. This implies that knowing  the outcome of  event $e\in \mc{Q}$ does not change the distribution of  $M_{c}$  and $A$;  hence they are independent. 
\end{proof}

\subsection{A Fractional Matching on the Non-crucial Edges}\label{sec:sec_five}
In this section, we will construct a fractional matching on the non-crucial edges to augment the matching  we get from running a variance-bounding matching on the crucial edges.
Given a variance-bounding algorithm \VB, let $M_c$ and $A$ be the output of $\VB$ with inputs given according to Definition~\ref{def:input}. To begin, let us define a variable $g_e$ for any non-crucial edge as follows:
\begin{equation}
g_e= \frac{x_e}{\Pr[e\in \mc{Q}]\Pr[{u,v}\in A]},
\end{equation}
where $x_e$ is defined as $$x_e= \Pr[e\in \opt].$$ Ideally, for constructing our fractional matching, we would assign a fractional value of $g_e$ to edge $e$ whenever $e\in \mc{Q}$ and both of its endpoints are in $A$. Since these events are independent due to Claim~\ref{claim:approx-ind}, their joint probability is $\Pr[e\in \mc{Q}]\Pr[{u,v}\in A]$. By constructing a fractional matching in this manner, we achieve $\E[f_e]=x_e$ for any edge $e$ and $\E[\bm{f}\cdot\bm{w}] = \E[\weight(\opt)]$.

However, the challenge lies in the fact that constructing $\bm{f}$ in this way may result in it not being a valid fractional matching, as certain vertices may have a fractional degree greater than one. In other words, $\sum_{(u,v)\in N} f_{(u,v)} >1$ may occur for some vertices $v\in V$. To address this issue, we first scale down the fractional values by a small amount. Subsequently, we discard any vertex whose fractional degree still exceeds one. The challenge then becomes demonstrating that this event does not significantly reduce the expected size of the fractional matching. We formally state the algorithm for constructing  a fractional matching on the non-crucial edges in Algorithm~\ref{alg:non-crucial}.

\begin{tboxalg2e}{A fractional matching on the realized non-crucial edges}
\begin{algorithm}[H]\label{alg:fracmatching}
	\DontPrintSemicolon
	\SetAlgoSkip{bigskip}
	\SetAlgoInsideSkip{}
	\label{alg:non-crucial}

Let $M_c$ and $A$ be the outputs of an $\apx$-approximation variance-bounding matching algorithm with inputs given according to Definition~\ref{def:input}. \\ 

    Let $\bm{f}$ be an empty fractional matching on the subgraph of non-crucial edges.\\
    Let $\epsilon\in (0,1)$ be a small given constant (the same as the one used in Table~\ref{table:values}).\\
    Set  $\gamma = (1-\epsilon^2)/(1+ \frac{3\epsilon}{10})$. \\ 
	\For{$e=(u,v)\in N$}{
       Let $g_e=x_e/\Pr[e \in \mc{Q} \text{ and }\{u, v\} \subset A])$\\
       \If{ $e\in \mathcal{Q}$ and both of its end-points are in $A$} {Set $f_e=g_e\gamma$.}\Else{Set $f_e=0$.}
	}
    If the fractional degree of a vertex $v$ in $\bm{f}$ exceeds one (i.e., $\sum_{e\ni v} f_e > 1$), zero out the fractional value of its edges.\\ \label{line:fractionalkill}
	\Return $\bm{f}$
\end{algorithm}
\end{tboxalg2e}
\vspace{4mm}

Since our ultimate goal is to demonstrate the existence of a large weight integral matching on $\mathcal{Q}$ rather than a fractional one, let us first address the integrality gap of the fractional matching produced by this algorithm. We first prove an upper bound of $\epsilon^3$ for $g_e$ of non-crucial edges in Claim~\ref{claim:geupperbound}. We then use this in Lemma~\ref{lemma:integrhjk} to prove that the output of Algorithm~\ref{alg:fracmatching} has a small integrality gap. To help with the flow of the paper, both proofs are deferred to \Cref{section:proofs}.

\newcommand{\claimgeupperbound}[0]{By choosing a sufficiently small constant $\epsilon>0$  in Algorithm~\ref{alg:fracmatching},  we get $g_e \leq \epsilon^3$ for all non-crucial edges.}
\begin{claim}\label{claim:geupperbound}
\claimgeupperbound{}
\end{claim}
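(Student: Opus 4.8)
The plan is to bound $g_e = x_e/\Pr[e\in\mc{Q}\text{ and }\{u,v\}\subset A]$ directly, using the two lower bounds already available for the events in the denominator together with the independence established in \Cref{claim:approx-ind}, and then to substitute the values of $\tau$ and $t$ from Table~\ref{table:values}. Fix a non-crucial edge $e=(u,v)\in N$; by definition $x_e<\tau$. Since $e$ is non-crucial, $u$ and $v$ share no edge in the crucial subgraph $H=(V,C)$ on which $\VB$ is run, so Property~\ref{viprop:independence} of \Cref{def:variance-bounding} gives $\Pr[\{u,v\}\subset A]\geq\delta$. Moreover, by the second item of \Cref{claim:approx-ind} the event $\{e\in\mc{Q}\}$ is independent of $A$, so the joint probability in the denominator of $g_e$ factors as $\Pr[e\in\mc{Q}]\cdot\Pr[\{u,v\}\subset A]\geq\delta\cdot\Pr[e\in\mc{Q}]$.

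It remains to lower bound $\Pr[e\in\mc{Q}]$. Because $Q$ is a function of the fresh realizations $G_1,\dots,G_t$ drawn independently of the actual realization $\mc{G}$, the events $\{e\in Q\}$ and $\{e\in\mc{G}\}$ are independent, hence $\Pr[e\in\mc{Q}]=p_e\cdot\Pr[e\in Q]\geq p\cdot\min(1/3,\,tx_e/3)$ by \Cref{claim:lowerbound_einQ}. Combining, $g_e\leq x_e/\bigl(\delta p\cdot\min(1/3,\,tx_e/3)\bigr)$, and I would finish with a two-case analysis. If $tx_e\geq 1$ then $\min(1/3,tx_e/3)=1/3$ and $g_e\leq 3x_e/(\delta p)<3\tau/(\delta p)=60\epsilon^5\delta$ after substituting $\tau=20p\epsilon^5\delta^2$. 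If $tx_e<1$ then $\min(1/3,tx_e/3)=tx_e/3$ and $g_e\leq 3/(\delta p t)=60\epsilon^6\delta$ after substituting $t=1/(\tau\epsilon)$ and $\tau=20p\epsilon^5\delta^2$ (the factors of $p$ cancel). In either case $g_e\leq 60\epsilon^5\delta\leq 60\epsilon^5\leq\epsilon^3$ provided $60\epsilon^2\leq 1$, which holds for a sufficiently small constant $\epsilon$; this is the claimed bound.

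There is no real obstacle here beyond bookkeeping: the two points that need a moment's care are (i) verifying that Property~\ref{viprop:independence} applies --- which it does precisely because the endpoints of a non-crucial edge have no crucial edge between them --- and (ii) tracking the extra factor $p_e$ separating $\Pr[e\in\mc{Q}]$ from $\Pr[e\in Q]$, which is ultimately harmless since $\tau$ itself carries a factor of $p$ that cancels it. Everything else is the substitution of the parameter values from Table~\ref{table:values}.
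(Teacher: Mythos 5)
Your proof is correct and takes essentially the same route as the paper: bound $x_e$ by $\tau$, lower-bound $\Pr[\{u,v\}\subset A]$ by $\delta$ via Property~\ref{viprop:independence}, lower-bound the probability that $e$ is queried and realized via Claim~\ref{claim:lowerbound_einQ}, then substitute the parameter values from Table~\ref{table:values}. In fact you are more careful than the paper's own display at one point: the paper applies Claim~\ref{claim:lowerbound_einQ} directly to $\Pr[e\in\mc{Q}]$, silently conflating $\Pr[e\in\mc{Q}]$ with $\Pr[e\in Q]$ (they differ by a factor of $p_e$), whereas you explicitly track that $\Pr[e\in\mc{Q}]=p_e\cdot\Pr[e\in Q]$ and observe that this factor cancels against the $p$ in $\tau$; you also take the correct max of the two cases (getting $60\epsilon^5\delta$ rather than the paper's $20\epsilon^6\delta$). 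Both paths land at $g_e\le\epsilon^3$ for $\epsilon$ small enough, so the claim is proved either way.
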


\newcommand{\lemmaintegrhjk}[0]{Consider the fractional matching $\bm{f}$ produced by Algorithm~\ref{alg:non-crucial}. There exists an integral matching on the non-crucial edges of $\mc{Q}$ between vertices in $A$ with  weight at least $(1-\epsilon/2)\bm{f}\cdot\bm{w}.$}

\begin{lemma}\label{lemma:integrhjk}
    \lemmaintegrhjk{}
\end{lemma}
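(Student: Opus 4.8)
The plan is to show that the fractional matching $\bm{f}$ produced by Algorithm~\ref{alg:non-crucial} can be rounded to an integral matching on the non-crucial realized edges inside $A$ while losing only a $(1-\epsilon/2)$ factor in weight. The key observation is that after Line~\ref{line:fractionalkill}, $\bm{f}$ is a genuine fractional matching (every vertex has fractional degree at most one) supported only on edges in $N \cap \mc{Q}$ whose both endpoints lie in $A$, and by Claim~\ref{claim:geupperbound} every edge in the support has $f_e \le g_e \le \epsilon^3$. So the task reduces to a purely combinatorial statement: a fractional matching all of whose edge values are at most $\epsilon^3$ can be rounded to an integral one of almost the same weight.

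First I would invoke a rounding theorem for fractional matchings with small edge values. The cleanest route is to use the characterization that the fractional matching polytope of a general graph is defined by the degree constraints together with the odd-set (blossom) constraints $\sum_{e \subseteq S} x_e \le (|S|-1)/2$, and that an extreme point with all coordinates below a small threshold $\epsilon^3$ must be ``close'' to integral: concretely, one can write $\bm{f}$ as (essentially) a convex combination of integral matchings, or apply the standard fact that scaling $\bm{f}$ down by a factor $(1-\epsilon/2)$ and applying an appropriate rounding (e.g.\ pipage/iterated rounding, or the result that any fractional matching $\bm{f}$ with $\|\bm{f}\|_\infty \le \delta'$ contains an integral matching of weight at least $(1 - O(\sqrt{\delta'}))\, \bm{f}\cdot\bm{w}$) yields the claim. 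Since $\epsilon^3$ is far smaller than $(\epsilon/2)^2$, the $(1-\epsilon/2)$ loss is comfortably absorbed. I would state this as a black-box lemma about fractional matchings (the kind used in the prior work of Behnezhad--Farhadi--Hajiaghayi--Reyhani and Assadi--Bernstein) and then just check that our $\bm{f}$ meets its hypotheses: it is a valid fractional matching (by Line~\ref{line:fractionalkill}), it lives on $N \cap \mc{Q}$ restricted to $A$ (by the \textbf{if} branch of the algorithm), and it has sup-norm at most $\epsilon^3$ (by Claim~\ref{claim:geupperbound}).

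The main obstacle I anticipate is making sure the rounding lemma is applied to the \emph{right} graph and that the rounded matching is still a matching on $\mc{Q}$ between vertices of $A$ --- i.e.\ that the rounding procedure does not introduce edges outside the support of $\bm{f}$. This is automatic for pipage-type rounding (which only moves mass along the support), so I would pick a rounding primitive with that ``support-preserving'' property and remark on it explicitly. A secondary point worth a sentence: odd-set constraints. An arbitrary fractional matching need not be in the convex hull of integral matchings, but losing the $(1-\epsilon/2)$ factor fixes this --- scaling down makes all the blossom inequalities slack by a margin that dwarfs $\epsilon^3$ (each odd set $S$ has at most $|S|\epsilon^3/2$ fractional mass concentrated, which after scaling is strictly below $(|S|-1)/2$ for $|S|\ge 3$), so $(1-\epsilon/2)\bm{f}$ lies in the integral matching polytope and can be written as a convex combination of integral matchings of $N\cap\mc{Q}$ inside $A$; the best of these has weight at least $(1-\epsilon/2)\,\bm{f}\cdot\bm{w}$. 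I would keep the calculations to a minimum and cite the standard rounding statement, since this integrality-gap argument is essentially the same as in the cited prior work.
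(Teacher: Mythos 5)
Your overall strategy matches the paper's: after Line~\ref{line:fractionalkill} the vector $\bm{f}$ is a valid fractional matching supported on $N\cap\mc{Q}$ restricted to $A$, each of its coordinates is at most $\epsilon^3$ by Claim~\ref{claim:geupperbound}, and one then invokes the blossom/odd-set characterization of the matching polytope (the paper cites Edmonds' theorem and a packaged rounding statement with $\xi=\epsilon^{1.5}$, precisely the "$(1-O(\sqrt{\delta'}))$-for-$\|\bm{f}\|_\infty\le\delta'$" fact you mention). So the route is the right one.

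However, your explicit verification of the blossom inequalities contains a quantitative error that, taken literally, breaks the argument. You write that each odd set $S$ carries at most $|S|\epsilon^3/2$ fractional mass. The correct bound coming from $\|\bm{f}\|_\infty\le\epsilon^3$ is $\binom{|S|}{2}\epsilon^3\approx|S|^2\epsilon^3/2$ (there are $\binom{|S|}{2}$ potential edges inside $S$, each worth at most $\epsilon^3$); the bound $|S|\epsilon^3/2$ has no justification, and if it were true you would not even need the $(1-\epsilon/2)$ scaling, since $|S|\epsilon^3/2<(|S|-1)/2$ already holds for every $|S|\ge3$. With the correct pair bound, the inequality $\binom{|S|}{2}\epsilon^3\le(|S|-1)/2$ reduces to $|S|\le 1/\epsilon^3$, so it only covers \emph{small} odd sets. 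For larger sets you must fall back on the degree constraint, which gives mass at most $|S|/2$, and it is here that the scaling earns its keep: $(1-\epsilon/2)|S|/2\le(|S|-1)/2$ holds once $|S|\ge 2/\epsilon$. Since $2/\epsilon\le 1/\epsilon^3$ for small $\epsilon$, the two regimes overlap and the argument goes through — but this two-regime structure is exactly what the paper's cited theorem encapsulates (verify blossom inequalities only for $|U|\le 1/\xi$, lose a $(1-\xi)$ factor overall), and you need to state it rather than assert a single uniform bound of $|S|\epsilon^3/2$. Fix the bound, split into the $|S|\lesssim 1/\epsilon^3$ and $|S|\gtrsim 2/\epsilon$ cases (or cite the $|U|\le 1/\xi$ version of the rounding theorem directly as the paper does), and the proof is correct.
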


\paragraph{Survival of vertices and non-crucial edges} For any vertex $v\in V$, we say $v$ \emph{survives}  Algorithm~\ref{alg:non-crucial}  iff  it is in set $A$, and it is not killed in Line~\ref{line:fractionalkill} of the algorithm (i.e., its fractional degree is not reduced to one). We also say an edge $e$ survives the algorithm iff both its endpoints survive (regardless of whether it is in $\mc{Q}$ or not).

\subsection{Expected Weight of the Fractional Matching}\label{section:expectedweightfrac} Let $\bm{f'}$ denote the value of $\bm{f}$ constructed by Algorithm~\ref{alg:non-crucial} before it zeroes out certain $f_e$ values in Line~\ref{line:fractionalkill}. As discussed earlier in this section, it is evident that $\E[f'_e] = \gamma x_e$ for any edge in $e\in N$. Since $\gamma$  deviates from one by a small constant, the expected weight of $\bm{f'}$ is a sufficiently large approximation relative to the contribution of the non-crucial edges to the optimal solution. Thus, the primary challenge lies in proving that we do not incur a substantial loss by zeroing out certain $f_e$ values in Line~\ref{line:fractionalkill}. Roughly speaking, we only have the opportunity to use an edge $e=(u, v)$ whenever it is in $\mc{Q}$ and its endpoints are in $A$ (i.e., $f'_e\neq 0)$, and we lose this opportunity if at least one of its endpoints does not survive Algorithm~\ref{alg:fracmatching}. That is, we have:

$$\Pr[f_e\neq 0]= \Pr[f'_e\neq 0] - \Pr[u \text{ or } v \text{ does not survive } \mid f'_e\neq 0].$$
To quantify the amount of loss per edge, we need to upper-bound $\Pr[u \text{ or } v \text{ do not survive } \mid f'_e\neq 0]$ and show that it is significantly smaller compared to $\Pr[f'_e\neq 0]$. Note that here, $f'_e\neq 0$ is not independent of $e$'s end-points surviving since it contributes to their fractional degree. Furthermore,  $e\in Q$ is correlated, albeit negatively (see Claim~\ref{lem:NA_Q}), with the existence of its neighboring non-crucial edges in $Q$, which may also impact the fractional degrees of $u$ and $v$ in $\bm{f'}$.  However, it is still helpful to first upper-bound the probability of $u$ and $v$ surviving without conditioning on  $f'_e\neq 0$. The intuition behind this is that since $f'_e$ is very small (i.e., upper-bounded by $\epsilon^3$  due to Claim~\ref{claim:geupperbound}), its impact on the fractional degree of each endpoint is insignificant. Moreover, since $e\in Q$ is negatively associated with $e'\in Q$ for any non-crucial $e'\neq e$ connected to $u$ or $v$,  conditioning on $e\in Q$ does not increase their $f_{e'}'$. To upper-bound $\Pr[v \text{ does not survive}]$ for any vertex $v$, let us define

\begin{equation}
Y_v = \sum_{e=(u,v) \in N} g_e \cdot \mathds{1}_{u\in A} \cdot \mathds{1}_{e\in \mathcal{Q}}.
\end{equation}
Since we set $f'_e=g_e\gamma$ iff $e\in Q$ and $\{u , v\}\subset A$, whenever vertex $v$ is present in $A$ we have $$Y_v/\gamma =\sum_{e=(u,v) \in N} f'_e.$$ Hence, vertex $v$  survive Algorithm~\ref{alg:fracmatching} iff $Y_v/\gamma\leq 1$. In Lemma~\ref{lem:boundedvary}, we prove that random variable $Y_v$ is concentrated around its mean for any vertex. This analysis crucially relies on Property~\ref{viprop:variance} of variance-bounding algorithms. This would have been enough if we knew $\E[Y_v]$ is sufficiently close to one. While we do not exactly have this, we can use the second property of the variance-bounding algorithms to show $\E[Y_v | v\in A]\leq 1$. This is only doable thanks to Property~\ref{viprop:Asideprob}. Combining all these together, we are able to finally prove in Lemma~\ref{lem:yvclose1} that $Y_v$ is sufficiently close to one, with a sufficiently large portability.

Since both Lemma~\ref{lem:boundedvary} and Lemma~\ref{lem:yvclose1}  have lengthy and technical proofs, we respectively allocate Section~\ref{sec:prooflemma} and Section~\ref{sec:yvclosetoone} to present detailed proofs for them. Finally, we put  the pieces together in Lemma~\ref{lemma:expweight} to demonstrate that $\E[f_e]$ is sufficiently large compared to $x_e$ (the contribution of $e$ to the optimal solution).

\newcommand{\besttau}{\ensuremath{\frac{1}{\epsilon t}}\xspace}
\newcommand{\besttQAlg}{\ensuremath{\frac{1}{\delta^2 p_{min} \epsilon^{3}}}\xspace}
\newcommand{\Inf}{\ensuremath{I}\xspace}

\newcommand{\difffrommu}{\ensuremath{\epsilon}\xspace}
\newcommand{\lemboundedy}[0]{
For any vertex  $v\in V$ define random variable $$Y_v = \sum_{e=(u,v) \in N} g_{e} \cdot\mathds{1}_{u\in A}  \mathds{1}_{e\in \mathcal{Q}}, $$
where $g_e=\frac{x_e}{\Pr[e \in \mathcal{Q}]\cdot\Pr[\{u, v\} \subset A]}.$
The following inequality holds for these random variables.

$$\Pr\Big[|Y_v - \mathbb{E}[Y_v]| \geq \eta \Big] \leq \beta$$
 for $\beta= \frac{\epsilon^2}{100}$ and $\eta=\frac{\epsilon}{10}.$
 
}

\begin{lemma}
\label{lem:boundedvary}
\lemboundedy
\end{lemma}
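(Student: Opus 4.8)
\textbf{Proof proposal for Lemma~\ref{lem:boundedvary}.}

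The plan is to bound the deviation of $Y_v$ from its mean via Chebyshev's inequality (Proposition~\ref{prop:Chebyshev}), so the crux is to produce a good upper bound on $\var(Y_v)$. The first step is to relate $Y_v$ to the variable $Z_v$ from Property~\ref{viprop:variance} of the variance-bounding algorithm. Writing $g_e = \frac{x_e}{\Pr[e\in\mc{Q}]\,\Pr[\{u,v\}\subset A]}$ and recalling that, by Claim~\ref{claim:approx-ind}, the event $e\in\mc{Q}$ is independent of the pair $(M_c, A)$ (and hence of the indicators $\mathds{1}_{u\in A}$), I would condition on the randomness that determines $M_c$ and $A$ and treat the indicators $\mathds{1}_{e\in\mc{Q}}$ as the only remaining source of randomness at that level. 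Concretely, $Y_v = \sum_{e=(u,v)\in N} \frac{x_e}{\Pr[\{u,v\}\subset A]}\mathds{1}_{u\in A}\cdot \frac{\mathds{1}_{e\in\mc{Q}}}{\Pr[e\in\mc{Q}]}$, so $\E[Y_v \mid A] = Z_v$ (here $Z_v$ is exactly the variable of Property~\ref{viprop:variance} applied with $\overline{H}$ being the non-crucial subgraph and the fractional values $x_e \le \tau$), and I would invoke the Law of Total Variance (Proposition~\ref{prop:lawof}): $\var(Y_v) = \E[\var(Y_v\mid A)] + \var(\E[Y_v\mid A]) = \E[\var(Y_v\mid A)] + \var(Z_v)$. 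The second term is at most $\frac{6\tau}{\delta^2}$ by Property~\ref{viprop:variance}.

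The second step is to bound the ``inner'' variance $\var(Y_v \mid A)$. Given $A$, $Y_v$ is a sum over $e=(u,v)\in N$ with $u\in A$ of the independent terms $\frac{x_e}{\Pr[\{u,v\}\subset A]}\cdot\frac{\mathds{1}_{e\in\mc{Q}}}{\Pr[e\in\mc{Q}]}$ — these are independent across distinct edges $e$ because the events $e\in\mc{Q}$ depend only on the (independent) realizations used by Algorithm~\ref{alg:query} restricted to each edge. Hmm — but the events $e\in\mc{Q}$ for edges sharing the vertex $v$ are \emph{not} independent: $\mc{Q}$ is a union of matchings, so two edges at $v$ compete. However Claim~\ref{lem:NA_Q} (referenced in the text) gives \emph{negative association} of the indicators $\mathds{1}_{e\in\mc{Q}}$ over edges incident to a common vertex, and negative association only helps upper-bound the variance of a sum (the covariance terms are $\le 0$). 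So $\var(Y_v\mid A) \le \sum_{e=(u,v)\in N, u\in A} \left(\frac{x_e}{\Pr[\{u,v\}\subset A]}\right)^2 \var\!\left(\frac{\mathds{1}_{e\in\mc{Q}}}{\Pr[e\in\mc{Q}]}\right) \le \sum_{e} \left(\frac{x_e}{\delta}\right)^2 \cdot \frac{1}{\Pr[e\in\mc{Q}]}$, using $\Pr[\{u,v\}\subset A]\ge\delta$ from Property~\ref{viprop:independence}. Now I use Claim~\ref{claim:lowerbound_einQ} ($\Pr[e\in\mc{Q}] \ge \min(1/3, tx_e/3)$, which for non-crucial $e$ with $x_e<\tau=1/(\epsilon t)$ means $tx_e/3 < 1/(3\epsilon)$, but also $\Pr[e\in\mc{Q}]\ge tx_e/3$ in the relevant small-$x_e$ regime), giving $\frac{x_e^2}{\Pr[e\in\mc{Q}]} \le \frac{x_e^2}{t x_e/3} = \frac{3x_e}{t} = 3\epsilon\tau x_e$. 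Hence $\var(Y_v\mid A) \le \frac{3\epsilon\tau}{\delta^2}\sum_e x_e \le \frac{3\epsilon\tau}{\delta^2}$ since $\sum_{e\ni v} x_e = x_v \le 1$. Combining: $\var(Y_v) \le \frac{3\epsilon\tau}{\delta^2} + \frac{6\tau}{\delta^2} \le \frac{7\tau}{\delta^2}$ for small $\epsilon$.

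The final step is plugging into Chebyshev. With $\tau = 20p\epsilon^5\delta^2$ from Table~\ref{table:values}, we get $\var(Y_v) \le \frac{7\cdot 20 p\epsilon^5\delta^2}{\delta^2} = 140\, p\,\epsilon^5 \le 140\,\epsilon^5$ (as $p\le 1$). Taking $c = \eta/\sqrt{\var(Y_v)}$ in Proposition~\ref{prop:Chebyshev} yields $\Pr[|Y_v - \E[Y_v]| \ge \eta] \le \frac{\var(Y_v)}{\eta^2} \le \frac{140\epsilon^5}{(\epsilon/10)^2} = 14000\,\epsilon^3 \le \frac{\epsilon^2}{100} = \beta$ once $\epsilon$ is small enough (indeed $\epsilon \le 1/(1.4\times 10^6)$ suffices, and in any case this is absorbed into ``sufficiently small constant $\epsilon$''). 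I would double-check the precise constants against the paper's chosen $\tau$ and its claimed bound $\var(Z_v)\le 6\tau/\delta^2$, adjusting the numeric slack as needed.

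The main obstacle I anticipate is the dependence structure in the inner variance term: one must carefully justify that conditioning on $A$ leaves the $\mathds{1}_{e\in\mc{Q}}$ with a negative-association (or at least non-positive-covariance) structure so that cross terms can be dropped — this is where Claim~\ref{lem:NA_Q} and the independence from Claim~\ref{claim:approx-ind} do the heavy lifting, and getting the conditioning order right (first fix $A,M_c$, then reveal $\mc{Q}$) is the delicate part. The rest is bookkeeping with the parameter values in Table~\ref{table:values}.
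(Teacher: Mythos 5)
Your proposal follows essentially the same route as the paper: split $\var(Y_v)$ by the law of total variance conditioning on $A$, observe that $\E[Y_v\mid A]$ is exactly the variable $Z_v$ of Property~\ref{viprop:variance} (so its variance is controlled by the variance-bounding guarantee), bound the inner variance $\E[\var(Y_v\mid A)]$ by dropping the cross-covariances via the negative association of $\{\mathds{1}_{e\in\mc{Q}}\}$ (Lemma~\ref{lem:NA_Q}) and by invoking Claim~\ref{claim:lowerbound_einQ}, and finish with Chebyshev. One small bookkeeping slip: Claim~\ref{claim:lowerbound_einQ} lower-bounds $\Pr[e\in Q]$, not $\Pr[e\in\mc{Q}]=p_e\Pr[e\in Q]$, so the per-edge estimate should carry an extra $1/p_e$ factor (as in the paper's proof); this does not change the conclusion because $\tau$ is chosen proportional to $p$, but the intermediate bound $\var(Y_v)\le 7\tau/\delta^2$ as stated is off by that factor.
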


\newcommand{\lemyvaboutone}[0]{
 For any vertex $v\in  V$ we have:
    $$
     \Pr\left[ Y_v\geq 1+3\eta\right]\leq \beta
    $$
}

\begin{lemma}
\label{lem:yvclose1}
\lemyvaboutone
\end{lemma}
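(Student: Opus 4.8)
The plan is to combine the concentration bound from Lemma~\ref{lem:boundedvary} with an upper bound on $\E[Y_v]$. Lemma~\ref{lem:boundedvary} already tells us $\Pr[|Y_v - \E[Y_v]| \geq \eta] \leq \beta$, so if we can show $\E[Y_v] \leq 1 + 2\eta$, then the event $Y_v \geq 1 + 3\eta$ forces $Y_v - \E[Y_v] \geq \eta$, which has probability at most $\beta$. So the entire task reduces to bounding $\E[Y_v]$.

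First I would compute $\E[Y_v]$ directly from the definition. Since $g_e = \frac{x_e}{\Pr[e \in \mc{Q}]\Pr[\{u,v\}\subset A]}$ is a constant, linearity of expectation gives
\[
\E[Y_v] = \sum_{e=(u,v)\in N} g_e \cdot \Pr[u \in A \text{ and } e \in \mc{Q}].
\]
By Claim~\ref{claim:approx-ind}, the event $e \in \mc{Q}$ is independent of both $M_c$ and $A$ (for non-crucial $e$), and in particular independent of $\{u \in A\}$; hence $\Pr[u\in A \text{ and } e\in\mc{Q}] = \Pr[u\in A]\Pr[e\in\mc{Q}]$. Substituting and cancelling $\Pr[e\in\mc{Q}]$,
\[
\E[Y_v] = \sum_{e=(u,v)\in N} \frac{x_e \Pr[u \in A]}{\Pr[\{u,v\}\subset A]}.
\]
The key observation now is that this is $\E[Z_v]$ where $Z_v$ is the variable from Definition~\ref{def:variance-bounding}\ref{viprop:variance} (taking $\bm x$ to be the restriction of the OPT fractional matching to the non-crucial edges incident to $v$, with the roles of the graph and its complement set so that $N$ plays the role of $\overline E$). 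Actually I should be careful: I want a clean bound, so I would instead bound $\E[Z_v \mid v \in A]$, using the conditional version. Conditioned on $v \in A$, we have $\Pr[u\in A \mid v\in A] = \Pr[\{u,v\}\subset A]/\Pr[v\in A]$, so
\[
\E[Y_v \mid v \in A] = \sum_{e=(u,v)\in N} g_e \Pr[u\in A\mid v\in A]\Pr[e\in\mc{Q}] = \sum_{e=(u,v)\in N} \frac{x_e}{\Pr[v\in A]} = \frac{x_v^{N}}{\Pr[v\in A]},
\]
where $x_v^{N} := \sum_{e=(u,v)\in N} x_e \leq x_v \leq 1$ is $v$'s fractional degree among non-crucial edges in OPT. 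By Property~\ref{viprop:Asideprob}, $\Pr[v\in A] \geq \Pr[v\notin M_{\mc{O}}]$. Since $M_{\mc{O}}$ is drawn from the same distribution as $\opt$ (Claim~\ref{claim:approx-ind}), $\Pr[v\notin M_{\mc{O}}] = 1 - x_v \geq 1 - x_v^{N} - \b{x}(C \cap \delta(v))$; more simply, using $\Pr[v \in A] \geq 1 - x_v$ and $x_v^N \leq x_v$, we get $\E[Y_v \mid v\in A] \leq \frac{x_v}{1-x_v}$. This is at most $1$ only when $x_v \leq 1/2$; in general it can be large, so the naive bound is not enough and I need to be more careful — this is the main obstacle.

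To handle the case of large $x_v$, I would argue that $Y_v$ only ever matters (for the survival analysis in later lemmas) conditioned on some specific non-crucial edge $f'_e \neq 0$, but for this lemma as stated I need an unconditional bound. The resolution: observe that $Y_v$ is only nonzero when $v \in A$, and $v \in A$ requires $v$ unmatched in $M_c$, which is correlated with $v$ having small crucial degree in OPT — so when $x_v$ is close to $1$, it is the crucial edges carrying most of $x_v$, leaving $x_v^N$ small. More precisely I would split: $\E[Y_v] = \Pr[v\in A]\cdot\E[Y_v\mid v\in A] = \Pr[v\in A]\cdot \frac{x_v^N}{\Pr[v\in A]} = x_v^N$... wait, that gives the clean bound $\E[Y_v] = x_v^N \leq 1$ directly, without the division. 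Let me redo this: $\E[Y_v] = \sum_e g_e \Pr[u\in A]\Pr[e\in\mc Q]= \sum_e \frac{x_e \Pr[u\in A]}{\Pr[\{u,v\}\subset A]}$. This is NOT $x_v^N$ unless $\Pr[u\in A] = \Pr[\{u,v\}\subset A]$, which fails. So the division genuinely appears and the honest statement is $\E[Y_v\mid v\in A] \le \frac{x_v^N}{\Pr[v\in A]}$, hence (multiplying by $\Pr[v\in A] \le 1$) $\E[Y_v] \le x_v^N \le 1$. That is the clean route: $\E[Y_v] = \Pr[v\in A]\,\E[Y_v\mid v\in A] \leq \Pr[v\in A]\cdot\frac{x_v^N}{\Pr[v\in A]} = x_v^N \leq 1 \leq 1+2\eta$. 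Then I would finish: since $\E[Y_v] \le 1 + 2\eta$, the event $\{Y_v \ge 1+3\eta\} \subseteq \{Y_v - \E[Y_v] \ge \eta\}$, which by Lemma~\ref{lem:boundedvary} has probability at most $\beta$, completing the proof. The one subtlety to double-check carefully is the conditional independence step — that $\{e\in\mc Q\}$ is independent of $\{u\in A\}$ even after conditioning on $\{v\in A\}$ — which follows because $A$ is a deterministic function of $\mc H \cup \mc N^\star$ and the hallucination $\mc N^\star$ is independent of the true non-crucial realization determining $\mc Q$; this is exactly the content of Claim~\ref{claim:approx-ind}, item 2.
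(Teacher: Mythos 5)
Your proposal contains a genuine error at the decisive step. You write
\[
\E[Y_v] = \Pr[v\in A]\cdot\E[Y_v\mid v\in A],
\]
and conclude $\E[Y_v]\le x_v^N\le 1$. This identity would require $Y_v$ to vanish on the event $\{v\notin A\}$, but it does not: by definition $Y_v = \sum_{e=(u,v)\in N} g_e\,\mathds{1}_{u\in A}\,\mathds{1}_{e\in\mathcal{Q}}$ involves only the indicators $\mathds{1}_{u\in A}$ for the \emph{neighbors} $u$, not $\mathds{1}_{v\in A}$. Neighbors of $v$ can be in $A$ while $v$ itself is not, so the missing term $\Pr[v\notin A]\cdot\E[Y_v\mid v\notin A]$ is in general nonzero and cannot be dropped. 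Indeed, the paper's Claim~\ref{claim:expY} proves only $\E[Y_v]\le 1/\delta$ --- strictly weaker than the $\E[Y_v]\le 1$ your chain would yield --- because $\E[Y_v]=\sum_e x_e\,\Pr[u\in A]/\Pr[\{u,v\}\subset A]$ genuinely exceeds $1$ when membership of $u$ and $v$ in $A$ is negatively correlated (ratios approaching $1/\delta$).

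The paper's actual route is to prove $\E[Y_v\mid v\in A]=x_v/\Pr[v\in A]\le 1$ (Claim~\ref{claim:kjnrrfnj3r} plus the $M_{\mc{O}}$ comparison, which your conditional computation correctly reproduces), and then relate $\E[Y_v]$ to $\E[Y_v\mid v\in A]$ not by multiplying by $\Pr[v\in A]$ but via Claim~\ref{claim:YVconditionAdiff}: since $Y_v$ is concentrated (Lemma~\ref{lem:boundedvary}) and $\Pr[v\in A]\ge\delta$ is bounded away from $0$, conditioning on $\{v\in A\}$ can shift the mean by at most $2\eta$. That sandwich, $\E[Y_v]-2\eta\le\E[Y_v\mid v\in A]\le\E[Y_v]+2\eta$, is the piece your argument is missing; the subtlety you flagged at the end (independence of $\mathcal Q$ and $A$) is handled correctly and is not where the difficulty lies.
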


\begin{definition}
    For a vertex $u'$ we define $Y_v(-u')$ to be the summation that we have for $Y_v$ except for the edge $e' = (u', v)$. Formally:
    $$Y_v = \sum_{e=(u,v) \in N, u \neq u'} g_{e} \cdot\mathds{1}_{u\in A}  \mathds{1}_{e\in \mathcal{Q}}, $$
\end{definition}
\newcommand{\Aset}{\ensuremath{\bm{A}}\xspace}
\newcommand{\qstar}{\ensuremath{\mathcal{Q}}\xspace}
\newcommand{\rva}{\ensuremath{S_A}\xspace}
\newcommand{\rvq}{\ensuremath{S_Q}\xspace}
\newcommand{\Cov}{\ensuremath{Cov}\xspace}

\begin{lemma}\label{lem:condeinQlowerbound}
    For every edge $e' = (v, u')$ and constant $\lambda \in (0,1)$ we have:
    $$\Pr[Y_v(-u') > \lambda] \geq \Pr[Y_v(-u') > \lambda \mid e' \in \mathcal{Q}]$$
\end{lemma}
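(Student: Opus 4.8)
\textbf{Proof plan for Lemma~\ref{lem:condeinQlowerbound}.} The statement asserts that conditioning on $e' = (v,u') \in \mathcal{Q}$ can only \emph{decrease} the probability that the ``remaining'' random variable $Y_v(-u')$ exceeds a threshold $\lambda$. Since $Y_v(-u')$ is a sum over edges $e = (u,v) \in N$ with $u \neq u'$ of the nonnegative terms $g_e \cdot \mathds{1}_{u \in A} \cdot \mathds{1}_{e \in \mathcal{Q}}$, and its only source of randomness is (a) the set $A$ output by $\VB$ and (b) which non-crucial edges land in $\mathcal{Q}$, the plan is to show that conditioning on $e' \in \mathcal{Q}$ does not stochastically increase any of these terms, so $Y_v(-u')$ is stochastically dominated by its unconditioned version, which immediately gives the tail inequality.

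The key observations, in order: First, by Claim~\ref{claim:approx-ind}, part~2, the event $e' \in \mathcal{Q}$ is independent of both $M_c$ and $A$ (since $A$ is a function only of the crucial realization $\mc{H}$ and the hallucinated non-crucial edges $\mc{N}^\star$, not the true non-crucial realization). Hence conditioning on $e' \in \mathcal{Q}$ leaves the joint distribution of $(A, M_c)$ — and in particular all the indicators $\mathds{1}_{u \in A}$ appearing in $Y_v(-u')$ — completely unchanged. Second, it remains to control the edge-realization indicators $\mathds{1}_{e \in \mathcal{Q}}$ for the non-crucial edges $e = (u,v)$ with $u \neq u'$. Here I would invoke the negative association of the events $\{e \in \mathcal{Q}\}$ over non-crucial edges $e$ (this is exactly the content flagged as Claim~\ref{lem:NA_Q} / Claim~\ref{lem:NA_Q} in the text, stated as ``$e \in Q$ is negatively associated with $e' \in Q$''): because $e'$ and all the $e$'s in the sum are distinct non-crucial edges, negative association gives that conditioning on $e' \in \mathcal{Q}$ can only decrease (in the FKG/NA sense) the joint upper tail of the indicators $(\mathds{1}_{e \in \mathcal{Q}})_{e \neq e'}$. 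Concretely, $Y_v(-u')$ is a monotone nondecreasing function of the indicator vector $(\mathds{1}_{u \in A})_u$ (independent of the conditioning) and $(\mathds{1}_{e \in \mathcal{Q}})_{e \neq e', e \in N}$ (negatively associated with $\mathds{1}_{e' \in \mathcal{Q}}$), so $\Pr[Y_v(-u') > \lambda \mid e' \in \mathcal{Q}] \le \Pr[Y_v(-u') > \lambda]$.

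To make this rigorous I would first condition on $A$ (legitimate by the independence from Claim~\ref{claim:approx-ind}), fix the coefficients $g_e \mathds{1}_{u \in A}$, and then apply the defining inequality of negative association to the increasing function $g \mapsto \mathds{1}\{\sum_{e} (g_e \mathds{1}_{u\in A}) g_e^{\mathrm{ind}} > \lambda\}$ of the remaining realization indicators versus the increasing function $\mathds{1}_{e' \in \mathcal{Q}}$; taking expectations and then averaging back over $A$ closes the argument. The main obstacle is purely bookkeeping: one must verify that the realization indicators of the non-crucial edges really are negatively associated (they are independent across distinct edges in the single realization $\mathcal{G}$, and $\mathcal{Q}$ depends on them monotonically through the $t$ independent hallucinated matchings — so one should cite the negative-association machinery for $\mathcal{Q}$ that the paper sets up, rather than reprove it here) and that $e' \notin \{\text{edges summed in } Y_v(-u')\}$ so that the two ``sides'' of the NA inequality involve disjoint index sets. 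Given those, the inequality is immediate.
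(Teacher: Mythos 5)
Your proposal is correct and takes essentially the same route as the paper: both condition on a fixed realization of $A$ (justified by the independence of $A$ from $\mathcal{Q}$ via Claim~\ref{claim:approx-ind}), then apply the negative-association inequality from Claim~\ref{lem:NA_Q} to the pair of monotone increasing functions $\mathds{1}_{e'\in\mathcal{Q}}$ and $\mathds{1}_{Y_v(-u')>\lambda}$ defined on disjoint index sets of edge indicators, and finally average back over $A$.
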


\begin{proof}
    To prove this, let us look at the random variables that affect $Y_v(-u')$. One collection is the set of variables for vertices being present in $A$, i.e., $\rva = \{\mathds{1}_{u \in A} : e = (u, v) \in N  \} $ and the second collection is the edges being present in $\qstar$, i.e. $\rvq = \{\mathds{1}_{e \in \qstar} : e = (u, v) \in N \}$.  Let us fix the randomization on the set $\rva$ and show that for any arbitrary realization of  $A$ represented by $\Aset$ we have: 
    \begin{equation}\label{eq:condA}
      \Pr[Y_v(-u') > \lambda \mid \Aset] \geq \Pr[Y_v(-u') > \lambda \mid e' \in \mathcal{Q}, \Aset]
    \end{equation}
    We prove in ~\cref{lem:NA_Q} that random variables in $\rvq$ are negatively associated. Furthermore, due to Claim~\ref{claim:approx-ind}  they are independent from random variables in $\rva$. So fixing $\Aset$ does not change the distribution of variables in $\rvq$. Now, we use the definition of negatively associated variables. In \cref{def:negativeAssociation} we set the function $f$ to be $\mathds{1}_{e \in Q}$ and $g$ to be $\mathds{1}_{Y_v(u') > \lambda | \Aset}$. Both functions are monotonically increasing on their inputs. So. we  conclude
    $$\E[f \cdot g] \leq \E[f] \cdot \E[g]$$
    This means
    $$\Pr[f = 1] \cdot 1 \cdot \E[g \mid f = 1] + \Pr[f = 0] \cdot 0 \cdot \E[g|f = 0] \leq \Pr[f = 1] \cdot \Pr[g = 1]$$
    The right-hand side has a zero term and by canceling out the term $\Pr[f = 1]$ from both sides we get 
    $\E[g \mid f = 1] \leq \Pr[g = 1]$ which proves \eqref{eq:condA}. Now, going over all possible realizations for $\Aset$ we get
    \begin{align*}
        \Pr[Y_v(-u') > \lambda] &= \sum_{\Aset}  \Pr[Y_v(-u') > \lambda \mid \Aset] \cdot \Pr[A = \Aset] \\
        & \geq \sum_{\Aset} \Pr[Y_v(-u') > \lambda \mid e' \in \mathcal{Q}, \Aset] \cdot \Pr[A = \Aset] \\
        & \geq \Pr[Y_v(-u') > \lambda \mid e' \in \mathcal{Q}]
    \end{align*} \qedhere
\end{proof}

\begin{lemma}\label{lemma:expweight}
For every non-crucial edge $e = (u, v)$ we have $\E[f_e] \geq (1 - \epsilon/2) \cdot x_e.$   
\end{lemma}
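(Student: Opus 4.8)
The plan is to track, for a fixed non-crucial edge $e=(u,v)$, how much the ``killing'' step (Line~\ref{line:fractionalkill}) erodes the clean identity $\E[f'_e]=\gamma x_e$ that holds for the pre-pruning matching $\bm{f'}$. Writing $\Pr[f_e\neq 0]=\Pr[f'_e\neq 0]-\Pr[e\text{ does not survive}\mid f'_e\neq 0]$ and recalling $\Pr[f'_e\neq 0]=\Pr[e\in\mc{Q}]\Pr[\{u,v\}\subset A]$ (these two events being independent by Claim~\ref{claim:approx-ind}), it suffices to show that conditioned on $f'_e\neq 0$, the probability that $u$ or $v$ fails to survive is at most an $\epsilon/2$-fraction of $1$, so that $\E[f_e]=\gamma g_e\cdot\Pr[f'_e\neq 0]\cdot(1-\text{(loss)})\geq(1-\epsilon/2)x_e$ once $\gamma$ absorbs the remaining slack. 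Recall $e$ survives iff both endpoints survive, and $v$ survives iff $v\in A$ and $Y_v/\gamma\le 1$.

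The key step is to pass from the conditional failure probability to an unconditional one. First I would union-bound over the two endpoints, so I only need to bound $\Pr[v\text{ does not survive}\mid f'_e\neq 0]$ for a single $v$. Conditioning on $f'_e\neq 0$ means conditioning on $e\in\mc{Q}$ and $\{u,v\}\subset A$. For the $A$-part, conditioning on $u\in A$ only makes it \emph{more} likely $v$ survives intuitively, but rigorously I would condition on $v\in A$ (needed for survival anyway) and handle $u\in A$ by noting $Y_v = g_e\mathds{1}_{u\in A}\mathds{1}_{e\in\mc{Q}} + Y_v(-u)$, where the first term is at most $g_e\le\epsilon^3$ by Claim~\ref{claim:geupperbound}, a negligible perturbation. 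For the $e\in\mc{Q}$-part, Lemma~\ref{lem:condeinQlowerbound} gives exactly $\Pr[Y_v(-u)>\lambda\mid e\in\mc{Q}]\le\Pr[Y_v(-u)>\lambda]$, i.e.\ conditioning on $e\in\mc{Q}$ does not inflate the tail of the rest of $Y_v$. Combining: $\Pr[Y_v/\gamma>1\mid f'_e\neq 0]\le\Pr[Y_v(-u)>\gamma-\epsilon^3]\le\Pr[Y_v>1+3\eta]$ for small enough $\epsilon$ (using $\gamma=(1-\epsilon^2)/(1+3\eta)$ so that $\gamma-\epsilon^3$ comfortably exceeds $\gamma(1+3\eta)\cdot\gamma^{-1}\cdots$ — a routine check that $\gamma-\epsilon^3\ge \gamma/(1+3\eta)^{-1}$ reduces to checking the constants in Table~\ref{table:values}), and Lemma~\ref{lem:yvclose1} bounds this by $\beta=\epsilon^2/100$.

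Putting it together: $\Pr[v\text{ does not survive}\mid v\in A, f'_e\neq 0]\le\beta$, hence by a union bound over $u,v$ the edge-survival failure probability is at most $2\beta$ conditioned on $f'_e\neq 0$. Therefore
\[
\E[f_e]\;=\;\gamma g_e\cdot\Pr[e\in\mc{Q}]\Pr[\{u,v\}\subset A]\cdot\Pr[e\text{ survives}\mid f'_e\neq 0]\;\ge\;\gamma\,x_e\,(1-2\beta),
\]
using $g_e\Pr[e\in\mc{Q}]\Pr[\{u,v\}\subset A]=x_e$ by definition of $g_e$. Finally $\gamma(1-2\beta)=\frac{1-\epsilon^2}{1+3\eta}(1-\epsilon^2/50)\ge 1-\epsilon/2$ for a sufficiently small constant $\epsilon$ (since $\eta=\epsilon/10$, the product is $1-O(\epsilon)$ with a constant strictly below $1/2$), which completes the proof.

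\textbf{Main obstacle.} The delicate point is the conditioning argument: $f'_e\neq 0$ is \emph{not} independent of $v$'s survival because $e$ itself contributes $f'_e$ to $v$'s fractional degree, and $e\in\mc{Q}$ is (negatively) correlated with the presence of $v$'s other non-crucial edges in $\mc{Q}$. The resolution is exactly the two-pronged move above — peel off the single term $g_e\mathds{1}_{u\in A}\mathds{1}_{e\in\mc{Q}}\le\epsilon^3$ using Claim~\ref{claim:geupperbound}, and kill the correlation between $e\in\mc{Q}$ and the remaining sum $Y_v(-u)$ using the negative-association fact (Claim~\ref{lem:NA_Q}) packaged as Lemma~\ref{lem:condeinQlowerbound} — after which Lemma~\ref{lem:yvclose1} does the rest. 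Getting the constant bookkeeping right so the final bound is $1-\epsilon/2$ rather than $1-C\epsilon$ for some $C\ge 1/2$ is the only other thing to be careful about, but it is routine given the values in Table~\ref{table:values}.
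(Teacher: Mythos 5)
Your proposal takes essentially the same route as the paper: peel off $e$'s own contribution via $Y_v(-u)$, kill the $e\in\mc{Q}$ correlation with Lemma~\ref{lem:condeinQlowerbound}, and invoke Lemma~\ref{lem:yvclose1} for the tail bound. The paper packages this by defining the events $E_u:\gamma Y_u(-v)\le 1-g_e$ and $E_v:\gamma Y_v(-u)\le 1-g_e$ and running a direct chain of inequalities on $\E[f_e]$, but the ideas and lemma invocations are identical.

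One concrete slip in the bookkeeping: you take the survival condition to be $Y_v/\gamma\le 1$, i.e.\ $Y_v\le\gamma$. Since $f'_e=g_e\gamma$ on the good event, the fractional degree of $v$ is $\sum f'_e=\gamma Y_v$, so survival is $\gamma Y_v\le 1$, i.e.\ $Y_v\le 1/\gamma$. (The paper's prose in Section~\ref{section:expectedweightfrac} contains the same $Y_v/\gamma$ typo, but its proof of Lemma~\ref{lemma:expweight} uses the correct $\gamma Y_v\ge 1-g_e$.) With the inverted threshold, the inequality you flag as a ``routine check,'' namely $\gamma-\epsilon^3\ge\gamma(1+3\eta)$, is actually false since it reduces to $-\epsilon^3\ge 3\gamma\eta$. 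The correct version is $1/\gamma-g_e\ge 1+3\eta$, which holds because $1/\gamma=\frac{1+3\eta}{1-\epsilon^2}\ge(1+3\eta)(1+\epsilon^2)\ge 1+3\eta+\epsilon^3$ for small $\epsilon$ (using $g_e\le\epsilon^3$ from Claim~\ref{claim:geupperbound}). With that correction the rest of your argument, including keeping the $\gamma$ factor that the paper's displayed chain silently drops, goes through and the final arithmetic gives $\gamma(1-2\beta/\delta)\ge 1-\epsilon/2$.
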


\begin{proof}
For an edge $e = (u, v)$ let us see what the contribution of this edge to the fractional matching is. The value of $f_e$ in the output of Algorithm~\ref{alg:fracmatching} will be $g_e \cdot \gamma$ if we have $e \in \mathcal{Q}$, $u$ and $v$ are in $A$, and $f_e$ is not zeroed out in Line~\ref{line:fractionalkill} due to the fractional degree of $u$ or $v$ exceeding one. Let us define the event $E_u$ being $\gamma \cdot Y_u(-v) \leq 1 - g_e$ and event $E_v$ to be $\gamma \cdot Y_v(-u) \leq 1 - g_e$. If we have $E_u = 1$ then the fractional degree of $u$ will not exceed 1 and similarity if we have $E_v = 1$ then the fractional degree of $v$ will not exceed 1. Hence we have:
\newcommand{\Overline}[1]{\hspace{0.2em}\overline{#1}\hspace{0.2em}}
\begin{align}\label{eq:ExpectedFe}
    \E[f_e] & \geq g_e \cdot \Pr[\{u, v\} \subset A \And E_u \And E_v \And e \in \mathcal{Q}]
    \nonumber
    \\
    & \geq g_e \cdot \Pr[e \in \mathcal{Q}] \cdot \Pr[\{u, v\} \subset A \And E_u \And E_v \mid e \in \mathcal{Q}] 
    \nonumber
    \\
    & \geq g_e \cdot \Pr[e \in \mathcal{Q}] \cdot (\Pr[\{u, v\} \subset A \mid e \in \mathcal{Q}] - \Pr[\Overline{E_u} \mid \Pr[e \in \mathcal{Q}]] - \Pr[\Overline{E_v} \mid \Pr[e \in \mathcal{Q}]]) 
    \nonumber
    \\
    & \geq g_e \cdot \Pr[e \in \mathcal{Q}] \cdot (\Pr[\{u, v\} \subset A \mid e \in \mathcal{Q}] - \Pr[\Overline{E_u}] - \Pr[\Overline{E_v}]) 
    \nonumber
    \\
    & \geq g_e \cdot \Pr[e \in \mathcal{Q}] \cdot (\Pr[\{u, v\} \subset A] - \Pr[\Overline{E_u}] - \Pr[\Overline{E_v}])
\end{align}
The second to last step is applying \cref{lem:condeinQlowerbound} with $\lambda = \frac{1 - g_e}{\gamma}$ for the edge $e$ on both sides. The last step is due to the set $A$ being independent of the subgraph $\mathcal{Q}$ as a result of Claim~\ref{claim:approx-ind}.

To continue, we need upper-bounds for $\Pr[\Overline{E_u}]$ and  $\Pr[\Overline{E_v}]$. Due to Lemma~\ref{lem:yvclose1} we have $$\Pr\Big[\frac{Y_v}{1 + 3 \eta} \geq 1 \Big] \leq \beta,$$
Which implies
\begin{equation}\label{eq:num13}
\Pr\Big[\frac{Y_v}{1 + 3 \eta} \cdot (1 - g_e) \geq 1 - g_e \Big] \leq \beta.
\end{equation}
Hence we get
\begin{align*}
    \Pr[\Overline{E_u}] &\leq \Pr[\gamma \cdot Y_u(-v) \geq 1 - g_e] && \\
    &\leq \Pr[\gamma \cdot Y_u \geq 1 - g_e] \\
    &= \Pr[\frac{1 - \epsilon^2}{1 + 3 \eta} \cdot Y_u \geq 1 - g_e] \\
    & \leq \Pr[\frac{1 - g_e}{1 + 3 \eta} \cdot Y_u \geq 1 - g_e] && 
    \text{(due to \cref{claim:geupperbound} we have $g_e \leq \epsilon^2$)} \\
    & \leq \beta &&
    \text{(due to Equation~\ref{eq:num13})}\\
\end{align*}
Combining this with \eqref{eq:ExpectedFe} we get:
\begin{align*}
    \E[f_e] &\geq g_e \cdot \Pr[e \in \mathcal{Q}] \cdot (\Pr[\{u, v\} \subset A] - \Pr[\Overline{E_u}] - \Pr[\Overline{E_v}]) \\
    & \geq \frac{x_e}{\Pr[e \in \mathcal{Q}]\cdot\Pr[\{u, v\} \subset A]} \cdot \Pr[e \in \mathcal{Q}] \cdot (\Pr[\{u, v\} \subset A] - \Pr[\Overline{E_u}] - \Pr[\Overline{E_v}]) \\
    & \geq \frac{x_e}{\cdot\Pr[\{u, v\} \subset A]} \cdot  (\Pr[\{u, v\} \subset A] - \Pr[\Overline{E_u}] - \Pr[\Overline{E_v}]) \\
    & \geq \frac{x_e}{\cdot\Pr[\{u, v\} \subset A]} \cdot  (\Pr[\{u, v\} \subset A] - 2 \beta) \\
    & \geq x_e \cdot (1 - \frac{2\beta}{\Pr[\{u, v\} \subset A]}) \\
    & \geq x_e \cdot (1 - \frac{\epsilon^2}{50\delta})
\end{align*}
For a sufficiently small (i.e., $\epsilon\leq 25\delta$), we get $\frac{\epsilon^2}{50\delta} \leq \epsilon/2$ completing the proof. \qedhere
\end{proof}

\subsection{Proof of Lemma~\ref{lem:reduction} (The Reduction)}\label{sec:reductionproof}
In this section, we will put all the pieces together to formally prove Lemma~\ref{lem:reduction}. Let $Q$ be the subgraph outputted by Algorithm~\ref{alg:query}. 
We prove that the existence of an  $\alpha$-approximation variance-bounding matching algorithm means $\mc{Q}$, the realization of $Q$, contains a $\frac{1}{2-\alpha}-\epsilon$ approximate solution. Since $Q$ is the union of $t=\frac{1}{\tau \epsilon}$ matchings, plugging in the value of $\tau={p\epsilon^5 \delta^2}$ from Table~\ref{table:values} implies $Q$ has max-degree $O_{\epsilon}(1/p)$. Therefore proving that $\mc{Q}$ contains a $\frac{1}{2-\alpha}-\epsilon$ approximate solution proves this lemma.

Let $M_c$ and $A$ be the outputs of the $\alpha$-approximation variance-bounding algorithm on inputs specified in Definition~\ref{def:input}. Recall that $M_c$ is a matching on the crucial edges and $A$ is a subset of vertices unmatched in $M_c$. Let $\sigma$ be the ratio of the optimal solution that comes from the crucial edges. That is
$$\sigma=\frac{\sum_{e\in C} \Pr[e\in \opt] w_e}{\weight(\opt) }.$$
Due to Claim~\ref{claim:approx-ind}, we know that the expected weight of $M_c$ is $\alpha\sigma$ fraction of the optimal solution. Furthermore, we showed in Claim~\ref{claim:allcrucial} that any crucial edge belongs to $Q$ with probability at least $(1-\epsilon)$. As a result we have \begin{equation}\label{eq:nierfjerf}
    \E[\weight(M_c\cap \mc{Q})]\geq (1-\epsilon)\alpha\sigma\weight(\opt).
\end{equation}

The next step is to use the non-crucial edges among vertices in $A$ to augment $M_c\cap \mc{Q}$. In Lemma~\ref{lemma:expweight}, we prove that it is possible to construct a fractional matching $\bm{f}$ on the non-crucial edges among vertices in $A$ such that for any non-crucial edge $$\E[f_e]\geq (1-\epsilon/2)\Pr[e\in \opt].$$ Hence, $\E[\bm{f}\bm{w}]\geq (1-\sigma) (1-\epsilon/2)\weight(\opt).$ As a result of Lemma~\ref{lemma:integrhjk} it is possible to round $\bm{f}$  and achieve an integral matching $M_n$ such that
\begin{equation}\label{eq:koqwpmvgy3}
   \E[\weight(M_n)]\geq (1-\epsilon/2)(1-\sigma) (1-\epsilon/2)\weight(\opt)\geq (1-\epsilon)(1-\sigma)\weight(\opt).
\end{equation}
Putting \cref{eq:nierfjerf} and \cref{eq:koqwpmvgy3} together implies the existence of a matching in $\mc{Q}$ with expected weight at least
$$(1-\epsilon)\alpha\sigma\weight(\opt)+ (1-\epsilon)(1-\sigma)\weight(\opt)  = (1-\epsilon)\weight(\opt)(1-\sigma+\alpha\sigma).$$

We claim that the best of this matching  and simply taking the max-weight matching among the crucial edges of $\mc{Q}$ gives us the desired approximating ratio. Since each crucial edge belongs to $Q$ w.p. at least $1-\epsilon$, its realization contains a matching with expected weight at least $(1-\epsilon)$ times the contribution of crucial edges to the optimal solution which is $(1-\epsilon)\sigma\weight(\opt).$ The best of these two solutions gives us the approximation ratio of at least $$(1-\epsilon)\cdot\max(\sigma, 1-\sigma+\alpha\sigma)\geq (1-\epsilon)\frac{1}{2-\alpha}\geq \frac{1}{2-\alpha}-\epsilon.$$
Hence, this implies that the realization of subgraph $Q$ with max-degree $O_{\epsilon}(1/p)$ contains a $(\frac{1}{2-\alpha}-\epsilon)$- approximate solution completing the proof of Lemma~\ref{lem:reduction}.

\section{A Variance-Bounding Matching Algorithm}\label{secvariancebounding}
In this section, we discuss the existence of an $\frac{8}{15}$-approximation variance-bounding matching algorithm.
\begin{lemma}[Variance-bounding Matching Lemma]\label{lem:vertexindependent}
There exists an $8/15$-approximation variance-bounding matching algorithm (as defined in \Cref{def:variance-bounding}).
\end{lemma}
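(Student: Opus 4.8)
\textbf{Proof plan for Lemma~\ref{lem:vertexindependent} (existence of an $8/15$-approximation variance-bounding matching algorithm).}

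The plan is to take the randomized algorithm of Fu~et~al.~\cite{Fu} for online stochastic matching, adapt it to produce both a matching $M_c$ on the realized graph $\mc{H}$ and a set $A$ of unmatched vertices, and then verify the four properties of Definition~\ref{def:variance-bounding} one at a time. I would first fix a random permutation $\pi$ of the vertices (or of the edges) and, alongside, the independent coin flips the Fu et al. algorithm uses when it decides whether to ``probe''/commit an edge; together with the independent edge-realization bits of $\mc{H}$ these form the underlying independent random variables that will feed the Efron--Stein argument later. The algorithm processes edges in the order dictated by $\pi$, committing an edge to $M_c$ with the (state-dependent) probability prescribed by Fu~et~al.\ so as to guarantee that each edge of $M_{\mc O}$ is matched with probability at least $8/15$ of its marginal, which yields Property~\ref{viprop:approx}. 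For the set $A$, I would put a vertex $v$ into $A$ exactly when $v$ is left unmatched by $M_c$; Property~\ref{viprop:Asideprob} ($\Pr[v\in A]\ge \Pr[v\notin M_{\mc O}]$) then needs the inequality $\Pr[v\text{ unmatched by }M_c]\ge \Pr[v\text{ unmatched by }M_{\mc O}]$, which should follow because the Fu~et~al.\ rounding matches every vertex with probability at most its marginal in $M_{\mc O}$ (the algorithm only ever ``wastes'' matching opportunities relative to the LP/ offline optimum).

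Next I would handle Property~\ref{viprop:independence}: for two vertices $u,v$ with no edge between them in $H$, I want $\Pr[\{u,v\}\subset A]\ge\delta$ for some absolute constant $\delta$. The natural route is to show each vertex is in $A$ with probability bounded below by a constant (every vertex is matched with probability at most some $1-c_0<1$, since marginals are at most $1$ and the Fu et al.\ process loses a constant fraction), and then to argue a FKG/negative-association-type inequality so that the two ``unmatched'' events are positively correlated or at least not too anti-correlated; since $u,v$ share no edge, the only coupling between their fates is through common neighbors, and a short argument bounding the pairwise interaction gives a constant lower bound on the joint probability. I would set $\delta$ to be whatever explicit constant this yields (it only needs to be a fixed constant in $(0,1)$, consistent with its use in Table~\ref{table:values}).

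The crux is Property~\ref{viprop:variance}: bounding $\var(Z_v)\le 6\tau/\delta^2$ where $Z_v=\sum_{e=(u,v)\in\overline E}\frac{x_e}{\Pr[\{u,v\}\subset A]}\mathds 1_{u\in A}$ and $\bm x$ is an arbitrary fractional matching on $\overline H$ with $x_e\le\tau$. Here I would invoke the Efron--Stein inequality (Proposition~\ref{prop:efron-stein}) with the independent coordinates being the edge-realization bits of $\mc{H}$, the permutation $\pi$ (encoded as independent uniforms), and the algorithm's internal coins. The key structural fact to establish is a \emph{bounded-influence} property: resampling a single coordinate changes the indicator $\mathds 1_{u\in A}$ for only a controlled set of vertices $u$, and, crucially, for each such $u$ the change is ``local'' enough that $\sum_u \frac{x_e}{\Pr[\{u,v\}\subset A]}$ over the affected $u$'s is $O(\tau/\delta)$ in expectation — this is exactly where $x_e\le\tau$ (so no single edge contributes more than $\tau/\delta$) and the fact that $\bm x$ is a fractional matching (so $\sum_{e\ni v} x_e\le 1$, keeping the total bounded) get used. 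Concretely, flipping the realization bit of one crucial edge, or swapping two adjacent elements in $\pi$, perturbs the Fu~et~al.\ execution only along a short ``alternating'' chain, so $\bigl(Z_v-Z_v^{(i)}\bigr)^2$ is bounded by (sum of the weights $x_e/\delta$ along that chain)$^2$, and taking expectations and summing over the $O(\text{poly})$ coordinates that can possibly touch $N(v)$ collapses to $O(\tau/\delta^2)$ by a Cauchy--Schwarz / matching-constraint argument. I expect this influence-bounding step — showing that a single resampled coordinate only disturbs a ``matching-weighted'' amount $O(\tau/\delta)$ worth of neighbors of $v$, uniformly — to be the main obstacle, since it requires a careful stability analysis of the Fu~et~al.\ procedure under single-coordinate perturbations; everything else is bookkeeping around constants. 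Finally, I would plug the resulting $\alpha=8/15$ into Lemma~\ref{lem:reduction} (already done in the main-theorem proof) to conclude, but the lemma itself is proved once all four properties are verified for the constructed algorithm.
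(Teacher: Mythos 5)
Your high-level plan (run a Fu-et-al.-style RCRS, verify the four properties, use Efron--Stein for the variance bound) matches the paper's strategy, but there is one critical design decision where your proposal diverges and, as stated, would break.

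You define $A$ to be the set of vertices left \emph{unmatched} by $M_c$. The paper instead defines $A$ to be the set of vertices that \emph{never had an active edge} at any point during the algorithm (a strict subset of the unmatched vertices). This distinction is not cosmetic: it is exactly what makes the Efron--Stein argument work. The paper introduces, for each vertex $u$, an influential random variable $X_u$ equal to the identity of the active edge chosen upon $u$'s arrival (or $\emptyset$); for a fixed permutation these are independent because the batches $E_{u,\pi}$ are disjoint, and $\mathds 1_{u\in A}$ is a purely \emph{local} function of them: $u\in A$ iff $X_u=\emptyset$ and no $w$ has $X_w=u$. Resampling a single $X_u$ (from $w$ to $w'$) can therefore change $A$-membership only for the three vertices $u,w,w'$, so $|Z_v-Z_v^{(u)}|\le h_{(u,v)}+h_{(w,v)}+h_{(w',v)}$, and the rest is an explicit Cauchy--Schwarz/fractional-matching computation.

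With your choice of $A$ that locality fails. In a greedy batched RCRS, changing one vertex's active edge can cascade: if $(u,w)$ was matched and you resample so that $u$ has no active edge, $w$ becomes free, a later arrival may now match to $w$ and free up yet another vertex, and so on. The set of unmatched vertices can therefore change at $\Omega(n)$ positions after a single resample, and your appeal to a ``short alternating chain'' is unsupported (and false in general). Without some further argument you would not get the $O(\tau/\delta^2)$ variance bound. If you wanted to retain $A=$ (unmatched vertices) you would need a genuinely different stability argument; the paper sidesteps this entirely by choosing $A$ so that membership is determined coordinatewise. (Two smaller discrepancies: the RCRS processes \emph{vertices}, not edges, in random order, and the paper fixes the permutation $\pi$ and runs Efron--Stein only over the $X_u$'s, not over $\pi$ or the realization bits; both choices simplify the independence bookkeeping.)
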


To prove this lemma, we will design  an algorithm which achieves the properties of a variance bounding matching algorithm discussed in \Cref{def:variance-bounding}.
One of the main tools we employ in our design  is an algorithm designed for online matching due to Fu, Tang, Wu, Wu, and Zhang ~\cite{Fu}. We will refer wto this algorithm as FTWWZ (the authors' initials). Below, we briefly review the main components of their work to the extent required for presenting our algorithm and results.

\paragraph{Batched RCRS:} \label{sec:Fualgorithm} In our algorithm, we will use a batched  random order contention
resolution scheme (RCRS) in the FTWWZ algorithm for the following online matching problem. We are given a graph $G=(V,E)$ along with a fractional matching $\bm{y}$ on the graph. The graph is revealed in an online manner as follows.  Vertices arrive in a uniformly random order given by a permutation $\pi$. Upon the arrival of a vertex $v$, the status of the edges connecting $v$ to vertices before it (i.e., all vertices $u$ that $\pi_u < \pi_v$)  are revealed, namely a \emph{batch} of edges. 
Then at most one of the edges in the batch becomes \emph{active} such that $$\Pr[e \text{ becomes active}] =  y_e$$ for any edge $e$ in  the batch. A batched RCRS decides, upon the arrival of each vertex, irrevocably whether to select the active edge (if any exists). At any point in time, the selected edges must form a matching. Given a parameter $\alpha$ a batched RCRS is called $\alpha$-selectable if it picks each active element with probability at least $\alpha$. FTWZZ provides a simple greedy algorithm which is $\frac{8}{15}$-selectable. The algorithm starts by  modifying the graph and then greedily matches any active edge if its endpoints are unmatched. One of these modifications is lowering the active probability of each edge using a function \begin{equation}\label{g(yv)}
    g(y_e)=\frac{3y_e}{3+2y_e}.
\end{equation} Below is the main result we use from FTWZZ's work.

\begin{proposition}\label{prop:fu}
   If in the above-mentioned setting, each edge becomes active with probability $g(y_e)=\frac{3y_e}{3+2y_e}$ then it is possible to design an RCRS for constructing a matching which selects each edge w.p. at least $\frac{8}{15}y_e$.
\end{proposition}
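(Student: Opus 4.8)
The plan is to deduce Proposition~\ref{prop:fu} from the batched--RCRS analysis of FTWWZ~\cite{Fu}: the ``above-mentioned setting'' is precisely their model, so the statement is a direct instantiation of their main theorem, and for completeness I would also recall the short argument behind it. The scheme is the natural greedy one. After the activation probabilities have been set to $g(y_e)=\frac{3y_e}{3+2y_e}$ (together with whatever additional preprocessing \cite{Fu} applies), the vertices are processed in the random order $\pi$, and when the arriving vertex $v$ has an active edge $e=(u,v)$ in its batch, $e$ is added to the output iff $u$ is currently unmatched; note that $v$ itself is automatically unmatched at this instant, since no edge incident to $v$ has ever been examined before. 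So the entire question is whether the \emph{other} endpoint of an active edge is free when the later of its two endpoints arrives.

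First I would fix an edge $e=(u,v)$ and, by symmetry, condition on $v$ being the later endpoint, so that $e$ lies in $v$'s batch. Then $e$ is selected exactly when $e$ is the (unique) active edge of that batch and $u$ is free at time $\pi_v$. The key structural point is that, in the deferred-decisions view of the random-order model, the activation inside $v$'s batch is a function only of the edges incident to $v$, whereas the event ``$u$ is free at $\pi_v$'' is determined by the arrivals and activations among vertices and edges \emph{not} incident to $v$ up to time $\pi_v$; hence the two events are conditionally independent given the arrival order. Consequently $\Pr[e\text{ selected}]=g(y_e)\cdot q_e$, where $q_e$ is the probability --- averaged over the arrival times and over which endpoint is later --- that the partner of $e$ is free at the moment $e$ is examined. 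Since $g(y_e)=\frac{3y_e}{3+2y_e}$, the target bound $\Pr[e\text{ selected}]\ge\frac{8}{15}y_e$ is equivalent to $q_e\ge\frac{8}{15}+\frac{16}{45}y_e$. The point of this reformulation is that the partner, say $u$, competes only against $u$'s \emph{other} incident edges, whose total fractional mass is at most $1-y_e$; so $u$'s survival probability increases as $y_e\to1$, which is exactly what lets the right-hand side grow with $y_e$ instead of being the flat constant $\frac{8}{15}$.

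Next I would establish the core estimate: a lower bound on $\Pr[w\text{ free at time }t\mid\pi_w=s]$ for a vertex $w$ with total incident activation mass $q_w:=\sum_{e'\ni w}g(y_{e'})\le1$, the bound being decreasing in $q_w$ so that one may plug in $q_w\le1-y_e$ when $w$ is the partner of $e$. The standard route is a differential/recursive estimate: the instantaneous rate at which $w$ ceases to be free at time $t$ is, up to the usual correlation bookkeeping, $\Pr[w\text{ free at }t]\cdot\sum_{e'=(w,w')}g(y_{e'})\cdot\Pr[w'\text{ free at }t]$, and feeding the same lower bound back in for the neighbours' freeness yields a self-referential inequality; integrating it against the uniform arrival times gives the claimed bound on $q_e$. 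The specific function $g$ is reverse-engineered precisely so that this chain of inequalities closes: one checks that selectability is extremal on a simple explicit one-parameter family of instances (a ``balanced'' star or path in which all fractional mass is equal), on which the recursion collapses to a one-dimensional fixed-point equation whose value is $\frac{8}{15}$ in the $y_e\to0$ regime and strictly larger otherwise.

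The hard part will be closing that coupled recursion cleanly: a vertex's freeness depends on its neighbours' freeness, which depends on theirs, so the estimate is not a priori a scalar differential equation. FTWWZ circumvent this with a worst-case-instance reduction --- showing that the selectability of an arbitrary instance is at least that of a canonical family --- after which the recursion is genuinely one-dimensional and can be solved or verified by direct substitution. A secondary, more bookkeeping-type obstacle is making the two ``soft'' steps fully rigorous: (i) the conditional independence of ``$e$ is the active edge of its batch'' from ``the partner is free'', which requires the deferred-decisions formalization of the random-order process (together with the observation that conditioning on the partner arriving strictly earlier only helps its survival); and (ii) the passage from the discrete random permutation to the continuous arrival-time model, or equivalently controlling the $O(1/n)$ discretization error, which is routine but must be stated. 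If one prefers not to reprove \cite{Fu}, the entire argument reduces to verifying that our batched RCRS meets their hypotheses verbatim and then citing their theorem, which is the route I would actually take in the paper.
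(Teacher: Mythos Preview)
The paper does not prove Proposition~\ref{prop:fu} at all: it is stated purely as a black-box result imported from FTWWZ~\cite{Fu}, with no argument beyond the one-sentence description that the scheme lowers activation probabilities to $g(y_e)$ and then matches greedily. Your closing sentence---``the entire argument reduces to verifying that our batched RCRS meets their hypotheses verbatim and then citing their theorem, which is the route I would actually take in the paper''---is therefore exactly what the paper does, and is the correct disposition.

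The rest of your proposal goes well beyond the paper by sketching how the FTWWZ bound itself is obtained. That sketch is in the right spirit (factor into activation times partner-freeness, then lower-bound freeness via a recursion that is tight on an extremal instance), but two cautions if you ever flesh it out. First, FTWWZ's actual argument also involves a vertex-splitting preprocessing step to drive all $y_e$ toward~$0$, which is what makes the recursion close cleanly; the paper hints at this with ``one of these modifications,'' and your hedge ``together with whatever additional preprocessing \cite{Fu} applies'' covers it, but the greedy-plus-$g$ description alone is not the full scheme. Second, your conditional-independence claim (activation in $v$'s batch independent of $u$'s freeness given the order) is morally right in the split instance but needs the preprocessing to hold exactly, since in the original graph the batch activation can be correlated across edges via the at-most-one-active constraint. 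None of this affects the present paper, which simply cites the result.
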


We state our variance-bounding matching algorithm formally in Algorithm~\ref{alg:Fu-modified}. The  algorithm starts by drawing a  random permutation $\pi$ over the vertices uniformly at random. We then let the vertices arrive in the order given by this permutation. Upon arrival of a vertex $v$, we look at the realization of its edges to the vertices with smaller $\pi_u$. Then, a random process decides which one of its edges (if any) becomes active. We explain  this process in \Cref{def:Fuv}. The process is designed in a way that the probability of each edge becoming active is $g(\Pr[e\in M_{\mc{O}}])$ where $M_{\mc{O}}$ is the random matching in the statement of Lemma~\ref{lem:vertexindependent}. Proposition~\ref{prop:fu} implies that using  FTWZZ's RCRS we can  construct a matching $M_c$ which $\frac{8}{15}$-approximates  $M_{\mc{O}}$. We then let set $A$ be the set of all the vertices that do not have an active edge throughout the algorithm. Note that any vertex in $A$ is also unmatched since only active edges can join  the matching. However, there may be some unmatched vertices that are not in $A$.

\begin{definition}(Edge Activation Process)\label{def:Fuv} The activation probability of the edges in this process comes from matching $M_{\mc{O}}$  in the statement of 
Definition~\ref{def:variance-bounding}. Recall that $M_{\mc{O}}$ is a matching on the realized crucial edges. See Definition~\ref{def:input} for what we set $M_{\mc{O}}$ to.

First, let us define 
\begin{equation}\label{eq:yvjkrgjrg}
    y_e= \Pr[e\in M_{\mc{O}}].
\end{equation}
Here, the randomization comes from the realization of edges in $\mathcal{H}$ and the algorithm for finding $M_\mc{O}$. Note that $\bm{y}$ is a fractional matching since each vertex  joins $M_{\mc{O}}$ with probability at most one. Moreover, define set $E_{v, \pi} = \{u \in E : \pi_u < \pi_v\}$ to be all of $v$'s  edges to vertices $u$ with $\pi_u < \pi_v$. After looking  at the realization of all these edges let $y'_e$ be the probability of $e$ being in $M_{\mc{O}}$ conditioned on the realization of $E_{v, \pi}$. That is 
\begin{equation}\label{eq:kjrkjfr}
    y'_e= \Pr[e\in M_{\mc{O}} \mid  \text{realization of edges  in } E_{v, \pi}].
\end{equation}
We activate at most one of the realized edges at random such that the probability of any realized edge $e$ being the active one is  
 $\frac{3y'_e}{3 + 2y_e}$. This is possible since $y'_e$ of the realized edges sums up to at most one.
\end{definition}

\newcommand{\invperm}{\ensuremath{{\sigma}^{-1}}\xspace}

We now have all the required tools to design our algorithm (stated below) which we claim satisfies the properties stated in Lemma~\ref{lem:vertexindependent}. 

\begin{tboxalg2e}{Variance-bounding  Matching on $H=(V, E)$}
\begin{algorithm}[H]\label{alg:Fu-modified}
        
	\DontPrintSemicolon
	\SetAlgoSkip{bigskip}
	\SetAlgoInsideSkip{}
	$M_c \gets \emptyset$ and  $A \gets \emptyset$.\\
	Let $\pi$ be a permutation over vertices in $V$ picked uniformly at random.\\
	\For{$v\in V$ in the order of $\pi$}{
           
            Let $E_{v, \pi} = \{u\in E : \pi_u < \pi_v\}$ be all of $v$'s  edges to vertices $u$ with $\pi_u < \pi_v$.\\
            At most one edges $e\in  E_{v, \pi}$ becomes active as described by the edge activation process~\ref{def:Fuv} according to $M_{\mc{O}}$.  \\
		\If{$e=(v, u)$ becomes active and $u$ is unmatched in $M_c$}{  
                FTWZZ's  RCRS from Propostion~\ref{prop:fu} decides whether $e$ joins matching $M_c$ or not.  
            }
	}
        \emph{Kill} all the vertices who had at least one active edge at some point in the algorithm.
        \\
        Let $A$ be the set of remaining alive vertices.
        \\
	\Return $M_c$ and $A$.
\end{algorithm}
\end{tboxalg2e}

\vspace{ 4mm}

\begin{claim}\label{obs:activelessthanx_e}
  For any permutation $\pi$ in Algorithm~\ref{alg:Fu-modified}, the probability of any edge $e$ becoming active is  $g(\Pr[e\in M_{\mc{O}}])$.
\end{claim}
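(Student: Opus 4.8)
The plan is to unfold the definition of the edge activation process and compute the total probability that a fixed edge $e=(v,u)$ becomes active, marginalizing over the random permutation and over the realizations of $v$'s back-edges. Assume without loss of generality that $\pi_u < \pi_v$, so that when $v$ arrives, the edge $e$ is in the batch $E_{v,\pi}$. First I would note that, conditioned on a fixed permutation $\pi$ and on the realization of all edges in $E_{v,\pi}$, Definition~\ref{def:Fuv} assigns edge $e$ activation probability $\frac{3 y'_e}{3+2y_e}$, where $y'_e = \Pr[e\in M_{\mc{O}}\mid \text{realization of } E_{v,\pi}]$ and $y_e = \Pr[e\in M_{\mc{O}}]$. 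Crucially, $e$ can only be active if it is realized, and the conditional probability $y'_e$ already incorporates the event that $e$ is realized (it is $0$ when $e$ is not realized); so summing $\frac{3y'_e}{3+2y_e}$ over the randomness of the back-edge realizations gives $\frac{3}{3+2y_e}\,\E[y'_e] = \frac{3}{3+2y_e}\cdot y_e = g(y_e)$, using the tower rule $\E[y'_e]=y_e$.

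The one subtlety to address is that this computation was performed conditioned on $\pi_u<\pi_v$, i.e., conditioned on $e$ being present in $v$'s batch rather than $u$'s batch. So the next step is to handle the symmetric case: when $\pi_v<\pi_u$, the edge $e$ appears in $u$'s batch instead, and the same argument with the roles of $u$ and $v$ swapped shows that, conditioned on that permutation event, $e$ becomes active with probability $g(y_e)$ as well (the activation process is run at whichever endpoint arrives later, and the expression $\frac{3y'_e}{3+2y_e}$ is symmetric in how it treats $e$). Since in every permutation exactly one of $\pi_u<\pi_v$ or $\pi_v<\pi_u$ holds, and in each case the conditional activation probability of $e$ is exactly $g(y_e)$, the law of total probability over $\pi$ gives that the unconditional probability of $e$ becoming active is $g(y_e) = g(\Pr[e\in M_{\mc{O}}])$, which is the claim.

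I would also remark briefly on the well-definedness of the process, since the claim is stated "for any permutation $\pi$": one must check that the assigned activation probabilities $\frac{3y'_e}{3+2y_e}$ over the realized edges in a batch sum to at most one, so that there is indeed a valid distribution putting mass $\frac{3y'_e}{3+2y_e}$ on each realized edge and the rest on "no active edge." This holds because $\frac{3y'_e}{3+2y_e}\le y'_e$ (as $y_e\le 1$ implies $3+2y_e\ge 3$, wait—more simply $\frac{3}{3+2y_e}\le 1$), and the $y'_e$ of the realized edges sum to at most one since $\bm{y}'$ restricted to the batch is dominated by a fractional matching at $v$; this is exactly the point already noted in Definition~\ref{def:Fuv}.

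The main obstacle is conceptual rather than computational: making sure the conditioning is set up correctly so that $y'_e$ is genuinely the realization-conditioned probability (hence averages back to $y_e$) and that the event "$e$ realized" is not double-counted when passing from $y'_e$ to $g(y_e)$. Once that is pinned down, the proof is a one-line application of the tower rule inside each permutation class, followed by averaging over $\pi$. No concentration or structural graph argument is needed here.
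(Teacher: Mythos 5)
Your proposal is correct and follows essentially the same route as the paper's proof: fix the permutation, condition on the realization of the batch, observe that the assigned probability $\frac{3y'_e}{3+2y_e}$ averages to $\frac{3y_e}{3+2y_e}=g(y_e)$ by the tower rule since $\E_{\mc{E}}[y'_e]=y_e$. The additional discussion of the symmetric case ($\pi_v<\pi_u$) and of well-definedness is harmless but not needed for the paper's formulation, which fixes $\pi$ (so exactly one ordering holds and the same computation applies verbatim at whichever endpoint arrives second).
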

\begin{proof} Fix a vartex $v\in  A$, and let $\mc{E}$ be the realization of edges in  $E_{v, \pi}$ when $v$ arrives. Recall $y_v$ and $y'_v$ which are defined as
$$y_v = \Pr[e\in M_{\mc{O}}] \;\;\;\;\; \text{ and } \;\;\;\;\; y'_v = \Pr[e\in M_{\mc{O}}\mid \mc{E}].$$
Due to the edge activation process we have $$\Pr[e \text{ becomes active}]= \sum_{\mc{E}} \frac{3y'_e}{3+2y_e}\Pr[\mc{E}] = \sum_{\mc{E}} \frac{3 \Pr[e\in M_{\mc{O}}\mid \mc{E}]\Pr[\mc{E}]}{3+2\Pr[e\in M_{\mc{O}}]} = \frac{3 \Pr[e\in M_{\mc{O}}]}{3+2\Pr[e\in M_{\mc{O}}]}.$$ Since this equals $g(\Pr[e\in M_{\mc{O}}])$, the proof of this claim is complete.
\end{proof}

\paragraph{Proof of Property \ref{viprop:approx}.} Consider matching $M_c$  outputted by  Algorithm~\ref{alg:Fu-modified}.  The first property of Lemma~\ref{lem:vertexindependent}  requires $M_c$ to be a $\frac{8}{15}$ approximate matching with  respect to $M_{\mc{O}}$. This is an immediate corollary of Proposition~\ref{prop:fu} 
since we proved in Claim~\ref{obs:activelessthanx_e} that each edge will be activated with probability $g(\Pr[e\in M_{\mc{O}}])$, and Proposition~\ref{prop:fu} states that in this case, FTWZZ's RCRS selects any  edge with probability $\frac{8\Pr[e\in M_{\mc{O}}]}{15}$. Hence using this RCRS in the algorithm results in satisfying Property \ref{viprop:approx}.

\paragraph{Proof of Property \ref{viprop:Asideprob}.} Consider subset $A$ outputted by  Algorithm~\ref{alg:Fu-modified}. The second property requires $\Pr[v\in A] \geq \Pr[v \notin M_{\mc{O}}]$. This claim also holds due to Claim~\ref{obs:activelessthanx_e}. A vertex $v$  does not  joins set $A$ iff it has an active edge at some point in the algorithm. This happens with probability at most $\Pr[v\in M_{\mc{O}}]$ as a result $v$ joins $A$ with probability at least $\Pr[v \notin M_{\mc{O}}]$. Hence, we proved Algorithm~\ref{alg:Fu-modified} satisfies the first two properties.

\paragraph{Proof of Property \ref{viprop:independence}.} The third property requires any two vertices $u$ and $v$ that do not share an edge in the input graph to be in $A$ at the same time w.p. at least an absolute constant $\delta\geq  0$. Due to proof of this property being lengthy we provide it in Claim~\ref{claim:bothalive}.

Before discussing the final property, we will define a set of random variables that together can determine the value of $Z_v$ and importantly are independent from each other. 

\begin{definition}[Influential Random Variables $\{X_1, \dots X_n\}$]\label{def:randomvaljj}
    Consider a run of Algorithm~\ref{alg:Fu-modified} with a fixed permutation $\pi$. For any vertex $u$ we define random variable $X_u$ as follows. If upon arrival of $u$, it has an active edge $e=(u,w)$ then $X_u=w$. Otherwise $X_u=\emptyset$.
\end{definition}

Note that knowing the influential random variables uniquely determines set $A$, as they collectively contain the information regarding which edges become active during the algorithm. A vertex joins set $A$ if it has no active edges during the algorithm. Therefore, by knowing the influential random variables, we have complete knowledge of set $A$. Importantly, these variables exhibit a crucial feature: they are independent of each other. This independence plays an important role in the analysis of Property~\ref{viprop:variance}. We provide the proof of this independence in the claim below.

\begin{claim}
    \label{obs:Fu_properties} 
 The influential random variables  $X_1, \dots, X_n$ defined according to Definition~\ref{def:randomvaljj} for vertices in $V$ are independent.
\end{claim}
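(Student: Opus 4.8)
## Proof Proposal for Claim~\ref{obs:Fu_properties}

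The plan is to exhibit, for each vertex $u$, a ``block'' of underlying independent randomness that by itself determines $X_u$, and then to argue that these blocks are built from pairwise disjoint pieces of the independent randomness driving Algorithm~\ref{alg:Fu-modified}, so that the $X_u$'s are mutually independent. First I would fix the permutation $\pi$ (as in the claim statement) and enumerate the sources of randomness in a run of the algorithm: the edge-realization indicators $\{\mathds{1}_{e\in\mc{H}}\}_{e\in C}$, which are mutually independent by the hypothesis of Definition~\ref{def:variance-bounding}, together with the internal coins used by the activation process of Definition~\ref{def:Fuv} upon the arrival of each vertex, which are fresh and independent of everything else. The one point that needs care here is that the quantities $y_e=\Pr[e\in M_{\mc{O}}]$ and $y'_e=\Pr[e\in M_{\mc{O}}\mid \text{realization of }E_{v,\pi}]$ are \emph{not} additional random inputs: $y_e$ is a fixed number, and for any fixed realization of $E_{v,\pi}$ the value $y'_e$ is a fixed conditional probability, so the activation step at $v$ draws its randomness only from the realizations of $v$'s backward edges together with its own coin.

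Next I would observe that, with $\pi$ fixed, the backward-edge set $E_{u,\pi}=\{(u,w)\in C:\pi_w<\pi_u\}$ is determined, and that by Definition~\ref{def:Fuv} the activation event at $u$ depends only on the realizations of the edges in $E_{u,\pi}$ and on $u$'s own activation coin --- not on the current state of $M_c$, not on which vertices are already in $A$, and not on the realization of any edge not incident to $u$. (The RCRS decision of Proposition~\ref{prop:fu}, whether an active edge actually joins $M_c$, plays no role in the \emph{definition} of $X_u$.) Hence $X_u$, the other endpoint of $u$'s active edge or $\emptyset$, is a deterministic function of the block
\[
B_u:=\bigl(\{\mathds{1}_{e\in\mc{H}}:e\in E_{u,\pi}\},\ \text{activation coin of }u\bigr).
\]

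The key combinatorial observation, which I would verify last, is that the blocks $\{B_u\}_{u\in V}$ are assembled from pairwise disjoint portions of the underlying independent randomness: an edge $e=(a,b)$ with $\pi_a<\pi_b$ lies in $E_{b,\pi}$ only --- it is a \emph{forward} edge of $a$ --- so each indicator $\mathds{1}_{e\in\mc{H}}$ appears in exactly one block, and the activation coins are distinct across vertices and independent of all edge realizations. Since the full family of edge indicators and activation coins is mutually independent, any collection of functions of pairwise-disjoint subfamilies is mutually independent; applying this with the functions $X_u=X_u(B_u)$ gives that $X_1,\dots,X_n$ are independent. I do not anticipate a real obstacle --- the argument is essentially bookkeeping --- but the subtle point that must be stated carefully is exactly the one flagged above: that the activation weights $y_e,y'_e$ inject no new randomness and that the activation rule at each vertex is ``context-free'' (independent of the matching built so far and of $A$ so far), which is precisely what prevents the blocks from overlapping.
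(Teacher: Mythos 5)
Your argument is correct and follows the same route as the paper: for a fixed permutation $\pi$, each $X_u$ is a function of the realizations of the backward-edge set $E_{u,\pi}$ together with $u$'s own fresh activation coin, the sets $E_{u,\pi}$ are pairwise disjoint across $u$, and functions of disjoint subfamilies of mutually independent random variables are themselves independent. Your write-up is a bit more explicit than the paper's — in particular, you flag that $y_e$ and $y'_e$ inject no extra randomness and that the RCRS matching decision is irrelevant to $X_u$ — but these are elaborations of, not departures from, the same decomposition.
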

\begin{proof}
Let us review which random variables  determine the value of $X_u$ when a vertex $u$ arrives. First, the algorithm looks at the realization of edges in $E_{v, \pi}$ to determine the probability with which each edge  becomes active. Then, the edge activation process decides which edge becomes active. Observe that for a fixed permutation subsets $E_{u, \pi}$ and  $E_{u', \pi}$  are disjoint for any two vertices $u$ and $u'$ (i.e., $E_{u', \pi}$ and $E_{u', \pi}$ do not share any edges). Since the edge realizations are independent, the realization of edges in $E_{u, \pi}$ and  $E_{u', \pi}$ are independent as well.  Furthermore, We can assume that the edge activation process draws fresh random bits every time it is called. This implies that $X_1, \dots, X_n$ are functions of independent sets of random variables hence they are independent. 
\end{proof}

\paragraph{Proof of Property \ref{viprop:variance}.} We formally prove this property in Lemma~\ref{lem:BoundedVarZv}. However, let us give some intuition before diving into it. Let us first investigate what bound we could get for the variance of $Z_v$ if the membership of vertices in $A$ were independent from each other. In that case,  $Z_v$ would be the sum of independent random variables and its variance would  be  the sum of their variance. Therefore, we would get
\begin{align*}
    \var(Z_v) &= \sum_{(u,v)\in \overline{E}}\var\left (\frac{x_{(u,v)}}{\Pr[\{u,v\}\subset A]}\mathds{1}_{u\in A}\right) \\ & \leq \sum_{(u,v)\in \overline{E}} \Pr[u\in A] \left( \frac{x_{(u,v)}}{\Pr[\{u,v\}\subset A]} \right)^2 \\ &  \leq 
    \frac{\sum_{(u,v)\in \overline{E}}x_{(u,v)}^2}{\delta^2}
      \leq 
    \frac{1}{\tau}\frac{\tau^2}{\delta^2} \leq \frac{\tau}{\delta^2}.
\end{align*}
This implies that our desired upper-bound for  $\var(Z_v)$ is just a constant times the upper-bound we would get under independence which is quite a strong claim. To prove this  claim  we first write $Z_v$ as a function of the influential random variables. Since they are independent, we can use the  Efron–Stein inequality to give an upper bound for $\var(Z_v)$. This inequality basically requires to show that redrawing any of the influential random variables changes the  value of $Z_v$ by a small amount in expectation. One of the main reasons we can achieve this  is that redrawing any  $X_u$ can only change the membership of three vertices in $A$. Furthermore, for any pair of vertices $u$ and $w$ we have $\Pr[X_u=w] \leq z'_v$.

\begin{lemma} [Property \ref{viprop:variance}]\label{lem:BoundedVarZv}
Consider set $A$ outputted by  Algorithm~\ref{alg:Fu-modified}. Given any fractional matching $\bm{x}$ on $\overline{H}$ (the complement of  $H$), for any vertex  $v\in V$ define  $$Z_v = \sum_{(u,v) \in  \overline{E}} h_{(u,v)}\mathds{1}_{u\in A},$$ where $h_{e}=\frac{x_e}{\Pr[\{u,v\}\subset A]}.$ If for a parameter $\tau\in (0,1)$, the fractional matching satisfies $x_e<\tau$, then $\var(Z_v)\leq \frac{10\tau}{\delta^2}.$
\end{lemma}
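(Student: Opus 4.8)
\textbf{Proof plan for Lemma~\ref{lem:BoundedVarZv}.}

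The plan is to express $Z_v$ as a function of the independent influential random variables $X_1,\dots,X_n$ from Definition~\ref{def:randomvaljj} and then apply the Efron--Stein inequality (Proposition~\ref{prop:efron-stein}). First I would observe that set $A$ is entirely determined by $(X_1,\dots,X_n)$: a vertex $u$ lies in $A$ iff $X_u = \emptyset$ \emph{and} no later-arriving neighbor $w$ of $u$ has $X_w = u$. Hence $\mathds{1}_{u\in A}$, and therefore $Z_v = \sum_{(u,v)\in\overline E} h_{(u,v)}\mathds{1}_{u\in A}$, is a deterministic function $f(X_1,\dots,X_n)$. (The coefficients $h_e = x_e/\Pr[\{u,v\}\subset A]$ are constants, bounded by $x_e/\delta$ via Property~\ref{viprop:independence}.) Then Efron--Stein gives $\var(Z_v) \le \tfrac12 \sum_{u} \E[(f(X) - f(X^{(u)}))^2]$, where $X^{(u)}$ replaces $X_u$ by an i.i.d.\ copy $X_u'$.

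Next I would bound, for each $u$, the change $|f(X) - f(X^{(u)})|$. The key structural fact is that resampling $X_u$ can flip the $A$-membership of at most three vertices: vertex $u$ itself (its own active-edge status changed), the old value $X_u = w$ (which may have been kept out of $A$ only because of the edge $(u,w)$), and the new value $X_u' = w'$ (which may now be knocked out of $A$). So $f(X) - f(X^{(u)})$ is a sum of at most three terms of the form $\pm h_{(u',v)} = \pm x_{(u',v)}/\Pr[\{u',v\}\subset A]$, each at most $\tau/\delta$ in absolute value (only vertices $u'$ that are \emph{non-neighbors} of $v$ in $H$ contribute, and for those $x_{(u',v)} < \tau$). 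Thus $(f(X)-f(X^{(u)}))^2 \le 9\tau^2/\delta^2$ crudely, but that is far too lossy — summed over all $n$ vertices it does not give an $O(\tau)$ bound. So I would instead keep track of \emph{which} vertices can be affected and \emph{which} $u$'s can affect them, paying attention to the probabilities: the term for flipping vertex $u'$ only appears when $X_u = u'$ or $X_u' = u'$, events of probability at most $\Pr[X_u = u'] \le y'_{(u,u')} \le$ (the activation probability), and when we then further need the edge $(u',v)\in\overline E$ to exist.

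The heart of the estimate is a charging/counting argument. I would group the contributions to $\sum_u \E[(f(X)-f(X^{(u)}))^2]$ by the affected vertex $u'$ and the pair $(u', v)$: for a fixed non-neighbor $u'$ of $v$, the squared change attributable to $u'$ is at most $(x_{(u',v)}/\delta)^2$ times the probability that resampling some $X_u$ actually toggles $u'\in A$. Summing the toggling probabilities over all $u$ — using that $u'$ joins/leaves $A$ only through its own influential variable or through one of its neighbors' variables pointing at it, whose total probability mass is $O(1)$ (it is at most $x_{u'}+x_{u'}$-ish, a fractional-matching bound) — this collapses to $O(1)\sum_{(u',v)\in\overline E}(x_{(u',v)}/\delta)^2 \le O(1)\cdot\tau\sum_e x_e /\delta^2 \le O(\tau/\delta^2)$, exactly as in the heuristic independent-case computation shown before the lemma. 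Chasing the constants, one gets the stated $\tfrac{10\tau}{\delta^2}$. \textbf{The main obstacle} will be doing this bookkeeping cleanly: correctly identifying the (at most three) affected vertices when $X_u$ is resampled, handling the self-dependence (resampling $X_u$ changes $\mathds{1}_{u\in A}$ and simultaneously the $X_u = w$ term), and summing the cross terms of the at-most-three-term difference without blowing up the constant beyond $10$ — in particular separating the diagonal contributions (which give the main $\tau/\delta^2$) from the off-diagonal ones (which must be shown to be lower order or absorbed into the slack between the heuristic $\tau/\delta^2$ and the claimed $10\tau/\delta^2$).
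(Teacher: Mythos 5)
Your proposal is correct and follows essentially the same approach as the paper's proof: write $Z_v$ as a function of the independent influential random variables of Definition~\ref{def:randomvaljj}, apply Efron--Stein, use the fact that resampling $X_u$ flips the $A$-membership of at most the three vertices $u$, $X_u=w$, $X_u'=w'$, and charge each resulting $h_{(\cdot,v)}^2$ term against the activation probabilities $y_{u,\cdot}$, which sum to at most $1$ by the fractional-matching constraint. The bookkeeping you flag as the main obstacle is exactly what the paper carries out via the bound $\bigl(h_{(w,v)}+h_{(w',v)}+h_{(u,v)}\bigr)^2\le 3\bigl(h_{(w,v)}^2+h_{(w',v)}^2+h_{(u,v)}^2\bigr)$ and a per-vertex coefficient count, yielding $\var(Z_v)\le 6\sum_{w}h_{(w,v)}^2\le 6\tau/\delta^2$.
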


\newcommand{\gstar}{\ensuremath{\mathcal{G}}\xspace}

\begin{proof}
We are able to prove this lemma for any permutation $\pi$. Therefore, in our proof, we will assume that $\pi$ is a fixed arbitrary permutation.

To demonstrate the desired upper bound for $\var(Z_v)$, we will employ the Efron–Stein inequality~\ref{prop:efron-stein}. However, for the inequality to be applicable, we need to express $Z_v$ as a function of independent random variables.  Due to the potential correlation between the presence of different vertices in $A$, it is necessary to identify another set of  random variables which are independent and can be used to determine the value of $Z_v$. This is where the influential random variables defined in Definition~\ref{def:randomvaljj} come into play.

Note that the randomization in $Z$ is derived from the variables $\mathds{1}_{u\in A}$ for all $u\in V$, and the values of these random variables can be determined based on the influential random variables. Consequently, we can express $Z_v$ as a function of the random variables in $\bm{X}$. Let us denote this function as $\funcZ$. Thus, we have the relationship: 
$$Z_v =  \funcZ(\bm{X}).$$ Additionally, we will use $\mathds{1}_{u\in A}(\bm{X})$ to refer to the value of $\mathds{1}_{u\in A}$ when the active edges are determined by the influential random variables $\bm{X}$.
 
Now, we proceed to bound the variance of $Z_v$ using the Efron-Stein inequality. For each vertex $u\in V$, we define a random variable $X'_u$ drawn from the same distribution as $X_u$. Importantly, these random variables are independent both from each other and from $\bm{X}$. Let $\bm{X}^{(u)}$ denote $\bm{X}$ with $X_u$ replaced by $X'_u$. In other words, $\bm{X}^{(u)}$ is identical to $\bm{X}$ except that it contains a new draw of $X_u$, represented by $X'_u$, instead of the original variable $X_u$. By applying the Efron-Stein inequality (provided in Proposition~\ref{prop:efron-stein}), we obtain the following:
    $$ 
    \var(\funcZ(\bm{X})) \leq \frac{1}{2} \sum_{u\in V} \E\left[\left(\funcZ(\bm{X}) - \funcZ(\bm{X}^{(u)}) \right)^2\right].
    $$
We take an arbitrary vertex $u\in V$ and focus on giving  an upper-bound for:
\begin{equation}\label{eq:Lkjhwfjher}
    \E\left[\left(F(\bm{X}) - F(\bm{X}^{(u)}) \right)^2\right].
\end{equation} 
Let us first define a notation. For any vertex $v$, let  $\adj_v$ be the set of neighbors of $v$ in graph $\bar{H}$. Formally: \begin{equation}
    \adj_v = \{u : (u, v) \in \bar{H}\}.
\end{equation} 
If  we have $X_u = X'_u$ then \eqref{eq:Lkjhwfjher} equals zero. Therefore, we only consider the case $X_u \neq X'_u$ in our upper bound. 
Let us assume  $X_u= w$ and $X'_u = w'$ with $w'\neq w$. Since some vertices may not have active edges, we may have $w=\emptyset$, in which case we assume $h_{v,w }=0$ (similarly for $w'$). Since $\bm{X}$ and  $\bm{X}^{(u)}$ only differ in the active edges between vertices $\{u,w, w'\}$, for any vertex $w'' \notin \{u,w, w'\}$ we have $$\mathds{1}_{w''\in A}(\bm{X})=\mathds{1}_{w''\in A}(\bm{X}^{(u)}),$$ which implies
\begin{align*}
    |F(\bm{X})-F(\bm{X}^{(u)})|& \leq h_{(w,v)} + h_{(w',v)} +  h_{(u,v)}.
\end{align*} 
Therefore, the contribution of this event to the total expectation in \eqref{eq:Lkjhwfjher} is upper-bounded by
\begin{align*}
    \Pr\left[X_u=w, X'_u=w'\right]\left(F(\bm{X})-F(\bm{X}^{(u)})\right)^2 & \leq  \Pr[X_u=w]\Pr[X'_u=w']\left(h_{(w,v)} + h_{(w',v)} + h_{(u,v)}\right)^2\\
& \leq  y_{u,w}\cdot y_{u,w'}\left(h_{(w,v)} + h_{(w',v)}+h_{(u,v)}\right)^2\\
& \leq 3\cdot y_{u,w}\cdot y_{u,w'} \left(h_{(w,v)}^2 + h_{(w',v)}^2+h_{(u,v)}^2\right)
\end{align*}
\\
Let us emphasize that if any of the vertices in $\{u, w, w'\}$ is not in $\adj_v$, then in the above term its corresponding  $h$ equals zero. 
Due to Claim~\ref{obs:activelessthanx_e} we know  each edge becomes active w.p. at most $y_e$ where $\bm{y}$ (defined in \ref{eq:yvjkrgjrg}) satisfies the properties of a fractional matching. As such

\begin{align*}
\Pr[X_u=w]\Pr[X'_u=w']\left(h_{(w,v)} + h_{(w',v)} + h_{(u,v)}\right)^2
& \leq  y_{u,w}\cdot y_{u,w'}\left(h_{(w,v)} + h_{(w',v)}+h_{(u,v)}\right)^2\\
& \leq 3\cdot y_{u,w}\cdot y_{u,w'} \left(h_{(w,v)}^2 + h_{(w',v)}^2+h_{(u,v)}^2\right)
\end{align*}
Now, we can give an  upper-bound for $\E\left[\left(F(\bm{X}) - F(\bm{X}^{(u)}) \right)^2\right]$ in  terms of $h_{w,v}^2$ for all $w\in \adj_v$. Note that $h_{w,v}^2$ only appears in the above upper-bounds if $X_u=w$ or $X_u=w'$ which means it gets a coefficient of 
$$2(3y_{u,w}\sum_{w'\in \adj_u}y_{u,w'})\leq 6y_{u,w},$$ where  the last inequality comes from the fact that $\bm{y}$ is a fractional matching. Hence, we have $\sum_{w'\in \adj_u}y_{u,w'}\leq 1$.  For vertex $u$, the coefficient for  $h_{w,u}^2$ is upper-bounded by 
$$\sum_{w\in \adj_u}2(3y_{u,w}\sum_{w'\in \adj_u}y_{u,w'})\leq 6.$$ This implies 
$$\E\left[\left(F(\bm{X}) - F(\bm{X}^{(u)}) \right)^2\right]\leq 6\cdot h_{u,v}^2 + 6\cdot y_{u,w}\cdot h_{(w,v)}^2.$$ 
Putting things together, the Efron–Stein inequality gives us the following bound on variance
\begin{align}
    \var(\funcZ(\bm{X})) \nonumber &\leq \frac{1}{2} \sum_{u\in V} \E\left[\left(\funcZ(\bm{X}) - \funcZ(\bm{X}^{(u)}) \right)^2\right]\\ \nonumber
    &\leq \frac{1}{2} \sum_{u\in V} \left(6 \cdot h_{(u,v)}^2+ 6 \cdot y_{(u,w)} \cdot h_{(w,v)}^2\right)
    \\ \nonumber
    &\leq 3 \sum_{u\in V} h_{(u,v)}^2 +  3 \sum_{w\in V} \sum_{u\in V} y_{(u,w)} \cdot h_{(w,v)}^2
    \\ \nonumber
    &\leq 3 \sum_{u\in V} h_{(u,v)}^2 +  3 \sum_{w\in V} \sum_{u\in V} y_{(u,w)} \cdot h_{(w,v)}^2
    \\ 
    & \leq 6\sum_{w\in \adj_v}h_{(w,v)}^2, 
\end{align} 
where the last sum is only over vertices in $\adj_v$ since for any vertex $w\notin \adj_v$  has $h_{(w,v)}=0$.
Now, let us recall the definition $$h_e = \frac{x_e}{\Pr[\{u, v\} \subset A]}.$$ Due to Property~\ref{viprop:independence}, we have $\Pr[\{u,v\}\in A]\geq \unmapr{}$ and by the statement  of  the lemma, we have $x_e\leq \tau$ which implies
$$ \var(\funcZ(\bm{X}))\leq \frac{6}{\unmapr{}^2}\sum_{w\in \adj_v} x_{(w,v)}^2 \leq \frac{6}{\unmapr{}^2}\frac{1}{\tau}\tau^2 = \frac{6\tau}{\unmapr{}^2},
$$ and completes the proof of this lemma. 
\end{proof}
\section{Proof of Lemma~\ref{lem:boundedvary}}\label{sec:prooflemma}

We devote this section to prove Lemma~\ref{lem:boundedvary} due to it being lengthy and technical.

\restatelem{\ref{lem:boundedvary}}{\lemboundedy{}}
To prove the desired concentration bound on $Y_v$ we begin by bounding its variance. This will allow us to apply Chebyshev inequality (\Cref{prop:Chebyshev}) to prove our desired bound. Let us first examine the random variables that affect $Y_v$'s value.
One collection is the set of variables for presence of vertices after running Algorithm~\ref{alg:Fu-modified} in set $A$, i.e., $\rva = \{\mathds{1}_{u \in A} :  u\in V  \} $ and the second collection is the edges being present in $\qstar$, i.e. $\rvq = \{\mathds{1}_{e \in \qstar} : e = (u, v) \in N \}$. 

By the law of total variance (\Cref{prop:lawof}) we have: 
$$\var[Y_v] = \E[\var(Y_v\mid S_A)] + \var[\E(Y_v \mid S_A)]$$
We will later prove that 

\begin{equation}\label{eq:final_varY}
\E[\var[Y_v | S_A]] \leq 60 \cdot (\epsilon^{6} + \epsilon^{5})
\end{equation}
To bound the term $\var[\E(Y_v\mid S_A)]$ let us first examine what $\E(Y_v\mid S_A)$ is. 

\begin{align*}
\E[Y_v|S_A] & = \sum_{e = (u, v) \in N} \E[g_e \cdot \mathds{1}_{u \in A} \cdot \mathds{1}_{e \in Q} \mid S_A] \\ & = \sum_{e = (u, v) \in N} \E\left[\frac{x_e}{\Pr[e \in \mathcal{Q}] \cdot \Pr[\{u, v\} \subset A]} \cdot \mathds{1}_{u \in A} \cdot \mathds{1}_{e \in Q} \mid S_A\right] \\
& = \sum_{e = (u, v) \in N} \E\left[\frac{\mathds{1}_{e \in Q}}{\Pr[e \in \mathcal{Q}]}\right] \E\left[\frac{x_e}{\Pr[\{u, v\} \subset A]} \cdot \mathds{1}_{u \in A} \mid S_A\right] \\
& = \sum_{e = (u, v) \in N} \E\left[\frac{x_e}{\Pr[\{u, v\} \subset A]} \cdot \mathds{1}_{u \in A}\right]
\end{align*}
To go from the second to the third line, we are using the fact that $A$ and $\mc{Q}$ are independent due to Claim~\ref{claim:approx-ind}.
Note that the term in the last line is $Z_v$ in \Cref{lem:BoundedVarZv}.
Applying  the lemma with the fractional matching $x$ being the edges having $x_e < \tau$ we get that: 
$$\var[\E(Y_v\mid S_A)] \leq \frac{10\tau}{\delta^2}$$
Adding this with what we have from \cref{eq:final_varY} and applying law of total variance we get

\begin{equation}
    \var(Y_v) \leq 60 \cdot (\epsilon^{6} + \epsilon^{5}) + \frac{10\tau}{\delta^2} 
\end{equation}
Since $\tau = 20p\epsilon^5 \delta^2$ by setting $\epsilon$ to be a small enough constant, we can get the bound $\var[Y_v] \leq \frac{\epsilon^4}{10^4}$. This will bound the standard deviation of $Y_v$ by $\frac{\epsilon^2}{100}$ which is used when applying Chebyshev's Inequality (See Proposition~\ref{prop:Chebyshev}) on the  random variable $Y_v$. Now that we have $s \leq \frac{\epsilon^2}{100}$, by applying Chebyshev's Inequality, we get

\begin{equation}\label{eq:usedCheb}
    \Pr\Big[|Y_v - \mathbb{E}[Y_v]| \geq c \cdot s \Big] \leq \frac{1}{c^2}
\end{equation}

Note that we wanted to bound the probability that $Y_v$ deviates from its mean by $\eta$. Now if we have $\eta \geq c \cdot s$, we have
\begin{equation}\label{eq:alphacs}
\Pr\Big[|Y_v - \mathbb{E}[Y_v]| \geq \eta \Big] \leq 
\Pr\Big[|Y_v - \mathbb{E}[Y_v]| \geq c \cdot s \Big]
\end{equation}

By replacing value of $\eta = \frac{\epsilon}{10}$ and the fact that $s \leq \frac{\epsilon^2}{100}$ we can see that it is enough to set $c = \frac{\epsilon}{10}$ to satisfy $\eta \geq c \cdot s$. Therefore by combining \eqref{eq:usedCheb} and \eqref{eq:alphacs} we get

\begin{align*}
\Pr\Big[|Y_v - \mathbb{E}[Y_v]| \geq \eta \Big] & \leq 
\Pr\Big[|Y_v - \mathbb{E}[Y_v]| \geq c \cdot s \Big] \\
& \leq \frac{1}{c^2} \\
& \leq \frac{\epsilon^2}{100} = \beta
\end{align*}

Now that we proved the statement of the lemma using \Cref{eq:final_varY}, we prove it 
which states
$
\E[\var[Y_v | S_A]] \leq 60 \cdot (\epsilon^{6} + \epsilon^{5})
$.
Our first step is to see how random variables in $\rvq$ behave. First of all, random variables in $\rvq$ are not independent since $\qstar$ is a collection of matchings, for two incident edges $e_1$ and $e_2$, when $e_1$ is present in one of the matchings $e_2$ will not be in that matching. This intuition might make us believe that for edges relevant to $\rvq$ because they all intersect at the vertex $v$ their presence in $\qstar$ is pairwise \textbf{negatively correlated}. This is in fact true and for proving it we prove a stronger fact about the random variables which is negative association which implies negative correlation. (see \Cref{def:negativeAssociation} for definition).

\begin{lemma}\label{lem:NA_Q}
Random variables in $\rvq$ are negatively associated. 
\end{lemma}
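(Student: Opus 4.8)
The plan is to write each indicator $\mathds{1}_{e\in\qstar}$ as a coordinatewise-monotone function of a private block of an underlying negatively associated family, and then invoke the two standard closure properties of negative association: a union of independent negatively associated families is negatively associated, and applying one monotone function to each of disjoint blocks of a negatively associated family produces a negatively associated family. Recall from Algorithm~\ref{alg:query} that $\qstar=\mc{G}\cap Q$ with $Q=\bigcup_{i\in[t]}M_i$, where $M_i$ is the max-weight matching of an independent realization $G_i$, so for every edge $e$,
$$\mathds{1}_{e\in\qstar}=\mathds{1}_{e\in\mc{G}}\cdot\Bigl(1-\prod_{i\in[t]}\bigl(1-\mathds{1}_{e\in M_i}\bigr)\Bigr).$$
Fixing $v$, the goal is to show that $\rvq=\{\,\mathds{1}_{e\in\qstar}:e=(u,v)\in N\,\}$ is negatively associated.

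The one step carrying real content is the base case. For each fixed $i$, the matching $M_i$ contains at most one edge incident to $v$, so the Bernoullis $\{\,\mathds{1}_{e\in M_i}:e=(u,v)\in N\,\}$ have sum at most one almost surely, and I would first prove that any family of $\{0,1\}$-valued variables of which at most one equals $1$ is negatively associated: given disjoint index sets $I,J$ and, say, increasing functions $f,g$, subtract constants so that $f(\mathbf 0)=g(\mathbf 0)=0$; then on the event that the unique $1$ (if any) is indexed within $I$ we have $g=0$, on the event it is indexed within $J$ we have $f=0$, and otherwise both vanish, so $\E[fg]=0$; since also $\E[f]\ge 0$ and $\E[g]\ge 0$ this gives $\E[fg]\le\E[f]\,\E[g]$, and the decreasing case is symmetric.

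With the base case established I would finish with two closure steps. First, since $G_1,\dots,G_t$ are drawn independently, the $t$ families $\{\,\mathds{1}_{e\in M_i}:e=(u,v)\in N\,\}$ (one per $i\in[t]$) are mutually independent, so their union is negatively associated; as $\mathds{1}_{e\in Q}=1-\prod_{i}(1-\mathds{1}_{e\in M_i})$ is increasing in the block $\{\mathds{1}_{e\in M_i}:i\in[t]\}$, and these blocks are disjoint over the edges $e$ incident to $v$, the family $\{\,\mathds{1}_{e\in Q}:e=(u,v)\in N\,\}$ is negatively associated. Second, $\{\,\mathds{1}_{e\in\mc{G}}:e=(u,v)\in N\,\}$ is a family of independent (hence negatively associated) variables, and it is independent of $\{\,\mathds{1}_{e\in Q}:e=(u,v)\in N\,\}$ because $\mc{G}$ is independent of the samples $G_1,\dots,G_t$, so their union is negatively associated; since $\mathds{1}_{e\in\qstar}=\mathds{1}_{e\in\mc{G}}\cdot\mathds{1}_{e\in Q}$ is increasing in the disjoint two-coordinate block $(\mathds{1}_{e\in\mc{G}},\mathds{1}_{e\in Q})$, a final application of the monotone-function closure shows that $\rvq$ is negatively associated. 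I expect the main obstacle to be organizational rather than mathematical: at each closure step one must check that the variables fed into the monotone functions form genuinely disjoint blocks of an already-established negatively associated family, which is precisely why the decomposition is arranged so that every edge $e$ owns its own block $\{\mathds{1}_{e\in M_i}\}_i$ and, later, its own pair $(\mathds{1}_{e\in\mc{G}},\mathds{1}_{e\in Q})$; and since the paper recalls only the definition of negative association, some care is needed to state the two closure properties cleanly (or to prove them, which is short).
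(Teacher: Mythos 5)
Your proof follows essentially the same three-step strategy as the paper's: first establish negative association for the per-matching indicators $\{\mathds{1}_{e\in M_i}: e=(u,v)\in N\}$ since they sum to at most one, then take the union over the $t$ independent draws, and finally apply the closure property for coordinatewise-monotone functions on disjoint blocks to obtain negative association for $\{\mathds{1}_{e\in Q}\}$. There are two respects in which your version is more careful. First, the paper cites a reference for the base case (sum at most one implies NA), whereas you give a short direct argument --- subtract constants so $f(\mathbf 0)=g(\mathbf 0)=0$, observe $\E[fg]=0$ while $\E[f]\geq 0$ and $\E[g]\geq 0$ --- and that argument is correct, once one checks (as is immediate) that shifting $f$ and $g$ by constants preserves the inequality $\E[fg]\leq\E[f]\E[g]$. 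Second, and more substantively, the paper's proof stops at the conclusion that the $\mathds{1}_{e\in Q}$ are NA, but the lemma is stated for $\rvq=\{\mathds{1}_{e\in\mc{Q}}\}$ with $\mc{Q}=\mc{G}\cap Q$. You close this gap explicitly: you append the family $\{\mathds{1}_{e\in\mc{G}}\}$ of independent Bernoullis, which is independent of $\{\mathds{1}_{e\in Q}\}$ because $Q$ is built from fresh hallucinations $G_1,\dots,G_t$, and you note that $\mathds{1}_{e\in\mc{Q}}=\mathds{1}_{e\in\mc{G}}\cdot\mathds{1}_{e\in Q}$ is increasing in the disjoint two-coordinate block $(\mathds{1}_{e\in\mc{G}},\mathds{1}_{e\in Q})$. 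That extra closure step is exactly what is needed to go from the paper's displayed conclusion to the lemma as actually stated and invoked (e.g., in the proof of Lemma~\ref{lem:condeinQlowerbound}), so your write-up is a modest but genuine improvement in completeness.
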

\begin{proof}

For an edge $e = (u, v) \in N$ let us look at the variable $\mathds{1}_{e \in Q}$. The way we construct $Q$ in Algorithm~\ref{alg:query} is that $\mathds{1}_{e \in Q} = 1$ if the edge $e$ is in the maximum matching of at least one of the graphs $G_i$. Let us define the random variable $\mathds{1}_{e, i}$ to be the indicator that edge $e$ belongs to $M_i$. Then we can write:
$\mathds{1}_{e \in Q} = \max_ {i = 1}^t \mathds{1}_{e, i}$. One observation here is that the random variables for presence of edges in $\rvq$ in a single $G_i$ are NA.\footnote{We use NA instead of negatively associated for brevity} This comes from the fact that at most one of them will be in the maximum matching, i.e. $\sum_{e = (u, v) \in N} \mathds{1}_{e, i} \leq 1$. 
This is because for a set of random variables that their sum is not greater than 1, we know that they are NA (see \cite{dubhashi1996balls}). Therefore, the set of variables, 
$A_i = \{ \mathds{1}_{e, i} : \forall e = (u, v) \in N \}$
they will be NA. 

Take the set of random variables
$
\SX = 
\{
\mathds{1}_{e, i} : \forall i \in [t], e = (u, v) \in N
\}
$. Notice $\SX = \bigcup_{i = 1}^t A_i$. ($A_i$ was previously defined in the above paragraph). We know that the union of independent sets of NA random variables are NA (see \cite{khursheed1981positive}). Therefore, since all $A_1, .., A_t$ are NA and we have $\forall i \neq j: A_i \perp A_j$ we get that random variables in $\SX$ are NA.

Let the subsets $S_i$ in the lemma be the random variables for the presence of a single edge in 
$M_1, ..., M_t$ so we have $S_e = \{ \mathds{1}_{e, i} : \forall i \in [t] \}$. These subsets are disjoint because they are on distinct edges. Also, we let $f_e$ be the maximum of all random variables in $S_e$. We know that concordant monotone functions defined on disjoint subsets of a set of NA random variables are NA (see \cite{khursheed1981positive}). Since maximum is a monotonically increasing function, we can apply this  to get the variables: 
$ \mathds{1}_{e \in Q} = \max_{i = 1}^t \mathds{1}_{e, i}
$ are NA. 
\end{proof}

By definition, negative association implies negative correlation. This means Lemma~\ref{lem:NA_Q} implies that for two edges $e_1 = (u_1, v), e_2 = (u_2, v)$ such that $e_1, e_2 \in N$ we have: 
\begin{equation}\label{eq:negyjhwekj}
    \Cov(\mathds{1}_{e_1 \in \qstar}, \mathds{1}_{e_2 \in \qstar}) \leq 0
\end{equation}

Let us take an arbitrary realization of variables in $\rva$ and call it $\Aset$. Our plan is, given this fixed $\Aset$, first upper-bound $\var[Y_v | \Aset]$. Then, using that, find an upper bound for $\var[Y_v]$. At last, we apply Proposition~\ref{prop:Chebyshev} to prove the statement of the lemma.

Define the random variable $$X_u = (g_{(u, v)} \cdot \mathds{1}_{u \in A} \cdot \mathds{1}_{e \in \qstar} | \Aset).$$ We can see that if $\mathds{1}_{u \in A} = 0$, $X_u$ is always equal to zero, and the inequalities discussed further will be trivial for $\var[X_u]$. In the case that $\mathds{1}_{u \in A} = 1$, $X_u = (g_{(u, v)} \cdot \mathds{1}_{e \in \qstar})$. We can see that $(Y_v | \Aset) = \sum_{(u, v) \in N} X_u$. Now, we are ready to bound the variance of $Y_v$ conditioned on $\Aset$. The first step is to bound the variance of $X_u$: 

\begin{equation}\label{eq:var_1}
\var[X_u] = \var[ g_{e} \cdot \mathds{1}_{u\in A} \cdot \mathds{1}_{e\in \qstar} | \Aset] \leq \var[ g_{e} \cdot \mathds{1}_{e\in \qstar} | \Aset]
\end{equation}
This is because when we have fixed $A$, in the case that $\mathds{1}_{u\in A} = 0$ then variance of $X_u$ is zero and in the case that $\mathds{1}_{u\in A} = 1$ the bound in \Cref{eq:var_1} holds. 

Now we know that $\var[X_u] = \E[X_u^2] - \E[X_u]^2 \leq \E[X_u^2]$ so from \Cref{eq:var_1} we get: 

\begin{equation}\label{eq:var_2}
\var[X_u] \leq \E[X_u^2] \leq \E[(g_e \cdot \mathds{1}_{u\in \qstar} | \Aset)^2] \leq \E[(g_e \cdot \mathds{1}_{u\in \qstar})^2] \leq 
\Pr[e \in \qstar] \cdot g_e^2
\end{equation}
Note that we can remove the condition on $\Aset$ because variables in $\rvq$ and $\rva$ are independent. The last step comes from the fact that with probability $\Pr[e \in \qstar]$, $(g_e \cdot \mathds{1}_{u\in \qstar})^2]$ equals $g_e^2$ and it is zero otherwise. To make further progress, we need a bound on $\Pr[e\in Q]$. The following lemma addresses this.

Expanding $g_e$ in \Cref{eq:var_2}, we get:

\begin{align}\label{eq:beforecasing}
\var[X_u] &\leq
\Pr[e \in \qstar] \cdot 
\left(\frac{x_e}{p_e\cdot\Pr[e \in Q]\cdot\Pr[\{u, v\} \subset A]}\right)^2 \nonumber \\ 
&\leq \frac{x_e^2}{p_e\cdot\Pr[e \in Q]\cdot\left(\Pr[\{u, v\} \subset A]\right)^2} 
\nonumber
\\
&\leq \frac{x_e^2}{p_e\cdot\Pr[e \in Q]\cdot \delta^2} 
\end{align}
To go from the first line to the second, first note the distinction between $Q$ and $\qstar$ in the equation above. By definition of $e \in \qstar$ being $e \in Q \cap e \in \gstar$ we can see that $\Pr[e \in \qstar] = p_e \cdot \Pr[e \in Q]$. This is because $e \in \gstar$ is independent of $e \in Q$ since $Q$ is constructed on hallucinations of $\gstar$. To go from the second line to the third line note that in Lemma~\ref{lem:vertexindependent}, we showed  $\Pr[\{u, v\} \subset A] \geq \delta$.

Moreover, from Claim~\ref{claim:lowerbound_einQ} we know that $\Pr[e\in Q]\geq \min(1/3,  tx_e/3)$ so we consider two cases:

\paragraph{Case 1:} $\Pr[e\in Q]\geq \frac{t \cdot x_e}{3}$.  Combining this and \eqref{eq:beforecasing}  we get:
\begin{align*}
\var[X_u] &\leq \frac{x_e^2}{p_e\cdot\Pr[e \in Q]\cdot \delta^2} \\
&\leq \frac{3x_e^2}{p_e\cdot t \cdot x_e \cdot \delta^2} \\
&\leq \frac{3x_e}{p_e\cdot t \cdot \delta^2}
\end{align*}

\paragraph{Case 2:} $\Pr[e\in Q]\geq \frac{1}{3}$. 
Combining this and \eqref{eq:beforecasing} we get:
$$
\var[X_u] \leq \frac{x_e^2}{p_e\cdot\Pr[e \in Q]\cdot \delta^2}
\leq \frac{3x_e^2}{p_e \cdot \delta^2}
$$
Now that we have a bound on all $\var[X_u]$'s we are ready to bound $\var[Y | \Aset]$. The following proposition is what we need.
\begin{proposition}\label{prop:var_bound}
Let $X$ be a random variable written as the sum of random variables $X_1, ..., X_n$.
So we have $X = \sum_{i = n}^n X_i$. Then we have:

$$\var[X] = \sum_{i = 1}^n \var[X_i] + 2 \cdot \sum_{i = 1}^n \sum_{j > i}^n \Cov (X_i, X_j)$$
\end{proposition}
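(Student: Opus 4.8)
The plan is to reduce everything to the definition of variance and linearity of expectation; this is a routine bilinear expansion, so the only real care needed is bookkeeping of the double sum. First I would write $\var[X] = \E[X^2] - \E[X]^2$ and substitute $X = \sum_{i=1}^n X_i$ into both terms. Expanding the square gives
$$X^2 = \left(\sum_{i=1}^n X_i\right)^2 = \sum_{i=1}^n X_i^2 + \sum_{i=1}^n\sum_{\substack{j=1\\ j\neq i}}^n X_i X_j,$$
and similarly $\E[X]^2 = \sum_{i=1}^n \E[X_i]^2 + \sum_{i=1}^n\sum_{j\neq i}\E[X_i]\E[X_j]$, using only that $\E$ is linear so $\E[X] = \sum_i \E[X_i]$.

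Next I would take the difference term by term. By linearity of expectation, $\E[X^2] = \sum_i \E[X_i^2] + \sum_{i\neq j}\E[X_i X_j]$, so subtracting yields
$$\var[X] = \sum_{i=1}^n\left(\E[X_i^2] - \E[X_i]^2\right) + \sum_{i=1}^n\sum_{\substack{j=1\\ j\neq i}}^n\left(\E[X_iX_j] - \E[X_i]\E[X_j]\right).$$
The first sum is exactly $\sum_i \var[X_i]$ by definition, and each summand of the second is $\Cov(X_i, X_j)$ by definition. Finally, since $\Cov(X_i,X_j) = \Cov(X_j,X_i)$, the ordered off-diagonal pairs $(i,j)$ and $(j,i)$ contribute equally, so the double sum over $i\neq j$ equals $2\sum_{i=1}^n\sum_{j>i}^n \Cov(X_i,X_j)$, which gives the claimed identity.

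The main (and essentially only) obstacle here is purely clerical: making sure the off-diagonal index set $\{(i,j): i\neq j\}$ is correctly folded into the set $\{(i,j): j>i\}$ with the factor of $2$, and that the diagonal contributes only to the variance terms. No probabilistic tools beyond the definitions of variance and covariance and linearity of expectation are needed, and all the random variables involved have finite second moments in our application (they are bounded), so the expectations are well defined.
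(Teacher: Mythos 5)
Your proof is correct and is the standard bilinear expansion argument: expand $\var[X]=\E[X^2]-\E[X]^2$, use linearity to separate diagonal and off-diagonal terms, and fold the symmetric off-diagonal sum into a factor of $2$. The paper states this proposition as a known fact without supplying a proof, so there is nothing to compare against; your write-up is a complete and correct derivation of it.
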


In \eqref{eq:negyjhwekj} we argued that all variables in $\rvq$ are negatively correlated. Recall the definition of 
$
X_u = g_{(u, v)} \cdot \mathds{1}_{u \in A} \cdot \mathds{1}_{e \in \qstar}.
$
Because we have fixed $\Aset$ all $X_u$'s will be equal to zero or $
g_{(u, v)} \cdot \mathds{1}_{e \in \qstar}.
$ Hence we can argue that $\Cov(X_u, X_w) \leq 0$. This is because if at least one of them is equal to zero then $\Cov(X_u, X_w) = 0$. Otherwise, since $g_e$'s are constants sign of $\Cov(X_u, X_w)$ will be the same as $\Cov(\mathds{1}_{(u, v) \in \qstar}, \mathds{1}_{(w, v) \in \qstar})$.

Therefore, by applying Proposition~\ref{prop:var_bound} to all $X_u$'s and the fact that they are pairwise negatively correlated we get:
\begin{equation}
\var[Y_v | \Aset] \leq \sum_{u} \var[X_u] \leq \sum_{u} \max(\frac{3x_e}{p_e\cdot t \cdot \delta^2}, \frac{3x_e^2}{p_e \cdot \delta^2}) \leq \sum_{u} \frac{3x_e}{p_e\cdot t \cdot \delta^2} + \sum_{u} \frac{3x_e^2}{p_e \cdot \delta^2}
\end{equation}
For brevity, we are writing $\sum_ {u}$ instead of $\sum_ {(u, v) \in N}$ for all the equations here. To bound the first sum, note that $t = \frac{1}{20 \cdot \epsilon^6 \cdot \delta^2 \cdot p}$ and also $\sum_{(u, v) \in N} x_e \leq 1$ therefore we have:
\begin{equation}
 \sum_{u} \frac{3x_e}{p_e\cdot t \cdot \delta^2} \leq \sum_{u} \frac{3 \cdot 20 \cdot \epsilon^6 \cdot \delta^2 \cdot p_{min} \cdot x_e}{p_e \cdot \delta^2} 
 \leq \sum_{u} 60 \cdot \epsilon^6 \cdot x_e \leq 60 \cdot \epsilon^6
\end{equation}
To bound the second sum, note that for non-crucial edges, we have $x_e \leq \tau$. Since we have $\tau = 20p_{min}\epsilon^5\delta^2$ we get: 

\begin{equation}
 \sum_{u} \frac{3x_e^2}{p_e \cdot \delta^2} \leq \sum_{u} \frac{3 \cdot \tau \cdot x_e}{p_e \cdot \delta^2} 
 \leq \sum_{u} \frac{3 \cdot 20 \cdot \epsilon^{5} \cdot \delta^2 \cdot p_{min}  \cdot x_e}{p_e \cdot \delta^2} 
 \leq \sum_{u} 60 \cdot \epsilon^{5} \cdot x_e \leq 60 \cdot \epsilon^{5}
\end{equation}
Putting things together we get, $\var[Y_v | \Aset] \leq 60 \cdot (\epsilon^{6} + \epsilon^{5})$. Now since we have proved this for any arbitary $\Aset$ we can remove the condition on $\Aset$ and get: 
\begin{equation}
\E[\var[Y_v | S_A]] \leq 60 \cdot (\epsilon^{6} + \epsilon^{5})
\end{equation}
which is exactly \Cref{eq:final_varY} so the proof is complete.
\section{Proof of Lemma~\ref{lem:yvclose1}}\label{sec:yvclosetoone}

Recall the random variable $Y_v$ defined as follows for any vertex $v\in V$.
$$Y_v = \sum_{e=(u,v) \in N} g_{e} \cdot\mathds{1}_{u\in A}  \mathds{1}_{e\in \mathcal{Q}},$$
where $$g_e=\frac{x_e}{p_e\cdot\Pr[e \in Q]\cdot\Pr[\{u, v\} \subset A]}.$$

In this section, we will prove Lemma~\ref{lem:yvclose1}. Recall the statement of the lemma.

\restatelem{\ref{lem:yvclose1}}{\lemyvaboutone{}}

\begin{claim}\label{claim:kjnrrfnj3r}
For any vertex $v\in V$ we have   $$\E[Y_v|v\in A] = \frac{x_v}{\Pr[v\in  A]}$$ where $x_v = \sum_{e \ni v, e \in N} x_e$.
\end{claim}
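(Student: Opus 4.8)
The plan is to expand the conditional expectation by linearity over the non-crucial edges incident to $v$ and then exploit the independence of $\mathcal{Q}$ from the set $A$ established in Claim~\ref{claim:approx-ind}. Writing out the definition of $Y_v$, I would start from
\[
\E[Y_v \mid v \in A] = \sum_{e=(u,v)\in N} g_e \cdot \E\big[\mathds{1}_{u\in A}\,\mathds{1}_{e\in\mathcal{Q}} \mid v\in A\big].
\]
Before proceeding I would note the conditioning is legitimate and $g_e$ is well-defined: for a non-crucial edge $e=(u,v)$ the pair $\{u,v\}$ has no edge in $H=(V,C)$, so Property~\ref{viprop:independence} gives $\Pr[\{u,v\}\subset A]\ge\delta>0$, and the analogous bound shows $\Pr[v\in A]>0$.

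The key step is that, by part~2 of Claim~\ref{claim:approx-ind}, the event $e\in\mathcal{Q}$ is independent of the entire set $A$ (not merely of the membership of a single vertex), hence independent of the pair $(\mathds{1}_{u\in A},\mathds{1}_{v\in A})$. This lets me factor
\[
\E\big[\mathds{1}_{u\in A}\,\mathds{1}_{e\in\mathcal{Q}} \mid v\in A\big] = \Pr[e\in\mathcal{Q}]\cdot\Pr[u\in A\mid v\in A] = \Pr[e\in\mathcal{Q}]\cdot\frac{\Pr[\{u,v\}\subset A]}{\Pr[v\in A]}.
\]
Substituting $g_e = x_e/(\Pr[e\in\mathcal{Q}]\cdot\Pr[\{u,v\}\subset A])$, the factors $\Pr[e\in\mathcal{Q}]$ and $\Pr[\{u,v\}\subset A]$ cancel, leaving exactly $x_e/\Pr[v\in A]$ for the $e$-th summand.

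Summing over all non-crucial edges incident to $v$ then yields
\[
\E[Y_v\mid v\in A] = \frac{1}{\Pr[v\in A]}\sum_{e=(u,v)\in N} x_e = \frac{x_v}{\Pr[v\in A]},
\]
which is the claim. The only point that needs care is the independence invocation: one must use Claim~\ref{claim:approx-ind} in its set form (that $\{e\in\mathcal{Q}\}$ is independent of $A$), together with the observation that $e\in\mathcal{Q}$ decomposes as $\{e\in Q\}\cap\{e\in\mathcal{G}\}$, neither of which is affected by the run of the variance-bounding algorithm producing $A$; the rest is bookkeeping.
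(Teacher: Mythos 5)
Your proof is correct and follows essentially the same route as the paper: expand by linearity, factor out $\Pr[e\in\mathcal{Q}]$ using the independence of $\mathcal{Q}$ from $A$ (Claim~\ref{claim:approx-ind}), rewrite $\Pr[u\in A\mid v\in A]$ as $\Pr[\{u,v\}\subset A]/\Pr[v\in A]$, and cancel against $g_e$. Your extra remarks — that the independence must be invoked in its ``set'' form and that the denominators in $g_e$ are strictly positive by Property~\ref{viprop:independence} — are sound clarifications of steps the paper leaves implicit.
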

\begin{proof}
By definition of $Y_v$ we have
\begin{align*}
    \E[Y_v\mid v\in A] &= \E\left[\sum_{e=(u,v) \in N} g_{e} \cdot\mathds{1}_{u\in A}  \mathds{1}_{e\in \mathcal{Q}} \mid v\in A\right] \\& = \sum_{(u,v)\in N}\frac{x_e\Pr[u\in A \mid v\in A]\cdot \Pr[e\in \mc{Q}]}{\Pr[v, u\in A]\cdot \Pr[e\in \mc{Q}]}
\end{align*}
Due to $\Pr[u\in A \mid v\in  A] = \frac{\Pr[u,v\in A]}{\Pr[v\in A]}$ about  conditional expectations we get:  
$$\E[Y_v\mid v\in A]=\sum_{(u,v)\in  N} \frac{x_e}{\Pr[v\in V]}= \frac{x_v}{\Pr[v\in V]}.$$
\end{proof}

\begin{claim}\label{claim:expY}
     For any vertex $v\in  V$ we have:
     $$\E[Y_v]\leq \frac{1}{\delta}$$
\end{claim}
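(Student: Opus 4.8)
\textbf{Proof plan for Claim~\ref{claim:expY}.}

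The plan is to bound $\E[Y_v]$ by decomposing over whether or not $v \in A$, and then reduce the $v \in A$ case to Claim~\ref{claim:kjnrrfnj3r}. First I would write
\[
\E[Y_v] = \Pr[v \in A]\cdot \E[Y_v \mid v \in A] + \Pr[v \notin A]\cdot \E[Y_v \mid v \notin A].
\]
For the second term, observe that whenever $v \notin A$ each edge $e=(u,v)$ still contributes a nonnegative amount, so $\E[Y_v \mid v \notin A] \ge 0$; I would argue this term is in fact something we can control or simply note (depending on the exact definition of $A$ used in Algorithm~\ref{alg:Fu-modified}) that the contribution of $\mathds{1}_{u\in A}$ is unaffected by whether $v$ itself lies in $A$ in the relevant sense. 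The cleanest route, however, is to bound $\E[Y_v]$ directly: by linearity,
\[
\E[Y_v] = \sum_{e=(u,v)\in N} \frac{x_e}{\Pr[\{u,v\}\subset A]}\,\Pr[\,u \in A \ \wedge\ e \in \mathcal{Q}\,]\Big/\Pr[e \in \mathcal{Q}]\cdot \Pr[e \in \mathcal{Q}],
\]
and since $A$ and $\mathcal{Q}$ are independent by Claim~\ref{claim:approx-ind}, this telescopes to $\sum_{e=(u,v)\in N} x_e \cdot \frac{\Pr[u\in A]}{\Pr[\{u,v\}\subset A]}$.

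Now the key step: I would use Property~\ref{viprop:independence} of the variance-bounding algorithm, which gives $\Pr[\{u,v\}\subset A] \ge \delta$ for every non-crucial edge $e=(u,v)$ (non-crucial edges are exactly non-edges of $H=(V,C)$, so this property applies). Since also $\Pr[u\in A]\le 1$, each summand is at most $x_e/\delta$, and therefore
\[
\E[Y_v] \;\le\; \frac{1}{\delta}\sum_{e=(u,v)\in N} x_e \;=\; \frac{x_v^{N}}{\delta} \;\le\; \frac{1}{\delta},
\]
where the last inequality uses that $\bm{x}$ is a fractional matching, so $\sum_{e\ni v} x_e \le 1$ and in particular the sum restricted to non-crucial edges incident to $v$ is at most $1$.

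The main obstacle here is mostly bookkeeping rather than a genuine difficulty: one must be careful that the cancellation of $\Pr[e\in\mathcal{Q}]$ in the denominator of $g_e$ against the $\Pr[e\in\mathcal{Q}]$ coming from $\E[\mathds{1}_{e\in\mathcal{Q}}]$ is legitimate, which it is precisely because $e\in\mathcal{Q}$ is independent of $\{u\in A\}$ (Claim~\ref{claim:approx-ind}), and that Property~\ref{viprop:independence} indeed covers all pairs $(u,v)$ forming a non-crucial edge. Alternatively, if one prefers to route through Claim~\ref{claim:kjnrrfnj3r}, one notes $\E[Y_v] \le \Pr[v\in A]\cdot\E[Y_v\mid v\in A] + \Pr[v\notin A]\cdot \frac{1}{\delta}$ and that $\Pr[v\in A]\cdot\E[Y_v\mid v\in A] = x_v^N \le 1 \le 1/\delta$ by Claim~\ref{claim:kjnrrfnj3r}; combined with $\Pr[v\notin A]\le 1$ this again yields $\E[Y_v]\le 1/\delta$ after absorbing constants, though the direct computation above is tighter and cleaner.
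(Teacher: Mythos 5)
Your direct computation is exactly the paper's proof: expand $\E[Y_v]$ by linearity, cancel $\Pr[e\in\mathcal{Q}]$ using the independence of $A$ and $\mathcal{Q}$ (Claim~\ref{claim:approx-ind}), bound $\Pr[u\in A]\le 1$, and finish with $\Pr[\{u,v\}\subset A]\ge\delta$ together with $\sum_{e\ni v}x_e\le 1$. The opening paragraph about conditioning on $v\in A$ is an unused detour, but the route you actually take matches the paper's argument step for step.
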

\begin{proof}
    \begin{align*}
       \E[Y_v] &= \sum_{e=(u,v) \in N} \E[g_{e} \cdot\mathds{1}_{u\in A}  \mathds{1}_{e\in \mathcal{Q}}] \\
       &= \sum_{e=(u,v) \in N} \E[\frac{x_e}{p_e\cdot\Pr[e \in Q]\cdot\Pr[\{u, v\} \subset A]} \cdot\mathds{1}_{u\in A}  \mathds{1}_{e\in \mathcal{Q}}] \\
       &= \sum_{e=(u,v) \in N} \E[\frac{x_e}{\Pr[\{u, v\} \subset A]} \cdot\mathds{1}_{u\in A}] \\
       & \leq \sum_{e=(u,v) \in N} \E[\frac{x_e}{\Pr[\{u, v\} \subset A]}] \\
       & \leq \frac{1}{\delta}
    \end{align*}
    The last step is due to Claim~\ref{claim:bothalive} which states $\Pr[\{u, v\} \subset A] \geq \delta$ and $\sum_{e=(u,v) \in N} x_e \leq 1$.
\end{proof}

\begin{claim}\label{claim:YVconditionAdiff}
    Assume we know that 
    $\Pr[|Y_v - \E[Y_v]| \geq \eta] \leq \beta$
    then we have the following bound:
    $$
    \E[Y_v] - 2 \eta \leq \E[Y_v | v \in A] \leq \E[Y_v] + 2\eta
    $$
    
\end{claim}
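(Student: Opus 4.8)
The plan is not to argue from the stated concentration of $Y_v$ directly, but to transfer the question to the companion variable $Z_v$ of \Cref{lem:BoundedVarZv}, which has tiny variance. The obstacle that forces this detour: $Y_v$ itself can be as large as $\Theta_\epsilon(1/p)$ in the worst case --- it is a sum of the $g_e$'s over at most $\deg_Q(v)\le t=\Theta_\epsilon(1/p)$ non-crucial edges of $\mc{Q}$ --- so a naive truncation that bounds a tail contribution $\E[Y_v\,\mathds{1}_{\mathrm{bad}}]$ by $(\sup Y_v)\cdot\Pr[\mathrm{bad}]$ would pick up a $1/p$ factor and collapse. Working through $Z_v$, whose variance is only $O(\tau/\delta^2)$, avoids this, and this substitution is the crux of the argument.

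First I would set up two expectation identities. Since by \Cref{claim:approx-ind} each indicator $\mathds{1}_{e\in\mc{Q}}$ with $e\in N$ is independent of the output $(M_c,A)$ of the variance-bounding algorithm --- hence of the pair $(\mathds{1}_{u\in A},\mathds{1}_{v\in A})$ --- taking expectations of $Y_v=\sum_{e=(u,v)\in N}g_e\,\mathds{1}_{u\in A}\,\mathds{1}_{e\in\mc{Q}}$, both unconditionally and conditioned on the event $\{v\in A\}$, makes the $\mathds{1}_{e\in\mc{Q}}$ factor integrate to $\Pr[e\in\mc{Q}]$, which cancels the denominator of $g_e=\tfrac{x_e}{\Pr[e\in\mc{Q}]\Pr[\{u,v\}\subset A]}$. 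What remains is exactly $\E[Z_v]$ and $\E[Z_v\mid v\in A]$, where
\[
Z_v=\sum_{e=(u,v)\in N}\frac{x_e}{\Pr[\{u,v\}\subset A]}\,\mathds{1}_{u\in A}
\]
is the variable of \Cref{lem:BoundedVarZv} instantiated with the fractional matching on $\overline{H}$ that assigns weight $x_e=\Pr[e\in\opt]$ to each non-crucial edge incident to $v$ (this is $<\tau$ by the definition of $N$, and $\sum_{e\ni v,\,e\in N}x_e\le 1$, so it is a legal fractional matching). Hence it suffices to show $\bigl|\E[Z_v\mid v\in A]-\E[Z_v]\bigr|\le 2\eta$.

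For this I would invoke the elementary identity
\[
\E[Z_v\mid v\in A]-\E[Z_v]=\frac{\mathrm{Cov}\bigl(Z_v,\mathds{1}_{v\in A}\bigr)}{\Pr[v\in A]},
\]
followed by Cauchy--Schwarz $\bigl|\mathrm{Cov}(Z_v,\mathds{1}_{v\in A})\bigr|\le\sqrt{\var(Z_v)\,\var(\mathds{1}_{v\in A})}\le\tfrac12\sqrt{\var(Z_v)}$, the bound $\var(Z_v)\le 6\tau/\delta^2$ from Property~\ref{viprop:variance} of \Cref{def:variance-bounding}, and $\Pr[v\in A]\ge\delta$, which holds by Property~\ref{viprop:independence} applied to $v$ and any non-crucial neighbor of it (and if $v$ has no incident non-crucial edge then $Y_v\equiv Z_v\equiv 0$ and there is nothing to prove). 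Combining these and using $\tau=20p\epsilon^5\delta^2\le 20\epsilon^5\delta^2$,
\[
\bigl|\E[Y_v\mid v\in A]-\E[Y_v]\bigr|\;\le\;\frac{\tfrac12\sqrt{6\tau/\delta^2}}{\delta}\;=\;\frac{\sqrt{6\tau}}{2\delta^2}\;\le\;\frac{\sqrt{120}\,\epsilon^{5/2}}{2\delta}.
\]

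Since $\eta=\epsilon/10$ and $\delta$ is a fixed absolute constant, for $\epsilon$ chosen small enough the right-hand side is at most $2\eta=\epsilon/5$, which proves the claim. (If one prefers to literally use the hypothesis $\Pr[|Y_v-\E[Y_v]|\ge\eta]\le\beta$, one may instead truncate $Y_v$ at level $\E[Y_v]\pm\eta$ and bound the tail contribution by $\E\bigl[|Y_v-\E[Y_v]|\,\mathds{1}_{\mathrm{bad}}\bigr]\le\sqrt{\var(Y_v)\,\beta}=\eta^3$, using $\var(Y_v)\le\eta^4$ established inside the proof of \Cref{lem:boundedvary}; this also closes, but the covariance route above is cleaner.)
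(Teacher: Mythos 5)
Your proof is correct, but it takes a genuinely different route from the paper's. The paper proves the claim by \emph{using} the hypothesis $\Pr[|Y_v-\E[Y_v]|\ge\eta]\le\beta$: it introduces the bad event $E_b=\{|Y_v-\E[Y_v]|\ge\eta\}$, partitions $\E[Y_v\mid v\in A]$ according to $E_b$, and controls the contribution of $E_b$ by combining $\Pr[E_b]\le\beta$, $\Pr[v\in A]\ge\delta$, and the auxiliary bound $\E[Y_v]\le 1/\delta$ from Claim~\ref{claim:expY} (this last bound, not $\sup Y_v$, is what the paper uses to keep the tail term small, so the $1/p$ blow-up you worry about never arises in the paper's argument). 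Your proof instead \emph{bypasses the hypothesis entirely}: it observes that by Claim~\ref{claim:approx-ind} the indicators $\mathds{1}_{e\in\mc{Q}}$ are independent of $A$, so that both $\E[Y_v]=\E[Z_v]$ and $\E[Y_v\mid v\in A]=\E[Z_v\mid v\in A]$; it then applies the covariance identity $\E[Z_v\mid v\in A]-\E[Z_v]=\mathrm{Cov}(Z_v,\mathds{1}_{v\in A})/\Pr[v\in A]$, Cauchy--Schwarz, Property~\ref{viprop:variance} for $\var(Z_v)\le 6\tau/\delta^2$, and Property~\ref{viprop:independence} for $\Pr[v\in A]\ge\delta$. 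This is a clean one-line inequality and sidesteps the conditioning manipulations in the paper's truncation argument (e.g.\ the paper's step $\E[Y_v\mid v\in A\wedge E_b]\,\Pr[E_b\mid v\in A]\le\E[Y_v\mid E_b]\,\Pr[E_b]$, which requires some care). The trade-off is that your argument is not purely a consequence of the stated concentration hypothesis --- it reaches back to the variance-bounding property directly --- whereas the paper's version treats the concentration bound as a black box, which is a more modular phrasing. Both yield the claimed inequality for $\epsilon$ sufficiently small.
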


\newcommand{\NOT}[1]{\ensuremath{\mathrlap{#1}\phantom{#1}}}

\begin{proof}
    Let us define the event $E_b$ to be when $Y_v$ is at least $\eta$ away from its expected value. Formally we have $E_b = \mathds{1}_{ \{ |Y_v - \E[Y_v]| \geq \eta \} }$. From the assumption of the lemma, we know $\Pr[E_b] \leq \beta.$
    
    First, we prove the lower-bound for $\E[Y_v | v \in A]$. We know that:
    \begin{align*}
       \E[Y_v | v \in A] & = \E[Y_v | v \in A \And E_b] \cdot \Pr[E_b | v \in A] + \E[Y_v | v \in A \And \overline{E_b}] \cdot \Pr[\overline{E_b} | v \in A] \\ 
       & \geq  \E[Y_v | v \in A \And \overline{E_b}] \cdot \Pr[\overline{E_b} | v \in A] \\
       & \geq (\E[Y_v] - \eta) \cdot \frac{\Pr[v \in A \cap \overline{E_b}]}{\Pr[v \in A]} \\
       &\geq (\E[Y_v] - \eta) \cdot \frac{\Pr[v \in A] - \Pr[E_b]}{\Pr[v \in A]} \\
        &\geq (\E[Y_v] - \eta) \cdot (1 - \frac{\beta}{\Pr[v \in A]}) \\
        & \geq (\E[Y_v] - \eta) \cdot (1 - \frac{\beta}{\delta}). 
    \end{align*}
      Observation~\ref{claim:expY} states  $\E[Y_v] \geq \frac{1}{\delta}$. Thus, expanding the terms in the last inequality, we get
    \begin{align*}
       \E[Y_v | v \in A] & \geq (\E[Y_v] - \eta) \cdot (1 - \frac{\beta}{\delta}) \\
        & \geq \E[Y_v] - \eta - \frac{\E[Y_v] \cdot \beta}{\delta} \\
        & \geq \E[Y_v] - \eta - \frac{\beta}{\delta^2}.
    \end{align*}
    By picking  $\epsilon\leq \delta^2$ in Table~\ref{table:values}, we get  $\beta \leq \eta \cdot \delta^2$ and 
    $$\E[Y_v | v \in A] \geq \E[Y_v] - 2 \eta.$$
    Now we are ready to prove the desired upper bound. We will again  use the following fact.
    \begin{equation}\label{eq:lem63main}
    \E[Y_v | v \in A]  = \E[Y_v | v \in A \And E_b] \cdot \Pr[E_b | v \in A] + \E[Y_v | v \in A \And \overline{E_b}] \cdot \Pr[\overline{E_b} | v \in A]. 
    \end{equation}
    Note that the second term is upper bounded by $\E[Y_v] + \eta$ as we have $\Pr[\overline{E_b} | v \in A] \leq 1$ and when $E_b$ is not true $Y_v$'s value is always at most $\E[Y_v] + \eta$. To bound the first term we have:
    \begin{align*}
       \E[Y_v | v \in A \And E_b] \cdot \Pr[E_b | v \in A] & \leq
       \E[Y_v | E_b] \cdot \Pr[E_b] 
       \\ & = \E[Y_v] - \E[Y_v | \overline{E_b}] \cdot \Pr[\overline{E_b}]
       \\ & \leq \E[Y_v] - (\E[Y_v] - \eta) \leq \eta
    \end{align*}
    In the last line, we're using the fact that when $\overline{E_b}$ is false, the value of $Y_v$ is at least $\E[Y_v] - \eta$.
    Plugging this in  \eqref{eq:lem63main}, we get
    $$
     \E[Y_v | v \in A] \leq \E[Y_v] + 2 \eta,
    $$
    which completes the proof of this claim.
\end{proof}
Now using the claims we proved in this section, we prove Lemma~\ref{lem:yvclose1}.

\restatelem{\ref{lem:yvclose1}}{\lemyvaboutone{}}

\begin{proof}    
    
    Due to Lemma~\ref{lem:boundedvary} we have
    $$\Pr\Big[|Y_v - \mathbb{E}[Y_v]| \geq \eta \Big] \leq \beta$$
    Furthermore Claim~\ref{claim:kjnrrfnj3r} gives us  $$\E[Y_v|v\in A] = \frac{x_v}{\Pr[v\in  A]}$$
    We begin by proving
    \begin{equation}
     \Pr\left[\left |Y_v-\frac{x_v}{\Pr[v\in  A]}\right |\geq 3\eta\right]\leq \beta
 \end{equation}
 where $x_v = \sum_{e \ni v, e \in N} x_e$. 
 
 This is because for $Y_v$ to have a distance of $3\eta$ from 
 $\frac{x_v}{\Pr[v\in  A]} = \E[Y_v | v \in A]$  has to have a distance of at least $\eta$ from $\E[Y_v]$. This is due to Claim~\ref{claim:YVconditionAdiff} which states that $\E[Y_v | v \in A]$ is between $\E[Y_v] - 2 \eta$ and $\E[Y_v] + 2\eta$.
 Now, since distance of $Y_v$ to $\E[Y_v]$ is at least $\eta$ due to Lemma~\ref{lem:boundedvary}, this happens with probability at most $\beta$.
 
 To complete the proof we need to show $\frac{x_v}{\Pr[v\in  A]}\leq 1$. Recall that we have defined $M_{\mc{O}} = \MM(\mc{H}\cup \mc{N^\star})\cap \mc{H}$, where $\mc{H} = \mc{G}\cap H$ is the actual realization of all the crucial edges, and $\mc{N^\star}$ is a random hallucination of the non-crucial edges containing each edge independently with probability $p_e$. This implies that $\mc{H}\cup \mc{N^\star}$ comes from the same distribution as $\mc{G}$ and as result $M_{\mc{O}}$ is drawn from the same distribution as $\opt$. For any crucial edge $e\in C$ this gives us $\Pr[e\in M_{\mc{O}}]= \Pr[e\in \opt \cap C]$. Summing over crucial edges of a vertex $v$ we get
 \begin{equation}\label{eq:101}
 \Pr[v \in M_{\mc{O}}] = \Pr[v \in \opt \cap C] 
 \end{equation}
 Also based on the definition of $x_v$ we know that 
 \begin{equation}\label{eq:102}
   x_v = \Pr[v \in \opt \cap N]  
 \end{equation}
 Putting things together from \eqref{eq:101} and \eqref{eq:102} we get
  \begin{equation}
   x_v + \Pr[v \in M_{\mc{O}}] = \Pr[v \in \opt \cap N] + \Pr[v \in \opt \cap C] = \Pr[v \in \opt] \leq 1
 \end{equation}
 which means $x_v \leq 1 - \Pr[v \in M_{\mc{O}}] = \Pr[v \notin M_{\mc{O}}] $. Applying property~\ref{viprop:Asideprob} from \cref{def:variance-bounding} which states $\Pr[v\in A] \geq \Pr[v \notin M_{\mc{O}}]$ we get
 $$\frac{x_v}{\Pr[v \in A]} \leq \frac{\Pr[v \notin M_{\mc{O}}]}{\Pr[v\in A]} \leq 1.$$
 Therefore the proof of the lemma is complete. 
\end{proof}
\section{Deferred Proofs}\label{section:proofs}

\begin{claim}
    \label{claim:bothalive} Let $A$ be the subset returned by Algorithm~\ref{alg:Fu-modified}.  For any two vertices $u$ and $v$ that do not share an edge in $H$ we have $\Pr[\{u, v\}\subset A]>\delta$ for a fixed constant $\delta > 0$.
\end{claim}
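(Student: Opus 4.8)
The plan is to describe $A$ through the influential random variables $X_1,\dots,X_n$ of Definition~\ref{def:randomvaljj}, which by Claim~\ref{obs:Fu_properties} are independent once the permutation $\pi$ is fixed. For a fixed $\pi$, a vertex $v$ lands in $A$ exactly when it does not activate a backward edge at its own arrival and no later‑arriving neighbour activates the edge to it; i.e. $v\in A$ iff $X_v=\emptyset$ and $X_w\neq v$ for every $w$ with $(v,w)\in E$ and $\pi_w>\pi_v$. By Claim~\ref{obs:activelessthanx_e} and its proof, $\Pr[X_w=v\mid\pi]=g(y_{(v,w)})$ and $\Pr[X_v=\emptyset\mid\pi]=1-\sum_{w:\,\pi_w<\pi_v}g(y_{(v,w)})$, where $y_e=\Pr[e\in M_{\mc{O}}]$ is a fractional matching; in particular $g(y_e)\le g(1)=\tfrac35$ for all $e$, $\sum_{e\ni v}g(y_e)\le\sum_{e\ni v}y_e\le 1$, and for any common neighbour $w$ of $u,v$ we have $g(y_{(u,w)})+g(y_{(v,w)})\le 2g(\tfrac12)=\tfrac34$ by concavity of $g$.

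First I would fix $\pi$ and write $\Pr[\{u,v\}\subset A\mid\pi]$ as a product over the independent events above. Since $u,v$ are non‑adjacent, $\mathds{1}_{u\in A}$ and $\mathds{1}_{v\in A}$ are determined by disjoint sets of $X$‑variables except for the variables $X_w$ of common neighbours $w$ arriving after both, for which the joint constraint is $X_w\notin\{u,v\}$, of probability $1-g(y_{(u,w)})-g(y_{(v,w)})$. Using $1-a-b\ge(1-a)(1-b)(1-4ab)$ for $a+b\le\tfrac34$, then $\prod_i(1-4a_ib_i)\ge e^{-O(\sum_i a_ib_i)}$ (valid as each $4a_ib_i\le 9/16$), together with $\sum_w g(y_{(u,w)})g(y_{(v,w)})\le(\max_w g(y_{(u,w)}))\sum_w g(y_{(v,w)})\le 1$, I would pull all common‑neighbour cross terms out as one absolute constant, leaving
$$\Pr[\{u,v\}\subset A\mid\pi]\ \ge\ \Omega(1)\cdot\Phi_u(\pi)\,\Phi_v(\pi),\qquad \Phi_v(\pi):=\Bigl(1-\!\!\sum_{w:\,\pi_w<\pi_v}\!\!g(y_{(v,w)})\Bigr)\!\!\prod_{w:\,\pi_w>\pi_v}\!\!\bigl(1-g(y_{(v,w)})\bigr)\ \ge 0,$$
where in fact $\Phi_v(\pi)=\Pr[v\in A\mid\pi]$, and $\Phi_u$ likewise.

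Next I would average over $\pi$. Assign each vertex an independent uniform $[0,1]$ label and condition on the labels $p$ of $u$ and $q$ of $v$; then every other vertex precedes $u$ (resp. $v$) independently with probability $p$ (resp. $q$). Given $U_v=q$, the function $\Phi_v$ is monotone nondecreasing in each remaining label $U_w$ with $(v,w)\in E$ (moving $w$ across $v$'s threshold only multiplies $\Phi_v$ by a factor $\ge1$), and similarly for $\Phi_u$ given $U_u=p$; hence by the standard correlation inequality for independent variables, $\E[\Phi_u\Phi_v\mid U_u=p,U_v=q]\ge\E[\Phi_u\mid U_u=p]\,\E[\Phi_v\mid U_v=q]$ (an equality when $u,v$ share no neighbour). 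Each single‑vertex factor has the closed form
$$\E[\Phi_v\mid U_v=q]=\prod_{w:(v,w)\in E}\bigl(1-g(y_{(v,w)})(1-q)\bigr)\Bigl(1-q\!\!\sum_{w:(v,w)\in E}\!\!\tfrac{g(y_{(v,w)})}{1-g(y_{(v,w)})(1-q)}\Bigr),$$
which, using $g(y_e)\le\tfrac35$ and $\sum_{e\ni v}g(y_e)\le1$, is at least $q\cdot\tfrac{2(1-q)}{2+3q}$. Integrating $p,q$ over $[0,1]$ then gives $\Pr[\{u,v\}\subset A]\ge\Omega(1)\bigl(\int_0^1 q\,\tfrac{2(1-q)}{2+3q}\,dq\bigr)^2>0$, an absolute constant $\delta$; since property~\ref{viprop:independence} only requires \emph{some} positive constant (and the final approximation ratio does not depend on $\delta$), this finishes the proof.

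The step I expect to be the real obstacle is the second one: given $\pi$, the two survival events are genuinely sub‑multiplicative, since a late‑arriving common neighbour activates \emph{at most one} edge and so cannot ``miss'' both $u$ and $v$ as freely as two independent coins would; one therefore cannot bound $\Pr[\{u,v\}\subset A\mid\pi]$ below by $\Pr[u\in A\mid\pi]\Pr[v\in A\mid\pi]$, and the work is in showing the accumulated common‑neighbour correction costs only an absolute constant factor, using nothing but the fact that the $\bm y$‑mass at each vertex is at most $1$ (and $g\le\tfrac35$). Once that discount is secured, the characterization via the $X$'s, the monotonicity for the correlation inequality, and the single‑vertex computation are all routine.
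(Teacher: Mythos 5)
Your proof is essentially correct and takes a genuinely different route from the paper's. The paper's argument is combinatorial: fixing the (random) permutation, it splits the other vertices into those arriving before $u$, between $u$ and $v$, and after $v$, then uses a symmetry/averaging argument to show $u$ and $v$ both fail to activate a backward edge at their own arrivals with probability at least $1/36$, and separately an optimization over the fractional matching $\bm{y}$ to show no later vertex activates an edge to either of them with probability at least $(1-2g(1/2))^2 = 1/16$, giving $\delta \geq 1/576$. You instead work directly with the independent influential variables $X_w$ of Definition~\ref{def:randomvaljj}: you write $\Pr[\{u,v\}\subset A\mid\pi]$ as $\Phi_u(\pi)\Phi_v(\pi)$ times a correction coming only from common neighbours $w$ with $\pi_w>\max(\pi_u,\pi_v)$, where each such $w$ contributes $1-a-b$ jointly rather than $(1-a)(1-b)$. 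Your factorisation is correct (the two survival events share only those $X_w$, since $u,v$ are non-adjacent), the inequality $1-a-b\geq(1-a)(1-b)(1-4ab)$ for $a+b\leq 3/4$ checks out, the sum $\sum_w g(y_{(u,w)})g(y_{(v,w)})\leq 3/5$ is a clean consequence of $g(y)\leq y$ and the degree bounds of the fractional matching, and the subsequent de-conditioning via uniform labels, the monotonicity of $\Phi_u,\Phi_v$ in the labels (moving a neighbour across the threshold multiplies $\Phi_v$ by $\tfrac{(1-\Sigma)(1-g_w)}{1-\Sigma-g_w}\geq 1$), the Harris/FKG step, and the closed form for $\E[\Phi_v\mid U_v=q]$ are all sound. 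The one step you leave unverified—the explicit lower bound $\E[\Phi_v\mid U_v=q]\geq q\cdot\tfrac{2(1-q)}{2+3q}$—does not affect correctness, since any explicit positive lower bound with positive integral suffices, and such a bound exists (e.g.\ the worst case at fixed $q$ is easily seen to be bounded away from $0$ for $q<1$). What your route buys over the paper's is rigour in the averaging step (the paper's ``by symmetry with probability $1/3$'' is informal since $\pi$ was declared fixed) and a clean separation between the exactly computable marginal survival probability and the correlation correction, which isolates where the common-neighbour negative correlation costs you. It is slightly more machinery for roughly the same (or even a slightly worse) numerical constant, but since $\delta$ only needs to be a positive universal constant, that is immaterial.
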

\begin{proof}
Let $\pi$ be an arbitrary permutation used in Algorithm~\ref{alg:Fu-modified}. We use $\invperm$ for the inverse of the permutation. This means $\invperm (a) = b$ iff $\pi_b = a$. Let us assume $\pi_u < \pi_v$ without loss of generality. Take the three sub-arrays formed on $\invperm$ with $u$ and $v$ on the splits. Formally: $A = [\invperm(1), ..., \invperm(\pi_u - 1)]$, $B = [\invperm(\pi_u + 1), ..., \invperm(\pi_v - 1)]$, $C = [\invperm(\pi_v + 1), ..., \invperm(n)]$.
We claim that with probability, at least $1/36$, vertex $u$, and $v$ will not have an active edge in Algorithm~\ref{alg:Fu-modified} when they arrive. We call this event $E_1$ for future reference. 

Take the sum of $y_e$'s of all of the sequences $A, B, C$ to the vertex $v$ and call them $w_{v,A}, w_{v, B}, w_{v, C}$. Since we know $w_{v,A} + w_{v, B} + w_{v, C} \leq 1$, there exists two of them where their sum is at most $\frac{2}{3}$. Due to symmetry with probability $\frac{1}{3}$ this will be $A$ and $B$. Now define $w_{u, A}, w_{u, B}$ to be sum of $y_e$'s from $u$ to $A, B$. We know that one of $w_{u, A}$ or $w_{u, B}$ is less than $\frac{1}{2}$ since their sum is less than 1. Due to symmetry, with probability $\frac{1}{2}$ we have $w_{u, A} \leq \frac{1}{2}$. Let us call event $E3$ to be $w_A + w_B \leq 2/3$ and event $E4$ to be $w_A \leq 1/2$.
\begin{align*}
       \Pr[E_1] & \geq \Pr[E3] \cdot \Pr[E4]\cdot \Pr[\text{$u$ not active upon arrival} | E4] \cdot \Pr[\text{$v$ not active upon arrival} | E3] \\ 
        & \geq \frac{1}{3} \cdot \frac{1}{2} \cdot \frac{1}{2} \cdot \frac{1}{3} \\
        & \geq \frac{1}{36}.
    \end{align*}

The second to last step is due to Claim~\ref{obs:activelessthanx_e}. This claim implies that the probability that $u$ is not active given $w_{u, A} \leq \frac{1}{2}$ is at least $\frac{1}{2}$. Similarly, the probability that $v$ is not active given $w_{v, A} + w_{v, B} \leq \frac{2}{3}$ is at least $\frac{1}{3}$.

Now consider all vertices $w$ that come after $u$ or $v$ and may have an active edge to them, hence making $u$ or $v$ not being in $A$. For each vertex $w\notin \{u, v\}$, we have  one of these cases. Either $\pi_w < \pi_u, \pi_v$ or $\pi_u < \pi_w < \pi_v$ or $\pi_w > \pi_u, \pi_v$. Due to Claim~\ref{obs:activelessthanx_e} we can see that probability that $u$ or $v$ have an active edge to $w$ is maximized when $\pi_w > \pi_u, \pi_v$ which will make the probability of an active edge to be $g(y_{w, v}) + g(y_{w, u})$. Let $X_w$ be the indicator random variable that shows whether $w$ has an active edge to $u$ or $v$,  hence removing one of them from $A$. We know that $X_w$'s are independent of each other because for each vertex $w$ it decides its active edge only based on its edges to the vertices prior to it in $\pi$ (similar to the argument in Observation~\ref{obs:Fu_properties}). To get a lower-bound for $u$ and $v$ not being in an active edge of all $w$'s with $\pi_w > \pi_v$ (event $E_2$) we get: 

\begin{equation}\label{eq:E2lowerbound}
\Pr[E_2] \geq \prod_{w \neq u, v} \Pr[\text{none of $(w,v)$ or $(w, u)$ become active}] \geq 
\prod_{w \neq u, v} 1 - g(y_{w, u}) - g(y_{w, v})
\end{equation}

The last step is due to Claim~\ref{obs:activelessthanx_e}. To give a lower-bound for $\Pr[E_2]$, let us first see what constraints we have for this optimization problem. For all $w \neq v, u$ we have $y_{w, v} + y_{w, u} \leq 1$. Also, we have $\sum_w y_{v, w} \leq 1$ and $\sum_w y_{u, w} \leq 1$. To simplify our optimization problem, we will combine the two last constraints. That is, the solution to the following optimization problem is a lower-bound for $\Pr[E_2]$:

\begin{align*}
\text{Minimize} \quad &\prod_{w\neq u,v} 1-g(y_{w,u})-g(y_{w,v})&&\\
\text{s.t.} \quad &\sum_{w\neq u,v} y_{w,u}+y_{w,v}\leq 2 &&\\
&y_{w,u}+y_{w,v}\leq 1 &&\forall w\in V\setminus\{u,v\}\\
&y_{w,w'}\geq 0 &&\forall w\in V\setminus\{u,v\} \text{ and } w'\in \{u,v\}
\end{align*}

We claim that the optimal solution for this optimization problem is when there are exactly two vertices $w \neq u, v$ with both  $y_{w, u}=0.5, y_{w, u}=0.5$ and the rest have $y_{w, u}=0, y_{w, u}=0.$ 

We will first prove that for any $w\neq u, v$, in the optimal solution, we have $y_{w, u} = y_{w, v}$ (i.e. this will minimize $1 - g(y_{w, u}) - g(y_{w, v})$). To see this, we claim that if we have fixed $t$ to be $y_{w, u} + y_{w, v}$ the maximum of $g(y_{w, u}) + g(y_{w, v})$ is when  $y_{w, u} = y_{w, v} = \frac{t}{2}$ or $\max_x (g(x) + g(t - x)) = 2g(\frac{t}{2})$.
Suppose that the maximum is in a different point $g(x_1) + g(x_2)$ with $x_1 = \frac{t}{2} - \delta$ and $x_2 = \frac{t}{2} + \delta$ for $\delta > 0$. We can write:

$$g(\frac{t}{2}) = g(x_1) + \int_{x_1}^{\frac{t}{2}} g'(x) dx.$$ 
We also have $$g(x_2) = g(\frac{t}{2}) + \int_{\frac{t}{2}}^{x_2} g'(x) dx.$$ 
Because $g$ has a decreasing derivative on $[0-1]$, 
$\int_{\frac{t}{2}}^{x_2} g'(x) dx < \int_{x_1}^{\frac{t}{2}} g'(x) dx$. Combining all we can see that $2g(\frac{t}{2}) > g(x_1) + g(x_2)$ so we arrive at a contradiction and $\max_x (g(x) + g(t - x)) = 2g(\frac{t}{2})$. As a result, we can safely assume  $y_{w, u} = y_{w, v}$ for all $w\neq u,v$. 
Since we also know $y_{w,u} +y_{w, v}  \leq 1$, this gives us  $$\Pr[X_w = 0] \geq 1 - g(y_{u, w}) - g(y_{v, w}) \geq 1 - 2 \cdot g(0.5) = 0.25$$ 

Now, subject to these constraints, we want to give a lower bound for the probability that for all $w$ we have $X_w = 0$. We can do the following \emph{swap} if we have $w_1, w_2$ such that $0 < y_{u, w_1} < y_{u, w_2} < 0.5$, then by both decreasing $y_{u, w_1}, y_{v, w_1}$ by a small $\Delta>0$ and increasing $y_{u, w_2}, y_{v, w_2}$ by $\Delta$ all the constraints would still be satisfied, and the product of the terms in \Cref{eq:E2lowerbound} would be decreased. 
To see this, let $h(x) = 1 - 2g(x)$. We want to minimize $h(\alpha) h(\beta)$ subject to $\alpha + \beta = s$. To find the minimum point, let us calculate the derivative of $h(x) h(s - x)$.
We know that $h(x) = 1 - 2g(x) = \frac{3 - 4x}{3 + 2x}$. Hence we get 

$$\frac{\partial}{\partial x} h(x) h(s - x) = \frac{\partial}{\partial x} \frac{(3 - 4x) \cdot (3 - 4(s - x))}
{(3 + 2x)\cdot(3 + 2(s - x))} = \frac{36 (4s + 3)(s - 2x)}{(2x + 3)^2 (2s-2x+3)^2}$$
The sign of the terms except for $(s - 2x)$ in the nominator is positive. The sign of the derivative at the point $x = s/2$ changes. For $x < s/2$, it is positive; for $x = s/2$, it is zero; for $x > s/2$, it is negative. This means that the maximum of $h(x)h(s-x)$ is at $x = \frac{s}{2}$, and the minimum is when $x$ is towards $0$ or $s$. This means that the swap described above will indeed decrease the product of terms in equation \eqref{eq:E2lowerbound}.

Therefore, the lower bound for 
$\Pr[E_2]$ is when we have as many pairs of $y_{u, w}, y_{v, w}$ equal to $\frac{1}{2}$. Due to constraint $\sum_{w} y_{u, w} + y_{v, w} \leq 2$, we have at most two vertices $w$ such that $y_{u, w} = y_{v, w} = \frac{1}{2}$. As a result we get
\begin{equation}\label{eq:E2lowerboundfinal}
\Pr[E_2] \geq (1 - 2g(0.5))^2 \geq 0.25^2
\end{equation}

Note that we have $\{u,v\}\subset A$ iff $E_1$ and $E_2$ happen simultaneously. Therefore we get

\begin{equation}\label{eq:E2lowerboundfinal}
\Pr[\{u,v\}\subset A] \geq \Pr[\text{$E_1$ and $E_2$}] \geq \Pr[\text{$E_1$}] \cdot \Pr[\text{$E_2 | E_1$}] \geq \frac{1}{36} \cdot 0.25^2
\end{equation}

This is because our lower bound on $E_2$ is true for any permutation on the arrival of vertices $\pi$ and $\Pr[E2|E1]$ is at least $0.25^2$.
\end{proof}

\restateclaim{\ref{claim:allcrucial}}{\claimallcrucial{}}
\begin{proof}
  An edge $e$  will be in $Q$ if it is in at least one of the matchings $M_1, \dots M_t$. 
  Since these matchings come from  the same distribution as $\opt$ and are independent from each other, we have  
  $$\Pr[e\in Q] = 1 - (1 - x_e)^t \geq 1-e^{-tx_e}\geq 1-e^{-t\tau} \geq 1-1/t\tau\geq 1-\epsilon,$$ which completes the proof. 
\end{proof}

\restateclaim{\ref{claim:lowerbound_einQ}}{\claimlowerboundeinQ{}}
\begin{proof}
An edge $e$  will be in $Q$ if it is in at least one of the matchings $M_1, \dots M_t$. 
  Note that the matchings come from  the same distribution as $\opt$. Therefore, for any  $i, j\in [t]$ we have $\Pr[e\in M_i] = x_e$ and  $\Pr[e\in M_i \text{ and } e\in M_j] = x_e^2$.   Let us first find a lower bound for $\Pr[e\in Q]$ when $tx_e\leq2/3$. Due to the inclusion-exclusion principle, we have 
    $$\Pr[e\in Q] \geq  tx_e - \binom{t}{2} x_e^2 \geq tx_e(1-tx_e) \stackrel{tx_e\leq2/3}{\geq}  tx_e/3 .$$ 
 On the other hand, if  $tx_e>2/3$ we  utilize the following lower-bound which was also used in proving Lemma~\ref{claim:allcrucial}. $$\Pr[e\in Q]= 1-(1-x_e)^t \geq 1-e^{-tx_e} \stackrel{tx_e>2/3}{\geq} 0.48  >  1/3.$$ This completes the proof concluding that $$\Pr[e\in Q]\geq \min(1/3, tx_e/3)$$ holds for any edge $e\in E$.
\end{proof}

\restateclaim{\ref{claim:geupperbound}}{\claimgeupperbound{}}

\begin{proof}
\begin{align*}\label{eq:num14}
  g_e &= \frac{x_e}{\Pr[e \in \mathcal{Q}]\cdot\Pr[\{u, v\} \subset A]}  \\
  &\leq \frac{\tau}{\Pr[e \in \mathcal{Q}] \cdot \Pr[\{u, v\} \subset A] } 
  &&
    \text{(due to $x_e \leq \tau$ since $e$ is non-crucial)}
  \\
  &\leq \frac{\tau}{\Pr[e \in \mathcal{Q}] \cdot \delta } 
  &&
    \text{(due to \cref{claim:bothalive})}
  \\
  &\leq \max(\frac{3 \tau}{\delta}, \frac{1}{\delta t}) 
  &&
  \text{(due to $\Pr[e \in \mathcal{Q}] \geq \min(1/3, tx_e/3)$ (see \cref{claim:lowerbound_einQ}) )
  }
  \\
  &\leq \max(\frac{3 \tau}{\delta}, \frac{\epsilon \tau}{\delta}) \\
  &\leq \frac{\epsilon \tau}{\delta} \\
  &\leq \frac{\epsilon \cdot 20p\epsilon^5\delta^2}{\delta} \\
  &\leq 20\epsilon^6 \delta\\
  &\leq \epsilon^3 
  &&
  \text{(due to choosing an $\epsilon$ sufficiently small s.t. $\epsilon^3\leq 1/(20\delta).$ )}
\end{align*}
This completes the proof of our claim.
\end{proof}

\restatelem{\ref{lemma:integrhjk}}{\lemmaintegrhjk{}}

\begin{proof} 
The main proof ingredient is due to Edmond's celebrated theorem \cite{edmonds1965maximum,schrijver2003combinatorial} and is as follows.

Given a parameter $\xi\in (0,1)$, if for all vertex subsets $U\subset V$ with $3\leq |U|\leq 1/\xi$ the total fractional matching on $U$ is at most $\lfloor\frac{|U|}{2}\rfloor$ then it is possible to round $\bm{f}$ to an integral matching of weight at least  $(1-\xi)\bm{f}\bm{w}.$ These are called the {\em blossom inequalities}. 
That is, for any $U\subset V$ with $|U|\leq 1/\xi$ we should have
$\sum_{\{u, v\}\subset U} f_{(u,v)}\leq \lfloor\frac{|U|}{2}\rfloor.$ For any $|U|\leq 1/\epsilon^{1.5}$ we have  
\begin{align*}
    \sum_{\{u, v\}\subset U} f_{(u,v)}&\leq |U|^2 g_{(u,v)} \leq \left(\frac{1}{\epsilon^{1.5}}\right)^2\epsilon^3\leq 1
    \leq \left\lfloor
    \frac{|U|}{2}\right\rfloor.
\end{align*}
As a result, the above inequalities are satisfied for $\xi=\epsilon^{1.5}$ hence $\bm{f}$ can be rounded to an integral matching with weight at least $(1-\epsilon^{1.5})\bm{f}\bm{w}$. By picking a sufficiently small $\epsilon\leq 1/4$, we get $(1-\epsilon^{1.5})\leq  (1-\epsilon/2)$ completing the proof of this lemma.
\end{proof}

\appendix

\bibliographystyle{alpha}
\bibliography{references}

\end{document}